\documentclass[11pt]{article}
\pdfoutput=1
\usepackage[margin=1in]{geometry}

\usepackage[hyphens]{url}
\usepackage[
  bookmarks=true,
  bookmarksnumbered=true,
  bookmarksopen=true,
  pdfborder={0 0 0},
  breaklinks=true,
  colorlinks=true,
  allcolors=teal
]{hyperref}

\usepackage[utf8]{inputenc}
\usepackage{titlesec}
\usepackage{comment}
\usepackage{amsmath}
\usepackage{amsfonts}
\usepackage{amsthm}
\usepackage{subcaption}
\usepackage[round]{natbib}

\newtheorem{theorem}{Theorem}[section]
\newtheorem{corollary}[theorem]{Corollary}

\newtheorem{lemma}[theorem]{Lemma}
\newtheorem{proposition}[theorem]{Proposition}

\theoremstyle{definition}
\newtheorem{definition}[theorem]{Definition}

\usepackage{booktabs} 
\usepackage[ruled,linesnumbered]{algorithm2e} 

\SetAlFnt{\small}
\SetAlCapFnt{\small}
\SetAlCapNameFnt{\small}
\SetAlCapHSkip{0pt}
\IncMargin{-\parindent}
\usepackage[capitalize]{cleveref} 
\usepackage[most]{tcolorbox}
\tcbuselibrary{listings}
\usepackage{placeins}
\usepackage{subcaption}
\usepackage{tabularx}
\usepackage{threeparttable}
\usepackage{aliascnt}
\usepackage{mathtools}
\usepackage{soul}
\setstcolor{red}
\newtcblisting{promptbox}[1]{
    colback=gray!5,
    colframe=gray!50,
    fonttitle=\bfseries\sffamily,
    title=#1,
    breakable,
    listing only, 
    listing options={
        style=tcblatex,
        basicstyle=\small\ttfamily,
        breaklines=true,
        columns=fullflexible,
        literate={ \{ }{{{\{}}}1 { \} }{{{\}}}}1 
    },
    before skip=10pt plus 2pt,
    after skip=10pt plus 2pt,
    arc=2pt
}

\newcommand{\PVC}{\mathit{PVC}}

\crefname{theorem}{Theorem}{Theorems}
\crefname{definition}{Definition}{Definitions}
\crefname{corollary}{Corollary}{Corollaries}

\newaliascnt{condition}{theorem}
\newtheorem{condition}[condition]{Condition}
\aliascntresetthe{condition}
\crefname{condition}{Condition}{Conditions}
\Crefname{condition}{Condition}{Conditions}

\allowdisplaybreaks

\usepackage{titlesec}

\titleformat{\paragraph}[runin]
  {\normalfont\normalsize\bfseries} 
  {\theparagraph}                   
  {0.5em}                           
  {}                                
  [.]                               

\titlespacing*{\paragraph}
  {0pt}                             
  {3.25ex plus 1ex minus 0.2ex}     
  {0.33em}                          
  
\title{Finding Common Ground in a Sea of Alternatives}

\author{
    Jay Chooi$^1$,
    Paul G\"olz$^2$,
    Ariel D. Procaccia$^1$,
	Benjamin Schiffer$^3$, and
	Shirley Zhang$^1$ \\[.5em]
    \normalsize$^1$ Paulson School of Engineering and Applied Sciences, Harvard University \\
    \normalsize$^2$ School of Operations Research and Information Engineering, Cornell University \\
    \normalsize$^3$ Department of Statistics, Harvard University \\[.3em]
    \normalsize \texttt{jeqin\_chooi@college.harvard.edu}, \texttt{paulgoelz@cornell.edu}, \texttt{arielpro@seas.harvard.edu}, \\ \normalsize  \texttt{bschiffer1@g.harvard.edu}, \texttt{szhang2@g.harvard.edu}
}
\date{}
\begin{document}

\maketitle

\begin{abstract}

We study the problem of selecting a statement that finds common ground across diverse population preferences. Generative AI is uniquely suited for this task because it can access a practically infinite set of statements, but AI systems like the Habermas machine~\citep{TBJ+24} leave the choice of generated statement to a voting rule. What it means for this rule to find common ground, however, is not well-defined.

In this work, we propose a formal model for finding common ground in the infinite alternative setting based on the \emph{proportional veto core} from social choice. To provide guarantees relative to these infinitely many alternatives and a large population, we wish to satisfy a notion of proportional veto core using only query access to the unknown distribution of alternatives and voters.

We design an efficient sampling-based algorithm that returns an alternative in the (approximate) proportional veto core with high probability and prove matching lower bounds, which show that no algorithm can do the same using fewer queries. On a synthetic dataset of preferences over text, we confirm the effectiveness of our sampling-based algorithm and compare other social choice methods as well as LLM-based methods in terms of how reliably they produce statements in the proportional veto core.

\end{abstract}

\section{Introduction}
In many countries, polarization has reached unhealthy levels~\citep{Silver22},
dividing politics into “us” versus “them”.
This division exacts a heavy toll: it reduces trust between people, entails political gridlock and democratic erosion, and increases political violence~\citep{Lee22,Piazza23,MRS18,Orhan22}.
To stem the tide of polarization, recent initiatives encourage people to discuss politics across divides and find points of agreement.
The nonprofit \emph{Braver Angels}\footnote{\url{https://braverangels.org/}}, for example, brings groups of Democrats and Republicans face to face, and \emph{citizens' assemblies}~\citep{DBC+19,OECD20} convene a representative sample of constituents to design common policy proposals.
Major obstacles, however, stand in the way of scaling these efforts to a society-wide level, including the cost of trained facilitators and cognitive limits that prevent high-quality deliberation in large groups~\citep{Fishkin09}.
The recent success of \emph{large language models (LLMs)} in processing natural language text shows promise in overcoming these obstacles~\citep{Landemore24,MC25,SVD+23,dembrane2023democratic,KSI+23}.

Perhaps the most prominent AI tool with this goal is the \emph{Habermas machine}, presented by \citet{TBJ+24} in \emph{Science}.
To help a group deliberate on a topic, the Habermas machine transforms opinion statements from the group members into possible group statements, which are pieces of text representing the common ground of opinions.
\citet{TBJ+24} test the Habermas machine in the context of a citizens’ assembly and find that groups interacting with the Habermas machine converge in their beliefs towards a common position, decreasing polarization.

Besides LLM components (for the generation of text and prediction of agreement), the Habermas machine also crucially relies on \emph{voting} to select among generated proposals.
Voting happens on two levels:
first, to generate alternative statements, 16 statements are sampled from a fine-tuned LLM, ranked by a proxy model for each group member, and only the winner according to the Schulze voting rule~\citep{Schulze11} is kept;
second, four alternative statements generated in this fashion are ranked by the actual group members, and the winning statement is again selected by Schulze.

For finding common ground, Schulze may not be the right tool for selecting among statements because this voting rule is inherently majoritarian: it readily overrides the preferences of a minority to satisfy even a slight majority, as evidenced by axioms such as Condorcet consistency and the mutual majority criterion.
While \citet{TBJ+24} 
find that minority opinions have an influence on the winning statement,\footnote{Specifically, \citet{TBJ+24} embed the opinions by all group members using a text embedding model, and run a regression to predict the embedding of the winning statement as a mixture of these opinion statements. Group members are partitioned into a ``majority'' and ``minority'' group based on whether they answered ``somewhat agree/agree/strongly agree'' or ``somewhat disagree/disagree/strongly disagree'' in an initial question on the topic. \citet{TBJ+24} find that minority agents tend to have non-negligible weight in this regression.}
this is an empirical finding and only holds on average across their experiments.
To robustly find common ground, we should aim for guarantees on the influence of minorities that hold for every possible group deliberation.

What, then, can social choice offer as an axiom for finding common ground and as a voting rule to be embedded inside the Habermas machine? 
We argue that the \emph{proportional veto core (PVC)} of \citet{Moulin81,Moulin82} is the right notion since it gives explicit outcome guarantees to cohesive voter coalitions of all sizes.
Consider a coalition of, say, 30\% of the participants and some alternative statement $c$.
If the coalition can find a set with $100\% - 30\%=70\%$ of the alternatives that they all rank above $c$, the PVC guarantees that $c$ will not be selected.
In this way, minority groups are guaranteed an influence on the outcome, and they can avoid the statements they disagree with the most, which is important for a claim of common ground.
To pursue this guarantee, the Habermas machine could use a voting rule whose output always lies in the PVC (like Vote by Veto \citep{Moulin81}), one could constrain any voting rule to alternatives in the PVC, or one could evaluate voting rules (classical or LLM-based) in terms of how reliably their output lies in the PVC.

One more challenge of using social choice in settings like the Habermas machine is that the space of alternatives is tremendously large and often infinite. This leads to the natural question:
Are the 16 statements sampled i.i.d.\ from the fine-tuned LLM enough to find common ground, and are the four statements resulting from independent runs of sampling and simulated voting enough?
How can we capture what ``enough'' is?
In this paper,
\begin{quote}
we define proportional veto core guarantees with respect to a very large set of statements and design practical algorithms that satisfy them.
\end{quote}
Our pursuit of guarantees with respect to a large set of textual alternatives is closely aligned with the \emph{generative social choice} framework by \citet{FGP+26}.
Like them, we can only access this space of alternatives indirectly through queries, which in our case means sampling from an unknown distribution over alternatives.

\subsection{Our Contributions}

We start by giving a formal mathematical model of what it means for a statement to find common ground when there are an infinite number of alternatives.
We propose using the \emph{$\epsilon$-proportional veto core ($\epsilon$-PVC)}, which extends the standard PVC to the setting with a distribution over alternatives. We define the \emph{critical $\epsilon$} of an alternative as the smallest $\epsilon$ for which that alternative is in the $\epsilon$-PVC. Therefore, a smaller critical $\epsilon$ indicates that a statement better represents common ground. 

To model information elicitation when there are infinite alternatives, we study a query-based information model with two types of queries. First, we explore \emph{generative} queries, which return alternatives sampled from an unknown distribution of interest over the alternative space, such as the output of an LLM with respect to a given prompt. Second, we explore \emph{discriminative} queries, which return user preferences over alternatives. We focus on two types of discriminative queries: \textit{min-queries} that return a voter's least favorite alternative among a given set and \textit{pairwise queries} that return a voter's preference between two given alternatives.

\subsubsection*{Theoretical Results}
Our theoretical work focuses on providing upper and lower bounds for how many queries are necessary to find an element in the $\epsilon$-PVC. We begin by showing that the $\epsilon$-PVC always has size at least $\epsilon$. We then present a query-based algorithm that with high probability finds an element of the $\epsilon$-PVC using $\tilde{O}(1/\epsilon^2)$ generative queries and min-queries. We prove that no algorithm can always find an element of the $\epsilon$-PVC using fewer than $\Omega(1/\epsilon^2)$ generative queries and min-queries, which shows that our algorithm is worst-case optimal for both types of queries.

Pairwise queries are perhaps the easiest information elicitation method from the voters' perspective. Therefore, we show that our algorithm can be adapted to use pairwise  queries while giving the same theoretical guarantees. The key modification is showing that with $O(nm)$ pairwise queries, we can find an element of the PVC for finite sets of $n$ voters and $m$ alternatives. We provide an efficient method for doing this and further show that there is a matching lower bound for this step; i.e., no algorithm can find an element of the PVC with fewer than $\Omega(nm)$ pairwise queries for $n$ voters and $m$ alternatives. 

\subsubsection*{Experimental Results}
We conclude by empirically evaluating the critical $\epsilon$ achieved by different methods for selecting alternatives on simulated preference data. We obtain diverse preference data sets for a variety of topics by using ChatGPT to generate both alternatives and preference data for different personas. To model a query-based setting, the different voting rules are only given access to a small subset of voters and alternatives and must select an alternative from this subset.

Our first empirical result confirms that the sampling-based approach from the theory section finds alternatives in the $\epsilon$-PVC for small $\epsilon$. We then show that different social choice methods such as Schulze and plurality can select alternatives with large critical $\epsilon$, indicating that these methods may not necessarily choose a good common-ground statement. Finally, we conclude our experiments by studying generative voting rules that use LLMs to generate common-ground statements. We show that these generative voting rules do generate statements with small critical $\epsilon$, which indicates that the notion of PVC aligns well with how an LLM tries to find common ground. We also show that when the LLM is not properly tuned to the population, the generative voting rules give statements with significantly higher critical $\epsilon$ compared to our algorithm.

\subsection{Related Work}
Many recent works have studied how to use AI for finding consensus on a large scale~\citep{KLNP+19, gudino2024large,Landemore24,KSI+23,TBJ+24,BCST+22,BL26}.
The most similar of these methods is the Habermas machine~\citep{TBJ+24}, which we already discussed at length. 
At the end of \cref{sec:model}, we further discuss the clone-proofness axiom satisfied by the Schulze rule, and why our approach to common ground does not require it.

Finding common-ground statements is an important goal of online deliberation platforms like \emph{Polis}~\citep{SBES+21} and \emph{Remesh}\footnote{\url{https://www.remesh.ai/}}, on which users submit statements and rate their agreement with the statements written by others.
Both platforms~\citep{compdem_group_informed_consensus,KSI+23} present user-submitted statements based on a ``bridging-based ranking''~\citep{OT23}, which
measures if a statement has high approval rates in all clusters of users.
Our approach is much more general because we give guarantees to all subsets of voters rather than only some prespecified groups.

The idea of finding common-ground and bridging has also been popular in the context of annotating social media posts \citep{de2025supernotes,wojcik2022birdwatch}. For example, \citet{wojcik2022birdwatch} study a bridging-based algorithm for finding common-ground and selecting notes to append to tweets. Furthermore, \citet{slaughter2025community} show that this approach to finding common ground successfully reduces engagement with and proliferation of misinformation on social media platforms. While these methods for finding common ground have had practical success, they lack formal theoretical guarantees.

The PVC was first studied by \citet{Moulin81,Moulin82} in the traditional social choice setting of perfect information about a finite number of voters and alternatives. In the same setting, \citet{ianovski_computing_2023} give a polynomial time algorithm for computing the PVC and a neutral and anonymous algorithm for finding an element from the PVC.
Ideas from these algorithms inspire some of our algorithmic procedures, as discussed in later sections.
More recently, the PVC has been studied in a variety of contexts including approval ballots \citep{halpern2025proportional}, federated learning \citep{chaudhury2024fair}, and policy aggregation \citep{a2024policy}. The settings studied by \citet{chaudhury2024fair} and \citep{a2024policy} both can have an infinite number of alternatives, which lead these works to notions similar to the $\epsilon$-PVC.

Like us, 
the work on \emph{generative social choice}~\citep{FGP+26,boehmer2025generative} uses generative models for social choice over an infinite space of textual alternatives.
Similar to our setting, they study generative queries that generate alternatives and discriminative queries that access voter preferences.
A major difference, however, is that their goal is to find a set of \emph{several} representative statements, which enables them to give guarantees to minority groups without finding common ground within each statement.
A second major difference is that their space of alternatives lacks our structure of a probability distribution.
As a result, their generative queries (finding a statement maximizing a function of voter preferences) are much more complex and much harder to implement than our query of sampling from a fixed distribution.

Beyond the generative social choice model, there has also been significant work on applying social choice techniques to settings with many or infinite alternatives and limited information in the setting of AI alignment \citep{ge2024axioms,procaccia2025clone,golz2025distortion,conitzer2024social} and AI evaluation \citep{procaccia2025metritocracy,zhang2024}. Due to the size of the alternative space in these settings, these works also focus on choosing subsets of alternatives and voters to give approximate or probabilistic generalization guarantees.

\section{Model}\label{sec:model}

Let $\mathcal{M}$ be a (possibly infinite) set of alternatives and let $\mathcal{N}$ be a finite (but potentially very large) set of voters. Each voter $i \in \mathcal{N}$ has a ranking over all alternatives in $\mathcal{M}$. For alternatives $a,b \in \mathcal{M}$ and voter $i \in \mathcal{N}$, we use $a \succ_i b$ to mean that voter $i$ prefers alternative $a$ to alternative $b$. Importantly, we do not assume perfect information about the preferences of the voters. 

Instead, we assume that we can access information about the alternatives and voters using two types of queries: \textbf{generative queries} and \textbf{discriminative queries}. Generative queries allow us to access the set of alternatives $\mathcal{M}$. Specifically, let $\mathcal{D}$ be a fixed distribution over $\mathcal{M}$. For a set of alternatives $M \subseteq\mathcal{M}$, define $\mu_{\mathcal{D}}(M)$ as the measure of the set $M$ for distribution $\mathcal{D}$. A generative query returns a random sample from the distribution $\mathcal{D}$. Note that no information about the distribution $\mathcal{D}$ or the rankings of the voters is known a priori.

Discriminative queries allow us to gather information about voter preferences. We study two types of discriminative queries. The first type of discriminative query we study is a \emph{min-query}, which takes as input a voter $i \in \mathcal{N}$ and a set of alternatives $X \subseteq \mathcal{M}$ and returns voter $i$'s least favorite alternative in $X$. The second type of discriminative query we study is a \emph{pairwise query}, which takes as input a voter $i \in \mathcal{N}$ and two alternatives $a,b \in \mathcal{M}$, and returns which alternative in $\{a,b\}$ is ranked higher by voter $i$. Our main goal is to use a combination of generative and discriminative queries to efficiently find an alternative in $\mathcal{M}$ that finds common ground for the entire population $\mathcal{N}$.

We argue that the $\epsilon$-Proportional Veto Core ($\epsilon$-PVC) is a mathematically principled way of finding common ground. Informally, an alternative $a$ is blocked by a coalition of voters if all voters in the coalition prefer a sufficiently large set of alternatives to $a$. The $\epsilon$-PVC is simply the set of alternatives that is not blocked by any coalition of voters.
By ruling out alternatives for which 
some sizable minority ranks a large block of options higher, the $\epsilon$-PVC excludes polarizing alternatives.

\begin{definition}\label{def:epsilon_pvc}
An alternative $a \in \mathcal{M}$ is $\epsilon$-\textbf{blocked} by a coalition $T \subseteq\mathcal{N}$ of voters if there exists a blocking set of alternatives $S \subseteq\mathcal{M}$ such that
\begin{itemize}
    \item Each voter $i \in T$ prefers each alternative $b \in S$ to $a$ (i.e. $b \succ_i a$), and
    \item $\mu_{\mathcal{D}}(S) > 1 - \frac{|T|}{n} + \epsilon.$
\end{itemize}
The \textbf{$\epsilon$-Proportional Veto Core ($\epsilon$-PVC)} is the set of all alternatives that are not $\epsilon$-blocked by any coalition of voters.
\end{definition}

When $\mathcal{N}$ and $\mathcal{M}$ are finite, the Proportional Veto Core (PVC)~\citep{Moulin81} refers to the special case when $\epsilon = 0$ and $\mathcal{D}$ is the uniform distribution over all alternatives in $\mathcal{M}$. See Appendix \ref{app:pvc} for more details.

A higher $\epsilon > 0$ makes it harder for coalitions to block alternatives and therefore makes the $\epsilon$-PVC a less demanding notion.
Hence, the $\epsilon$-PVC naturally leads to a metric for how well an alternative $a \in \mathcal{M}$ does at finding common ground for the voters $\mathcal{N}$.
Intuitively, the critical $\epsilon$ for an alternative is the smallest fraction of voters that must be ignored in each coalition so that no coalition will veto that alternative. Therefore, a critical $\epsilon$ closer to $0$ implies that an alternative is better for finding common ground while a larger critical $\epsilon$ implies that an alternative is worse at it.

\begin{definition}\label{def:crit_eps}
    For any alternative $a \in \mathcal{M}$, the \textbf{critical $\epsilon$ }for $a$ is the smallest $\epsilon \ge 0$ such that $a$ is in the $\epsilon$-PVC.
\end{definition}

\subsubsection*{Role of the Statement Distribution}
$\mathcal{D}$ plays a crucial role in our notion of common ground since it determines how many alternatives a coalition $T$ must agree on to block an alternative $a$.
For example, let $\mathcal{D}$ be defined by an LLM, which produces a neutral summary with 99\% probability, but sometimes insults a minority instead.
The $\epsilon$-PVC allows the small minority to block the insulting statements, whose small measure in the distribution reflects its fringe nature.
This marks a major difference in perspective to \citet{TBJ+24}, who use the Schulze voting rule due to its clone-proofness.
If the Habermas machine samples 99 neutral statements and one insulting statement from the LLM and voters see the neutral statements as interchangeable, clone-proofness actually ensures that a slight majority preference for the insulting statement makes it the winner. In addition, clone-proofness is a brittle axiom that only applies to exact clones, so similar but not identically ranked alternatives can still have a large effect on the outcome. \footnote{While the winner of Schulze on samples from $\mathcal{D}$ is independent of the multiplicities of identical samples, we want to caution that the precondition of clone-proofness, requiring alternatives to be treated the same by \emph{every} voter, makes it a brittle axiom. For example, \citet{PS24} give an election where Schulze does not satisfy clone-proofness for alternatives that are almost clones. In this way, multiplicities can still have unpredictable effects under Schulze.}

As illustrated in the above example, having a distribution over alternatives enables us to guarantee rights to coalitions by forming a yardstick for how much influence each of them deserves.
This does mean, of course, that the distribution must be carefully chosen to be broadly acceptable to the voters.
In many cases, the statement distribution generated by an LLM might be a sufficiently neutral ground\,---\,say, from a base model, which seem to be more politically moderate than post-trained models~\citep{Rozado24}, or a model specifically fine-tuned for this purpose.
Another option could be to define $\mathcal{D}$ in terms of the distribution of voters as we do in our experiments: we draw characteristics of a random voter and ask the LLM to answer from their perspective.
We continue our discussion on the choice of distribution $\mathcal{D}$ in \cref{sec:distr}.

\section{Basic Properties}
To begin our investigation of the $\epsilon$-PVC, we start by bounding its size and developing a polynomial-time algorithm that can compute the critical $\epsilon$ and the $\epsilon$-PVC if $\mathcal{D}$ is the uniform distribution over a given set of alternatives.

\subsection{Size of $\epsilon$-PVC}
As a warm-up, we study the size of the $\epsilon$-PVC, which will tell us how many generative queries are necessary to even see an element of the $\epsilon$-PVC.
We begin by showing that, for any $\epsilon>0$, the $\epsilon$-PVC is not only non-empty but always has a size (with respect to the measure $\mu_{\mathcal{D}}$) of at least $\epsilon$. Previous papers that have looked at variants of the $\epsilon$-PVC~\citep{a2024policy, chaudhury2024fair} give similar proofs of existence, but do not explicitly study the size of the $\epsilon$-PVC.

\begin{proposition}\label{prop:epsilon_size}
    For any instance of the problem and any $\epsilon$, the size of the $\epsilon$-PVC (with respect to the measure $\mu_{\mathcal{D}}$) is at least $\epsilon$.
\end{proposition}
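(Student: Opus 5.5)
We will prove the statement with a continuous analogue of Moulin's vote-by-veto procedure. Voters veto alternatives from the bottom of their rankings in proportion to their population share. We then show that whatever survives the vetoing, which has measure at least $\epsilon$, lies in the $\epsilon$-PVC.

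\textbf{Assumptions.} For simplicity we first treat the case where $\mathcal{D}$ is atomless, so that for each voter $i$ and each $t\in[0,1]$ there is a well-defined bottom part of $i$'s ranking of measure exactly $t$. We also assume the relevant sets are measurable. Atoms can be handled afterwards by splitting each atom into a continuum of identically ranked copies and merging them back at the end. If $\epsilon \ge 1$ the claim is only meaningful up to truncation. For $\epsilon<1$ we may also assume $\epsilon>0$: for $\epsilon = 0$ the claim is trivial, since measure $\ge 0$ holds always. Write $n=|\mathcal{N}|$.

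\textbf{Construction.} Process voters $1,\dots,n$ in order. Maintain a remaining set $R_0=\mathcal{M}$. At step $i$, voter $i$ removes from $R_{i-1}$ its least-preferred alternatives of total measure $(1-\epsilon)/n$. Formally, let $L_i\subseteq R_{i-1}$ be a down-set of $R_{i-1}$ in $\succ_i$ with $\mu_{\mathcal{D}}(L_i)=(1-\epsilon)/n$, and set $R_i=R_{i-1}\setminus L_i$. The final set $R=R_n$ has measure $1-(1-\epsilon)=\epsilon$.

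\textbf{Survivors are in the $\epsilon$-PVC.} Suppose $a\in R$ were $\epsilon$-blocked by a coalition $T$ with blocking set $S$, so $\mu_{\mathcal{D}}(S)>1-|T|/n+\epsilon$. Consider any $i\in T$. Since $a\in R\subseteq R_{i-1}$ was not vetoed by $i$, every alternative in $L_i$ is ranked below $a$ by $i$ (we break ties toward $a$ surviving). But every $b\in S$ satisfies $b\succ_i a$, so $S\cap L_i=\emptyset$ for each $i\in T$. It follows that
\[
\mu_{\mathcal{D}}(S)\le 1-\sum_{i\in T}\mu_{\mathcal{D}}(L_i)=1-\frac{|T|(1-\epsilon)}{n}.
\]
It then suffices to check that
\[
1-\frac{|T|(1-\epsilon)}{n}\le 1-\frac{|T|}{n}+\epsilon,
\]
which is equivalent to $|T|\epsilon/n\le\epsilon$ and therefore holds. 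This contradicts $\mu_{\mathcal{D}}(S)>1-|T|/n+\epsilon$, so $a$ is not $\epsilon$-blocked. Hence $R\subseteq$ $\epsilon$-PVC, and the $\epsilon$-PVC has measure at least $\epsilon$.

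\textbf{Main obstacle.} The delicate step is the existence of the exact down-set $L_i$ of measure $(1-\epsilon)/n$. If $\mathcal{D}$ has atoms, or a voter ranks a positive-measure set of alternatives in an order type that prevents an exact cut, this can fail. Our first approach is fractional vetoing. Represent $\mathcal{D}$ as the pushforward of Lebesgue measure on $[0,1]$, replace alternatives by points $u\in[0,1]$, and order points by the voter's ranking, breaking ties by $u$. On $[0,1]$ exact cuts exist by continuity of $t\mapsto\mu$ of the initial segment. We then let an alternative be in $R$ if some copy of it survives. The measure bound transfers to $\mathcal{M}$ because $\mu_{\mathcal{D}}$ of the image of the surviving copies is at least their Lebesgue measure. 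The blocking argument also transfers, since $S$ lifts to its full preimage, which has the same measure, and every copy of $b\in S$ still beats every copy of $a$.
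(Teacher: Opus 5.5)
Your proof is correct and is essentially the paper's own argument. The paper's Vote by $\gamma$-Veto also has voters sequentially veto their least-preferred $\gamma=\frac{1-\epsilon}{n}$ of the remaining mass, handling atoms through partially consumed capacities where you lift to $[0,1]$, and its contradiction via $W=\{a\}\cup\bigcup_{i\in T}W_i$ is the same counting as your bound $\mu_{\mathcal{D}}(S)\le 1-\frac{|T|(1-\epsilon)}{n}$. One caveat, which the paper shares because its ``smallest set $X$'' need not exist either: tie-breaking by $u$ only resolves atoms, not pathological order types among distinct alternatives, so the existence of an exact cut of measure $\gamma$ is a regularity assumption rather than a consequence of continuity (it holds, for example, for countable $\mathcal{M}$ or preferences given by injective measurable utilities).
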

\begin{proof}[Proof sketch]
    The key idea of this proof relies on an algorithm we call vote by $\gamma$-veto (Algorithm \ref{alg:veto-by-consumption-text} in \cref{sec:epsilon_size_proof}), which is guaranteed to return a set of alternatives $C$ with mass at least $\epsilon$ such that all alternatives in $C$ are also in the $\epsilon$-PVC. In this algorithm, the voters are ordered randomly, and then in this order each voter vetoes their least favorite $\frac{1-\epsilon}{n}$ mass of alternatives that has not yet been vetoed. Clearly, once all $n$ voters have vetoed $\frac{1-\epsilon}{n}$ mass, the remaining alternatives must have mass at least $\epsilon$. The proof that the remaining alternatives are in the $\epsilon$-PVC follows similarly to the proof for the Vote by Veto rule of \citet{Moulin82}. Intuitively, any alternative $a$ not in the $\epsilon$-PVC must get removed before the algorithm ends, as the last voter to veto from the blocking set for this alternative $a$ is guaranteed to veto $a$ if $a$ has not yet been vetoed.
    For the full proof, see \cref{sec:epsilon_size_proof}.
\end{proof}

A direct corollary of \cref{prop:epsilon_size} is that, among sufficiently many generative queries, there will with high probability be an alternative in the $\epsilon$-PVC:
\begin{corollary}\label{cor:suff_many}
    For any $\epsilon, \delta > 0$, a set of $m = \frac{\log(1/\delta)}{\epsilon}$ generative queries will with probability $1-\delta$ contain an element of the $\epsilon$-PVC.
\end{corollary}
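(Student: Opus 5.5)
The plan is to reduce the corollary directly to \cref{prop:epsilon_size}. Let $P \subseteq \mathcal{M}$ denote the $\epsilon$-PVC. By \cref{prop:epsilon_size}, $\mu_{\mathcal{D}}(P) \ge \epsilon$. Each generative query returns an independent sample $x_j \sim \mathcal{D}$, so each sample lies in $P$ with probability at least $\epsilon$, independently across the $m$ queries.

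The failure event is that none of the $m$ samples lands in $P$. Its probability is
\begin{equation*}
\Pr[\forall j:\ x_j \notin P] = \bigl(1-\mu_{\mathcal{D}}(P)\bigr)^m \le (1-\epsilon)^m \le e^{-\epsilon m}.
\end{equation*}
Here we use the standard inequality $1-\epsilon \le e^{-\epsilon}$. Substituting $m = \log(1/\delta)/\epsilon$ gives $e^{-\epsilon m} = \delta$. So with probability at least $1-\delta$ some sample lies in the $\epsilon$-PVC. If $\log(1/\delta)/\epsilon$ is not an integer, we take $m = \lceil \log(1/\delta)/\epsilon \rceil$, which only decreases the failure probability.

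The only point needing care is measurability: $P$ must be a measurable set for $\mu_{\mathcal{D}}(P)$ to make sense. \cref{prop:epsilon_size} already speaks of its measure, and its proof exhibits a measurable set $C \subseteq P$ of mass at least $\epsilon$, namely the set remaining after vote by $\gamma$-veto. If needed, I would run the argument with $C$ in place of $P$: the event $x_j \in C$ is measurable, and $x_j \in C$ implies $x_j \in P$.

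I expect no real obstacle; the substance is entirely in \cref{prop:epsilon_size}.
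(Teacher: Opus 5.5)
Your proposal is correct and is essentially the paper's proof: bound the failure probability by $(1-\mu_{\mathcal{D}}(C_\epsilon))^m \le (1-\epsilon)^m \le e^{-\epsilon m} = \delta$, using the size bound $\mu_{\mathcal{D}}(C_\epsilon)\ge\epsilon$ from \cref{prop:epsilon_size}. Your remarks on rounding $m$ up and on measurability (which the paper takes for granted and does not prove for $C$ either) are harmless additions.
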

\begin{proof}
    Let $M$ be the result of $m = \frac{\log(1/\delta)}{\epsilon}$ generative queries and let $C_\epsilon$ be the $\epsilon$-PVC. Then
    \[
        \Pr(M \cap C_{\epsilon} = \emptyset) = (1- \mu_{\mathcal{D}}(C_\epsilon))^m \le (1-\epsilon)^{m} \le e^{-\epsilon m} = \delta,
    \]
    where the first inequality is by Proposition \ref{prop:epsilon_size}. Therefore, the probability that at least one element of $M$ is in $C_{\epsilon}$ is at least $1-\delta$.
\end{proof}

While $O(1/\epsilon)$ generative queries contain at least one element of the $\epsilon$-PVC with high probability, we show in Section \ref{sec:lower_bounds} that a larger number, specifically $\Omega(1/\epsilon^2)$, of generative queries is needed to \emph{identify} an element of the $\epsilon$-PVC. Therefore, identifying an element of the $\epsilon$-PVC with generative queries is statistically harder than simply generating an element of the $\epsilon$-PVC.

In contrast to the traditional PVC, the $0$-PVC can, in general, be empty:
\begin{proposition}
For an infinite number of alternatives, the $0$-PVC can be empty.
\end{proposition}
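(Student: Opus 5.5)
We construct an explicit instance with a continuum of alternatives and a non-atomic distribution, and show that every alternative is $0$-blocked. The simplest choice is a single voter, $\mathcal{N}=\{1\}$ so $n=1$, with $\mathcal{M}=[0,1]$ and $\mathcal{D}$ uniform (Lebesgue measure). The voter ranks alternatives by their value, so $b \succ_1 a$ iff $b>a$. With $T=\{1\}$, the blocking condition in \cref{def:epsilon_pvc} for $\epsilon=0$ reads $\mu_{\mathcal{D}}(S) > 1 - 1 + 0 = 0$.

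The key step is then immediate. For any alternative $a\in[0,1)$, take $S=(a,1]$. Every $b\in S$ satisfies $b\succ_1 a$, and $\mu_{\mathcal{D}}(S)=1-a>0$, so $a$ is $0$-blocked by $T$. The only remaining alternative is $a=1$, the voter's top choice. The cleanest fix is to remove it: take $\mathcal{M}=[0,1)$, or equivalently rank by value on an open interval so that no top element exists. Then every alternative is blocked and the $0$-PVC is empty.

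If one prefers an instance with several voters, to show this is not an artifact of $n=1$, we would let all $n$ voters share this ranking and use $T=\mathcal{N}$. The condition again becomes $\mu_{\mathcal{D}}(S)>0$, and the same argument applies.

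The only subtle point, and the one we expect to be the main obstacle, is making sure a top element does not survive: with a single maximal alternative, that alternative would lie in the $0$-PVC. Using a half-open interval (or any non-atomic distribution on a preference order with no maximum) handles this. We should also note the contrast with \cref{prop:epsilon_size}: for $\epsilon>0$, the set $S$ must have measure exceeding $\epsilon$, so alternatives within $\epsilon$ of the top are not blocked. This is why strict positivity of $\epsilon$ is essential, and it confirms that the example is consistent with earlier results.
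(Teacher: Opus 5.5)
Your proof is correct and uses the same idea as the paper. All voters share a ranking with no top element, so the grand coalition reduces the blocking condition to $\mu_{\mathcal{D}}(S)>0$, and every alternative is then blocked by a positive-mass set of better alternatives. The paper uses $\mathcal{M}=\mathbb{N}_{\geq 1}$ with masses $2^{-i}$ and $S=\{i+1\}$, whereas you use Lebesgue measure on $[0,1)$ with $S=(a,1)$; the paper's atomic version also shows that the emptiness does not come from individual alternatives having zero mass.
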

\begin{proof}
Let $\mathcal{M} \coloneqq \mathbb{N}_{\geq 1}$, and let each alternative $i \geq 1$ have mass $\frac{1}{2^i}$.
Suppose that all voters prefer alternatives with larger numbers.
We prove, by contradiction, that no alternative $i$ is in the $0$-PVC.
Indeed, any such $i$ is blocked by the grand coalition $T = \mathcal{N}$ and the alternative set $S =\{i+1\}$, since all voters prefer $i+1$ over $i$ and since $\mu_{\mathcal{D}}(S) = \frac{1}{2^{i+1}} > 0 = 1 - \frac{|T|}{n}$.
\end{proof}

\subsection{Computing Critical $\epsilon$}\label{subsec:compute-critical-epsilon}
As mentioned in the previous section, the critical $\epsilon$ of an alternative measures how well a given alternative does at finding common ground.
In this section, we give an algorithm for finding the critical $\epsilon$ of a given alternative in polynomial time when the number of alternatives is finite and $\mathcal{D}$ is the uniform distribution over $\mathcal{M}$.
Our algorithm generalizes the algorithm of \citet{ianovski_computing_2023} for deciding if an alternative is in the traditional PVC.
By running this algorithm for all alternatives, we can construct
the entire $\epsilon$-PVC in polynomial time in the same setting.

\begin{algorithm}[ht]
\caption{Computing the critical $\epsilon$}
\label{alg:calculate_critical}
\DontPrintSemicolon

\KwIn{Alternative $a$, $|\mathcal{N}| = n$, $|\mathcal{M}| = m$}

    Construct a flow graph $G$ as follows: \;
    \Indp
    Connect a source node to $n$ voter nodes, each edge with weight $m$\;
    Connect $m-1$ nodes (for all alternatives except $a$) to a sink node, each with edge weight $n$\;
    Connect voter node $i$ to alternative node $j$ with infinite weight if voter $i$ prefers $a$ to alternative $j$\;
    \Indm

    Let $K$ be the size of the min-cut for graph $G$ \;

    \Return $\frac{2nm - n - K}{nm} - 1$
\end{algorithm}

\begin{theorem}
    When $\mathcal{M}$ is finite, Algorithm \ref{alg:calculate_critical} computes the critical $\epsilon$ of a given alternative in polynomial time.
\end{theorem}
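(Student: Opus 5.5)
The plan is to first turn the critical $\epsilon$ into a purely combinatorial maximization. Then I will show that the min-cut of $G$ computes exactly this maximum.

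\textbf{Step 1: characterize the critical $\epsilon$.} Let $\mathcal{D}$ be uniform over the $m$ alternatives, so $\mu_{\mathcal{D}}(S)=|S|/m$. Call a pair $(T,S)$ with $T\subseteq\mathcal{N}$ and $S\subseteq\mathcal{M}\setminus\{a\}$ \emph{admissible} if every $i\in T$ prefers every $b\in S$ to $a$. By \cref{def:epsilon_pvc}, $a$ is $\epsilon$-blocked if and only if some admissible pair satisfies $|S|/m>1-|T|/n+\epsilon$. Equivalently, $\epsilon<\Phi(T,S)$, where $\Phi(T,S):=|T|/n+|S|/m-1$.

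Two small points need care.
\begin{itemize}
\item Blocking sets can be taken inside $\mathcal{M}\setminus\{a\}$ when $T\neq\emptyset$, since nobody prefers $a$ to $a$.
\item When $T=\emptyset$, any $S$ is allowed, including $S=\mathcal{M}$. But then $\mu_{\mathcal{D}}(S)\le 1<1+\epsilon$ for $\epsilon\ge 0$, so such pairs never block and can be ignored.
\end{itemize}
Hence $a$ lies in the $\epsilon$-PVC if and only if $\epsilon\ge\Phi^*:=\max_{(T,S)\text{ admissible}}\Phi(T,S)$. The pair $(\mathcal{N},\emptyset)$ is admissible with $\Phi=0$, so $\Phi^*\ge 0$. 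Therefore the critical $\epsilon$ of \cref{def:crit_eps} is exactly $\Phi^*$; the strict inequality in the definition makes the minimum attained.

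\textbf{Step 2: relate min-cuts in $G$ to admissible pairs.} Since the voter-to-alternative edges have infinite capacity, every finite $s$--$t$ cut consists only of source edges $(s,i)$ of cost $m$ and sink edges $(j,t)$ of cost $n$. Given such a cut, let $T$ be the voters whose source edge is not cut, and $S$ the alternatives whose sink edge is not cut. The cut separates $s$ from $t$ if and only if there is no infinite edge from $T$ to $S$. That means no $i\in T$ prefers $a$ to any $j\in S$, i.e.\ $(T,S)$ is admissible. Conversely, every admissible pair yields a finite cut in this way.

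The cost of the cut is
\[
(n-|T|)m+(m-1-|S|)n=2nm-n-(|T|m+|S|n).
\]
So the min-cut value is $K=2nm-n-\max_{(T,S)}(|T|m+|S|n)$, where the maximum is over admissible pairs. Rearranging, $\Phi^*=\frac{2nm-n-K}{nm}-1$, which is exactly the value the algorithm returns.

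\textbf{Step 3: running time.} The graph has $n+m+1$ nodes and $O(nm)$ edges. Building it needs the $nm$ preference comparisons against $a$, and min-cut is computable by any polynomial max-flow algorithm, so the total time is polynomial. (The infinite capacities can be replaced by $nm+1$ without changing min-cuts.)

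The main obstacle is the correspondence in Step 2: checking that finite cuts are exactly the admissible pairs. Getting the orientation of the infinite edges right is the crux. They must go from voter $i$ to alternative $j$ precisely when $i$ prefers $a$ to $j$, so that an uncut pair $(i,j)$ would leave an $s$--$t$ path exactly when $i$ fails to prefer $j$ over $a$. The bookkeeping of the $T=\emptyset$ and $a\in S$ edge cases in Step 1 is the other place where care is needed.
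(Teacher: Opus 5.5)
Your proof is correct and takes essentially the same route as the paper. The paper also identifies finite cuts of $G$ with voter/alternative sets in which every voter ranks every alternative above $a$ (the ``maximum biclique''), derives $Q + K = 2nm - n$ to justify the returned formula, and appeals to polynomial-time max-flow. The only difference is presentational: the paper goes through a blocking graph with $m$ copies of each voter and $n$ copies of each alternative, whereas you work directly with the weight $|T|m + |S|n$ and treat the $T=\emptyset$ and $a \in S$ edge cases explicitly.
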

\begin{proof}[Proof Sketch]
Our proof generalizes the ideas of \citet{ianovski_computing_2023} for computing the PVC.
Their algorithm checks whether an alternative $a$ is in the PVC by constructing a graph $G_a$ such that there exists a complete biclique in $G_a$ if and only if $a$ is not in the PVC. We show that for the same graph $G_a$, the size of the maximum biclique is larger than $(1+\epsilon)nm$ if and only if that alternative is not in the $\epsilon$-PVC. In order to efficiently compute the size of the maximum biclique, we follow the same method as \citet{ianovski_computing_2023} of constructing the flow graph in \Cref{alg:calculate_critical} such that the weight of the max flow $K$ in this graph plus the size of the maximum biclique in $G_a$ equals $2nm - n $. In the last line of \Cref{alg:calculate_critical}, we use this fact to find the smallest $\epsilon$ for which the maximum biclique of $G_a$ is larger than $(1+\epsilon)nm$. The runtime of the algorithm is polynomial because the min-cut can be found in polynomial time (\Cref{subsubsec:critical-epsilon-runtime}). For the detailed proof, see \Cref{sec:deferred-proofs-for-eps-pvc-critical-epsilon}.
\end{proof}

\section{Algorithmic Results}\label{sec:algorithm}

In this section, we present our main algorithmic results for efficiently finding an element of the $\epsilon$-PVC. We give a simple algorithm using generative queries and min-queries that finds an alternative in the $\epsilon$-PVC, with high probability, using $\tilde{O}(1/\epsilon^2)$ queries of each type. A key takeaway from this result is that the number of generative and discriminative queries does \emph{not} depend on the size of $\mathcal{N}$ (the number of voters) or the distribution over alternatives. This is important because the population may be large and the generative distribution altogether unknown to the algorithm. Then, we discuss how the same algorithm can be modified to use pairwise discriminative queries instead of min-queries.

\subsection{Min-Queries}
We now present and analyze Algorithm \ref{alg:find_epsilon_PVC_element}, which finds an element of the $\epsilon$-PVC using generative queries and min-queries.

\begin{algorithm}[ht]
\caption{Finding an element of the $\epsilon$-PVC}
\label{alg:find_epsilon_PVC_element}
\DontPrintSemicolon

\textbf{Input:} $\mathcal{N}, \mathcal{M}, \delta, \epsilon$ \;

\textbf{Output:} An alternative that is with probability $1-\delta$ an element of the $\epsilon$-PVC\;

\vspace{0.5em}
    Set $\tau = \frac{8}{\epsilon^2} \, \log\left(\frac{32}{\epsilon^2\delta}\log\left(\frac{32}{\epsilon^2\delta}\right)\right)$. \label{line:tau}

    Let $M$ be a set of alternatives that is the result of $\tau$ generative queries. \label{line:choose_M}

    Initialize $X = M$.

    \While{$|X| > 1$}{ \label{line:loop_start}
        Select a random voter $i \in \mathcal{N}$.
        
        Do one min-query to find voter $i$'s least favorite alternative among $X$. \label{line:discrim}
        
        Remove that alternative from $X$. \label{line:loop_end}
    }
    \Return{the remaining element in $X$}
\end{algorithm}

\begin{theorem}\label{thm:alg_result}
    Using ${O}\left(\epsilon^{-2} \, \log\left(\frac{1}{\epsilon \delta}\right)\right)$ generative and min-queries, Algorithm \ref{alg:find_epsilon_PVC_element} returns an element of the $\epsilon$-PVC with probability at least $1-\delta$. 
\end{theorem}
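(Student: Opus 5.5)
The query count is immediate: the algorithm makes $\tau$ generative queries and $\tau-1$ min-queries, and $\tau = O(\epsilon^{-2}\log(1/(\epsilon\delta)))$ by the choice on line~\ref{line:tau}. The substance is correctness. My plan is to split the argument into a \emph{sampling} step and a \emph{deterministic veto} step.

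The sampling step treats the sample $M$ of size $\tau$ as a uniform empirical distribution over alternatives. The random voters drawn in the loop form a multiset of $\tau-1$ voters, viewed as an empirical distribution over $\mathcal{N}$. The loop is then a sequential Vote by Veto on the finite profile of sampled voters and sampled alternatives, with each sampled voter vetoing one alternative in turn.

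For the veto step, I adapt the argument of \cref{prop:epsilon_size} (the last-vetoer argument of \citet{Moulin82}). Suppose the returned alternative $a$ is $\epsilon$-blocked by a coalition $T$ with blocking set $S$, where $\mu_{\mathcal{D}}(S) > 1-|T|/n+\epsilon$. Let $k_T$ be the number of loop iterations whose voter lies in $T$, and $s$ the number of elements of $M$ lying in $S$. Each such voter ranks $a$ below all of $S$. Hence while $a$ and some element of $S$ survive, the voter cannot veto an element of $S$ without first vetoing $a$. So each of the $k_T$ vetoes from $T$ removes either $a$ or an element outside $S\cap M$, because $a$ is below every element of $S$. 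That gives at most $\tau - s$ removals outside $S$, with $a$ itself among the survivors. Since $a$ survived, counting yields $k_T \le \tau-1-s$, i.e.\ $k_T + s \le \tau - 1$.

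It therefore suffices to show that, with probability $1-\delta$, $k_T/(\tau-1) + s/\tau > 1$ for every blocking pair $(T,S)$ relevant to the output. Since $\mathbb{E}[k_T]\approx \tau|T|/n$ and $\mathbb{E}[s]\approx \tau\mu(S) > \tau(1-|T|/n+\epsilon)$, Hoeffding deviations of $\epsilon\tau/2$ in each count suffice. That is where $\tau\sim 8\epsilon^{-2}\log(\cdot)$ comes from.

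The main obstacle is the union bound: there are exponentially many coalitions $T$ and uncountably many sets $S$. I would reduce both. For $S$, it suffices, for each candidate output $a\in M$ and each $T$, to take $S$ maximal: the set of alternatives all of $T$ rank above $a$. This is determined by $a$ and $T$, so there are at most $\tau$ choices of $a$. For $T$, the relevant coalitions, given $a$ and a maximal $S$, can be taken maximal among voters ranking all of $S$ above $a$. The nested structure then lets me parametrize by a threshold instead of arbitrary subsets.

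Concretely, I would condition on $a$ and use a one-dimensional ordering: sort alternatives by ``how many voters rank them above $a$''. Alternatively, I would avoid the union over $T$ entirely by only bounding, for each $a\in M$ and each level $\ell\in\{1,\dots,\tau\}$, the $\ell$-th prefix events. This gives roughly $\tau^2$ events, which matches the $\log(\tau/\delta)$-type term $\log\left(\frac{32}{\epsilon^2\delta}\log\frac{32}{\epsilon^2\delta}\right)$ in $\tau$. The subtle point is that the output $a$ depends on the samples. I expect to handle this by union-bounding over all $a\in M$, using a fresh-sample (leave-one-out) argument for the other $\tau-1$ alternatives.
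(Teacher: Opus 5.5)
Your query count and your deterministic veto step are correct. The bound $k_T+s\le\tau-1$ for any population blocking pair $(T,S)$ of the output is the same counting the paper uses; the paper phrases it as the output lying in the PVC of the sampled profile.

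The gap is the sampling step. You aim to show that, with probability $1-\delta$, $k_T/(\tau-1)+s/\tau>1$ holds for \emph{every} (maximal) blocking pair. You then plan to beat the union bound through a nested/threshold structure of maximal pairs. No such structure exists, and the target itself is false in general. Here is an example:
\begin{itemize}
\item Let $\mu_{\mathcal{D}}(a)=1/4$.
\item Let each voter $i$ rank only a private set $B_i$ below $a$, with the $B_i$ pairwise disjoint and $\mu_{\mathcal{D}}(B_i)=1/(4n)$.
\item Let everyone rank the remaining mass above $a$.
\end{itemize}
The maximal set for $T$ is then $S_T=\mathcal{M}\setminus(\{a\}\cup\bigcup_{i\in T}B_i)$, and the maximal coalition for $S_T$ is $T$ itself. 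So all $2^n$ coalitions give maximal pairs, and these pairs are not nested. Moreover, every $T$ with $|T|>(\frac13+\frac{4\epsilon}{3})n$ blocks $a$. In particular, the coalition of voters never drawn blocks $a$ once $n\gg\tau$. Yet that coalition has $k_T=0$, so it violates your inequality whenever $a\in M$, which happens with probability $1-(3/4)^\tau$. Sorting alternatives by how many voters rank them above $a$ also cannot encode the unanimity requirement within $T$. Finally, the ``prefix events'' are never defined.

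The missing idea is that no uniformity over coalitions or sets is needed. Your contradiction only requires \emph{one} blocking pair of the output to be well represented. As in \cref{lemma:sampling}:
\begin{itemize}
\item For each alternative $x$ outside the $\epsilon$-PVC, fix a single witness $(T_x,S_x)$ that depends on $x$ alone.
\item Condition on the sampled $x_i$. The other $\tau-1$ alternatives and the independent voter draws then make $s$ and $k_{T_{x_i}}$ binomial. This is where your leave-one-out idea belongs.
\item Hoeffding with deviation $\epsilon/4$ on each count fails with probability at most $\delta/\tau$ for the chosen $\tau$.
\item A union bound over the $\tau$ sampled alternatives, not over coalitions, gives probability $1-\delta$.
\end{itemize}
On that event, suppose the output were outside the $\epsilon$-PVC. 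Its fixed witness would satisfy $k_T/(\tau-1)+s/\tau>1+\epsilon/2-1/\tau>1$, contradicting your counting bound. This one-witness-per-sample union bound is exactly what produces the $\log\bigl(\frac{32}{\epsilon^2\delta}\log\frac{32}{\epsilon^2\delta}\bigr)$ factor in $\tau$.
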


\begin{proof}
    By construction, Algorithm \ref{alg:find_epsilon_PVC_element} uses $\tau$ generative queries and $\tau-1$ min-queries, which gives the desired bound on the number of queries, observing that, for small enough $\epsilon, \delta$,
    \[ \textstyle \tau = \frac{8}{\epsilon^{2}} \, \log\left(\frac{32}{\epsilon^2\delta}\log\left(\frac{32}{\epsilon^2\delta}\right)\right) \leq \frac{8}{\epsilon^{2}} \, \log \left( \frac{32^2}{\epsilon^4 \delta^2}\right) \leq \frac{8}{\epsilon^{2}} \, \log \left( \frac{32^4}{\epsilon^4 \delta^4}\right) = \frac{32}{\epsilon^{2}} \, \log \frac{32}{\epsilon \, \delta} = O\left(\epsilon^{-2} \, \log \left(\frac{1}{\epsilon \, \delta}\right)\right).\]

    Let $N$ be the set of voters that are selected in Lines \ref{line:loop_start}--\ref{line:loop_end} of Algorithm \ref{alg:find_epsilon_PVC_element}. Note that a voter $i \in \mathcal{N}$ could be chosen in two different iterations of the while loop, in which case we view these as distinct copies of the voter in the set $N$. This means we must have that $|N| = \tau -1$. 
    
    We first show that any alternative in the PVC for voters $N$ and alternatives $M$ will, with probability $1-\delta$, be in the $\epsilon$-PVC for $\mathcal{N}, \mathcal{D}$. We defer the proof of this result to Lemma \ref{lemma:sampling} below. With this result, all we need to show is that the alternative we return is in the PVC for $N$ and $M$.

    In Lines \ref{line:loop_start}--\ref{line:loop_end}, we are functionally applying the Vote by Veto rule from \citep{Moulin82} to find an element of the PVC for $N$ and $M$ using one min-query for each voter in $N$.
    Let $a$ be the alternative returned by Algorithm \ref{alg:find_epsilon_PVC_element}, and assume for contradiction that $a$ is not in the PVC for $N$ and $M$, i.e., not in the $0$-PVC for the uniform distribution over $M$. 
    Hence, there exists a coalition of voters $T$ and a set of alternatives $S$ such that $b \succ_i a$ for each $i \in T$ and $b \in S$ and such that
    \begin{equation}
    \frac{|S|}{\tau} = \frac{|S|}{|M|} > 1 - \frac{|T|}{|N|} \label{eq:contradictionlem}
    \end{equation}

    Each voter $i \in N$ removed one alternative $b_i \in M$ from the set $X$.
    Since $a$ was available in $X$ at the time and the voter chose to remove $b_i$, it must hold that $a \succ_i b_i$.
    This means that the set $W \coloneqq \{b_i \mid i \in T\}$ of alternatives removed by the voters in $T$ is disjoint from $S$.
    Moreover, $a$ is clearly not part of either set (because it was not removed and cannot be strictly preferred over itself).
    This means that $S \subseteq M \setminus W \setminus \{a\}$ and hence $|S| \leq \tau - |T| - 1$.
    It follows that
    \[ \frac{|S|}{\tau} \leq \frac{\tau - |T| - 1}{\tau} = 1 - \frac{|T|+1}{|N| + 1} \leq 1 - \frac{|T|}{|N|}, \]
    where the last inequality follows since $|T| \leq |N|$.
    This contradicts \cref{eq:contradictionlem}, showing that $a$ is indeed in the PVC for $N$ and $M$.
    Together with the lemma, we conclude that the returned statement is in the $\epsilon$-PVC with probability at least $1-\delta$.
\end{proof}

The key to concluding the proof of the theorem is \cref{lemma:sampling}, a quite technical result that connects the $\epsilon$-PVC of the full instance with the PVC for the sampled voters and alternatives.
At a high level, the proof must argue that the randomly selected set of voters and alternatives are sufficiently representative of the population and alternative distribution.
But it is not immediately obvious how to do so, given the exponential number of coalitions, and infinite collection of alternative sets that could show that an alternative is $\epsilon$-blocked.

Instead, we apply concentration inequalities in a more subtle manner so that the number of generative queries needed depends only on $\epsilon$ and is independent of $|\mathcal{N}|$ (see Line \ref{line:tau} of Algorithm \ref{alg:find_epsilon_PVC_element}). Specifically, we show that we do not need enough samples to sufficiently approximate every subset of voters and alternatives, but rather only need enough samples to sufficiently approximate the voters and alternatives for the blocking coalitions of the generated alternatives.

\begin{lemma}\label{lemma:sampling}
     Let $N$ be the set of (not necessarily distinct voters) selected in Lines \ref{line:loop_start}--\ref{line:loop_end} of Algorithm \ref{alg:find_epsilon_PVC_element}. For any $\delta, \epsilon > 0$, every element of the PVC for $N$ and $M$ from Algorithm \ref{alg:find_epsilon_PVC_element} will be in the $\epsilon$-PVC for $\mathcal{N}$ and $\mathcal{D}$.
\end{lemma}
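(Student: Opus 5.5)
The plan is to prove the contrapositive for each sampled alternative separately and then take a union bound over the $\tau$ samples. The lemma is implicitly a high-probability statement: with probability at least $1-\delta$ over the generative queries and the random voter choices, every element of the PVC for $N$ and $M$ lies in the $\epsilon$-PVC. Since the returned alternative lies in $M$, it suffices to show the following: with probability at least $1-\delta$, every $a \in M$ that is \emph{not} in the $\epsilon$-PVC for $\mathcal{N}, \mathcal{D}$ is blocked (in the $0$-PVC sense) by some coalition of $N$ with respect to the uniform distribution on $M$. The main point is that we do not try to approximate all coalitions or all alternative sets. For each bad $a$ we fix one witness $(T_a, S_a)$ from \cref{def:epsilon_pvc}, with $\mu_{\mathcal{D}}(S_a) > 1 - \frac{|T_a|}{n} + \epsilon$, and we only need concentration for these two fixed sets.

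Fix an index $j \in \{1,\dots,\tau\}$ and condition on the $j$-th generative sample $a = a_j$. If $a$ is not in the $\epsilon$-PVC, the witness $(T_a, S_a)$ is a deterministic function of $a$. The other $\tau - 1$ samples are i.i.d.\ from $\mathcal{D}$ and independent of $a$, so Hoeffding's inequality gives
\[ \frac{|S_a \cap (M\setminus\{a_j\})|}{\tau-1} \ge \mu_{\mathcal{D}}(S_a) - \tfrac{\epsilon}{4} \]
except with probability $e^{-(\tau-1)\epsilon^2/8}$. We may drop $a$ itself, since $a \notin S_a$.

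The voters in $N$ are $\tau - 1$ i.i.d.\ uniform draws from $\mathcal{N}$, chosen independently of $M$. The which-voter choice is random, and only the min-query answers depend on $M$. Hence Hoeffding also gives
\[ \frac{|T_a \cap N|}{|N|} \ge \frac{|T_a|}{n} - \tfrac{\epsilon}{4}, \]
counting multiplicities, with the same failure probability. On the good event, the coalition $T' = T_a \cap N$ (as a multiset) and the set $S' = S_a \cap M$ satisfy that every voter in $T'$ prefers every $b \in S'$ to $a$. Moreover,
\[ \frac{|S'|}{|M|} \ge \frac{\tau-1}{\tau}\Bigl(\mu_{\mathcal{D}}(S_a) - \tfrac{\epsilon}{4}\Bigr) > 1 - \frac{|T'|}{|N|} + \tfrac{\epsilon}{2} - \tfrac{1}{\tau} \ge 1 - \frac{|T'|}{|N|}, \]
using $\tau \ge 2/\epsilon$. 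So $a$ is blocked in the sampled instance.

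A union bound over the $\tau$ indices and the two events per index bounds the total failure probability by $2\tau e^{-(\tau-1)\epsilon^2/8}$. The remaining step is to check that the choice of $\tau$ on Line~\ref{line:tau} makes this at most $\delta$. Write $L = \frac{32}{\epsilon^2\delta}$, so that $\tau = \frac{8}{\epsilon^2}\log(L\log L)$. Then $e^{-\tau\epsilon^2/8} = \frac{1}{L\log L}$, and we need $2\tau e^{\epsilon^2/8} \le \delta L \log L = \frac{32}{\epsilon^2}\log L$. This reduces to $\log(L\log L) \lesssim 2\log L$, which holds since $\log L \le L$; the off-by-one $e^{\epsilon^2/8}$ factor is absorbed by the constant slack.

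I expect the main obstacle to be the conditioning argument rather than the arithmetic. The witness $(T_a, S_a)$ must be chosen before seeing the other samples, and the alternative $a$ is itself one of the samples, so the concentration has to be applied to the remaining $\tau - 1$ draws. Separately, one must justify that the voter draws are independent of $M$ even though the min-query answers depend on $M$. Once this is set up cleanly, the union bound over only $\tau$ events, rather than over all coalitions, is what makes the bound independent of $|\mathcal{N}|$.
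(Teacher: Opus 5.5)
Your proposal is correct and follows the paper's argument. You fix one witness $(T_a,S_a)$ per sampled alternative, apply Hoeffding with slack $\epsilon/4$ to just those two sets, and union bound over the $\tau$ samples. Your leave-one-out conditioning on $a_j$ and use of $|N|=\tau-1$ are more careful than the paper, which takes both sample sizes to be $\tau$ and applies Hoeffding to all of $X$, including $x_i$ itself.

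One small fix in the final arithmetic. The bound $\log L \le L$ alone only gives $\log(L\log L)\le 2\log L$, which leaves no room for the extra $e^{\epsilon^2/8}$ factor. Use $\log\log L \le (\log L)/e$ instead; for $\epsilon<1$ this leaves enough slack. The case $\epsilon \ge 1$ is trivial, since then every alternative is in the $\epsilon$-PVC.
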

\begin{proof}
The key observation is that every alternative $x \in \mathcal{M}$ not in the $\epsilon$-PVC for $\mathcal{M},\mathcal{N}, \mathcal{D}$ must have some blocking coalition of voters $T_x \subseteq \mathcal{N}$ and alternatives $S_x \subseteq \mathcal{M}$ satisfying Definition \ref{def:epsilon_pvc}. Denote the $i$th alternative sampled on Line \ref{line:choose_M} as $x_i$. If $x_i$ is not in the $\epsilon$-PVC for $\mathcal{M},\mathcal{N}, \mathcal{D}$, then we will show that with high probability, the corresponding blocking sets $T_{x_i}$ and $S_{x_i}$ are sufficiently accurately proportionally represented in the sampled sets $M$ and $N$ up to a factor of $\epsilon$. If this is the case, then Definition \ref{def:epsilon_pvc} implies that $x_i$ will also not be in the PVC for $M$ and $N$. For each $x_i$, define the event $E_i$ as the event that the above holds if $x_i$ is not in the $\epsilon$-PVC. If we show that $E_i$ holds with high probability for every alternative $x_i \in M$, then we can conclude that every alternative in $a \in M$ falls in one of two cases: 
Either $a$ is in the $\epsilon$-PVC for $\mathcal{M}, \mathcal{N},\mathcal{D}$ or  $a$ is not in the PVC for $N$ and $M$.

This implies the desired result that any element of the PVC for $N$ and $M$ is in the $\epsilon$-PVC  for $\mathcal{N}, \mathcal{D}$.

We now proceed with the formal proof:

    Let $X$ be a set of alternatives sampled as generative queries from $\mathcal{D}$ with size $|X| = \frac{8\log\left(\frac{32}{\epsilon^2\delta}\log\left(\frac{32}{\epsilon^2\delta}\right)\right)}{\epsilon^2}$ and let $Y \subseteq \mathcal{N}$ be a random subset of $ \mathcal{N}$ with size $|Y| = \frac{8\log\left(\frac{32}{\epsilon^2\delta}\log\left(\frac{32}{\epsilon^2\delta}\right)\right)}{\epsilon^2}$. We will show that with probability $1-\delta$, any alternative that is not in the $\epsilon$-PVC for the full set of alternatives will not be in the PVC for alternatives $X$ and voters $Y$. This implies that any element of the PVC for alternatives $X$ and voters $Y$ will with probability $1-\delta$ be in the $\epsilon$-PVC  for $\mathcal{N}$, $\mathcal{D}$. Define $\PVC{}(X,Y)$ as the PVC for alternatives $X$ and voters $Y$. 
    Define $\PVC{}_\epsilon(\mathcal{N}, \mathcal{D})$ as the $\epsilon$-PVC for  $\mathcal{N}, \mathcal{D}$.
    For any element $x \in \mathcal{M}$ such that $x \notin \PVC_\epsilon(\mathcal{N}, \mathcal{D})$, there exists some $T_x \subseteq \mathcal{N}$ and $S_x \subseteq \mathcal{M}$ such that for all $i \in T_x$ and $a \in S_x$, $a \succ_i x$ and $\mu_{\mathcal{D}}(S_x) + \frac{|T_x|}{|\mathcal{N}|} > 1+\epsilon$.

    Let $X = \{x_1,...,x_{|X|}\}$ be the ordered set of generative queries. Now we define the event $E_i := \Big\{x_i \in \PVC_\epsilon(\mathcal{N}, \mathcal{D})\Big\} 
         \bigcup \left\{x_i \not\in \PVC_\epsilon(\mathcal{N}, \mathcal{D}) \text{ and }\frac{|X \cap S_{x_i}|}{|X|} \ge \mu_{\mathcal{D}}(S_{x_i}) - \epsilon/4 \text{ and } \frac{|Y \cap T_{x_i}|}{|Y|} \ge \frac{|T_{x_i}|}{|\mathcal{N}|} - \epsilon/4\right\}$
    First, note that if $\bigcap_{i=1}^{|X|} E_i$ holds, then every element of $X$ is either in $\PVC{}_\epsilon(\mathcal{N}, \mathcal{D})$ or is not in $\PVC{}(X,Y)$. This implies that every element not in $\PVC{}_\epsilon(\mathcal{N}, \mathcal{D})$ is not in $\PVC{}(X,Y)$. 
    
    All that remains is to show that 
    $
        \Pr\left(\bigcap_{i=1}^{|X|} E_i\right) \ge 1-\delta. 
    $
    First, we will start with lower bounding $\Pr(E_i)$ (which by symmetry is the same for all $i$). We have:
    \begin{align*}
        &\Pr(E_i) = \Pr\left(x_i \in \PVC_\epsilon(\mathcal{N}, \mathcal{D})\right) + \\
        & \quad \Pr\left(\frac{|X \cap S_{x_i}|}{|X|} \ge \mu_{\mathcal{D}}(S_{x_i}) - \frac{\epsilon}{4} \text{ and } \frac{|Y \cap T_{x_i}|}{|Y|} \ge \frac{|T_{x_i}|}{|\mathcal{N}|} - \frac{\epsilon}{4} \,\middle|\, x_i \notin \PVC_\epsilon(\mathcal{N}, \mathcal{D})\right)\Pr(x_i \notin \PVC_\epsilon(\mathcal{N}, \mathcal{D})).
    \end{align*}
    Focusing on the middle probability and using Hoeffding's inequality, we have that
    \begin{align*}
        &\Pr\left(\frac{|X \cap S_{x_i}|}{|X|} \ge \mu_{\mathcal{D}}(S_{x_i}) - \frac{\epsilon}{4} \text{ and } \frac{|Y \cap T_{x_i}|}{|Y|} \ge \frac{|T_{x_i}|}{|\mathcal{N}|} - \frac{\epsilon}{4} \,\middle|\, x_i \notin \PVC_\epsilon(\mathcal{N}, \mathcal{D})\right) \\
        &\ge 1 - \Pr\left(\tfrac{|X \cap S_{x_i}|}{|X|} < \mu_{\mathcal{D}}(S_{x_i}) - \tfrac{\epsilon}{4}  \,\middle|\, x_i \notin \PVC_\epsilon(\mathcal{N}, \mathcal{D})\right) - \Pr\left(  \tfrac{|Y \cap T_{x_i}|}{|Y|} < \tfrac{|T_{x_i}|}{|\mathcal{N}|} - \tfrac{\epsilon}{4} \,\middle|\, x_i \notin \PVC_\epsilon(\mathcal{N}, \mathcal{D})\right) \\
        &\ge 1 - e^{-2|X|\epsilon^2/16}- e^{-2|Y|\epsilon^2/16}  \qquad \qquad \qquad\qquad \qquad \qquad \qquad \qquad \text{[Hoeffding's Ineq.]} \\
        &= 1 - 2e^{-2|X| \epsilon^2/16}  \qquad \qquad \qquad\qquad \qquad \qquad \qquad \qquad \qquad\qquad \qquad \qquad \text{[$|X| = |Y|$]}\\
        &= 1 - 2e^{-2\frac{8\log\left(\frac{32}{\epsilon^2\delta}\log\left(\frac{32}{\epsilon^2\delta}\right)\right)}{\epsilon^2} \epsilon^2/16} \\
        &= 1 - \frac{\delta}{\frac{16\log\left(\frac{32}{\epsilon^2\delta}\right)}{\epsilon^2}} \\
        &\ge 1 - \frac{\delta}{\frac{8\log\left(\frac{32}{\epsilon^2\delta}\log\left(\frac{32}{\epsilon^2\delta}\right)\right)}{\epsilon^2}} \\
        &= 1 - \frac{\delta}{|X|}.
    \end{align*}
    Returning to bounding $\Pr(E_i)$, we have that 
    \begin{align*}
        \Pr(E_i) &\ge \Pr\left(x_i \in \PVC_\epsilon(\mathcal{N}, \mathcal{D})\right) + \left(1 - \frac{\delta}{|X|}\right)\Pr(x_i \notin \PVC_\epsilon(\mathcal{N}, \mathcal{D})) 
        \ge 1 - \frac{\delta}{|X|}.
    \end{align*}
    Finally, we can conclude that
    $
    \Pr\left(\bigcap_{i=1}^{|X|} E_i\right)  \ge 1- \sum_{i=1}^{|X|} \Pr\left(\neg E_i\right) \ge 1-\delta
    $
    as desired.
\end{proof}

\subsection{Pairwise Discriminative Queries}\label{sec:pairwise}
While Algorithm \ref{alg:find_epsilon_PVC_element} uses min-queries,
it would be much more practically useful to have an algorithm based on pairwise discriminative queries,
which are easier for voters to respond to.
Fortunately, we can achieve this with one simple change to Algorithm \ref{alg:find_epsilon_PVC_element}.

Specifically, we can simulate each min-query in Line \ref{line:discrim} with a simple iterative algorithm to find the least-preferred alternative for a given voter in a linear number of pairwise comparisons:
First, choose an arbitrary ordering of the candidates in $X$ and do a single pairwise query asking voter $i$ to compare the first two alternatives in the ordering. Then, ask voter $i$ to compare the least-preferred alternative from that first comparison to the next alternative in the ordering. Repeat this process, always keeping the least-preferred alternative from each comparison. After iterating through all alternatives in $X$, this is guaranteed to only use $|X|-1$ pairwise queries to find voter $i$'s least favorite alternative among $X$.

\section{Lower Bounds}\label{sec:lower_bounds}
In this section, we show that Algorithm \ref{alg:find_epsilon_PVC_element} is tight in terms of both generative and discriminative queries. To prove this for each of generative and min-queries, we give a specific hard instance of the problem that requires $\tilde{\Omega}(1/\epsilon^2)$ of that type of query in order to find an element of the $\epsilon$-PVC. We then show that the number of pairwise queries in the modification of Section \ref{sec:pairwise} is also tight.

\subsection{Generative and Min-Queries}

In Theorem \ref{thm:lower_samples}, we give a lower bound on the number of generative queries needed in order to identify an element of the $\epsilon$-PVC. Importantly, this result combined with Corollary \ref{cor:suff_many} implies a fundamental gap between the number of queries needed to generate versus identify an element of the $\epsilon$-PVC. Specifically, Corollary \ref{cor:suff_many} says that $\tilde{O}(1/\epsilon)$ generative queries suffice to \emph{generate} an element of the $\epsilon$-PVC with high probability, while Lemma \ref{thm:lower_samples} says that we need at least $\tilde{\Omega}(1/\epsilon^2)$ generative queries to \emph{identify} an element of $\epsilon$-PVC. This highlights a surprising subtlety about generative queries, which is that it can be easier to stumble across a statement in the $\epsilon$-PVC than to certify which statement is in the $\epsilon$-PVC. 

\begin{theorem}\label{thm:lower_samples}
    For any $\delta, \epsilon > 0$, no algorithm can use fewer than $\Omega(\log(1/\delta)/\epsilon^2)$ generative queries and for every distribution $\mathcal{D}$ identify an element of the $\epsilon$-PVC with probability greater than $1-\delta$.
\end{theorem}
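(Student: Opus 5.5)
We prove the bound by a two-point (Le Cam style) indistinguishability argument. The plan is to build two distributions $\mathcal{D}_+$ and $\mathcal{D}_-$ over the same two-element alternative space $\{a,b\}$, with the same voters, such that their $\epsilon$-PVCs are disjoint but they are statistically close. Concretely, take a single voter (or a voter population split evenly, with all voters sharing one ranking, so that discriminative queries reveal nothing useful) who prefers $b$ to $a$. Then $a$ is blocked by the grand coalition with $S=\{b\}$ iff $\mu_{\mathcal{D}}(b) > \epsilon$, and $b$ is never blocked. So a single voter with a single ranking does not separate the instances. Instead we use two voter groups of equal size $n/2$: group 1 prefers $a\succ b$, group 2 prefers $b\succ a$.

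With $\mu(a)=p$, $a$ is blocked by group 2 (with $|T|/n=1/2$) iff $1-p > 1/2+\epsilon$, i.e.\ $p<1/2-\epsilon$. Symmetrically, $b$ is blocked by group 1 iff $p>1/2+\epsilon$. The grand coalition blocks nothing because the groups disagree. Setting $p_\pm = 1/2 \pm 2\epsilon$ (slightly beyond the threshold), under $\mathcal{D}_+$ the $\epsilon$-PVC is exactly $\{a\}$, and under $\mathcal{D}_-$ it is exactly $\{b\}$. We must check the endpoint strictness carefully, using $p_\pm=1/2\pm 2\epsilon$ to get strict inequalities. Preferences are identical in both instances, so discriminative queries give no information distinguishing them. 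The only information comes from the i.i.d.\ Bernoulli$(p_\pm)$ generative samples, plus the algorithm's internal randomness.

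An algorithm succeeding with probability $>1-\delta$ on both instances yields a test distinguishing $\mathrm{Bern}(1/2+2\epsilon)^{\otimes k}$ from $\mathrm{Bern}(1/2-2\epsilon)^{\otimes k}$ with both error probabilities below $\delta$: output ``$+$'' iff it returns $a$. By the Bretagnolle--Huber inequality, the sum of the two errors is at least $\tfrac12 e^{-\mathrm{KL}}$, where $\mathrm{KL}=k\cdot \mathrm{KL}(\mathrm{Bern}(1/2+2\epsilon)\,\|\,\mathrm{Bern}(1/2-2\epsilon)) = O(k\epsilon^2)$ for $\epsilon\le 1/8$, say. Hence $2\delta \ge \tfrac12 e^{-Ck\epsilon^2}$, giving $k \ge \log(1/(4\delta))/(C\epsilon^2) = \Omega(\log(1/\delta)/\epsilon^2)$ for small $\delta$. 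Adaptivity of the number of queries is handled by applying the inequality to the law of the full transcript (using a worst-case query budget $k$).

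The main obstacle is the construction itself: making the two PVCs disjoint while keeping the distributions within $O(\epsilon)$ total-variation-like distance per sample, and ensuring discriminative queries cannot help. The two-balanced-groups construction handles both. The remaining care points are the strict-inequality boundary cases and formally justifying that voter-identical instances make min-queries uninformative, since their answers are determined by the fixed preferences.
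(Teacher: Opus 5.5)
Your argument is correct and is essentially the paper's proof: two distributions on $\{a,b\}$ with an evenly split electorate, per-sample KL divergence $O(\epsilon^2)$, and a Bretagnolle--Huber/Le Cam two-point bound. Your margin $\mu(a)=1/2\pm 2\epsilon$ is a necessary fix rather than a convenience: with the paper's $1/2\pm\epsilon$ the blocking condition becomes $1/2+\epsilon>1/2+\epsilon$, which fails, so both alternatives lie in the $\epsilon$-PVC and the two instances are not separated. Your explicit handling of discriminative queries (identical preferences make them uninformative) and of adaptivity (padding to a worst-case budget) also fills in steps the paper leaves implicit.
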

\begin{proof}
    Consider the following two discrete distributions with support $\{a_1, a_2\}$.
    Distribution $\mathcal{D}_1$ puts  $0.5 + \epsilon$ probability mass on $a_1$ and $0.5 - \epsilon$ probability mass on $a_2$. Distribution $\mathcal{D}_2$ puts  $0.5 - \epsilon$ probability mass on $a_1$ and $0.5 + \epsilon$ probability mass on $a_2$. Suppose that half of the voters in $\mathcal{N}$ have preferences $a_1 \succ a_2$ and the other half of the voters have preference $a_2 \succ a_1$. With this voter population, the $\epsilon$-PVC under distribution $\mathcal{D}_1$ is $\{a_1\}$ while the $\epsilon$-PVC under distribution $\mathcal{D}_2$ is $\{a_2\}$. Therefore, in order to find the $\epsilon$-PVC, we must have sufficient information to distinguish between distributions $\mathcal{D}_1$ and $\mathcal{D}_2$. Next, note that the KL distance between these distributions satisfies $KL(\mathcal{D}_1 || \mathcal{D}_2) \le O(\epsilon^2)$. Therefore, a standard information theory result (stated and proven in Lemma \ref{lemma:cam} for completeness) gives that with fewer than $\Omega(\log(1/\delta)/\epsilon^2)$ generative queries, it is impossible to distinguish between these two distributions with probability greater than $1-\delta$, which implies the desired result.
\end{proof}

A similar construction shows that in the worst case, we also need at least $\tilde{O}(1/\epsilon^2)$ min-queries to distinguish between two populations that are both close to ambivalent between two alternatives. The main difference between this construction and the one in Theorem \ref{thm:lower_samples} is that it is difficult to distinguish between two similar populations rather than two similar distributions.

\begin{theorem}\label{thm:min}
    For any $\delta, \epsilon > 0$, no algorithm can use fewer than $\Omega(\log(1/\delta)/\epsilon^2)$ discriminative queries and for every distribution $\mathcal{D}$ identify an element of the $\epsilon$-PVC with probability greater than $1-\delta$.
\end{theorem}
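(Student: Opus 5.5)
We mirror the proof of \cref{thm:lower_samples}, but now hold the alternative distribution fixed and perturb the voter population, so that the information about which alternative lies in the $\epsilon$-PVC is carried entirely by voter preferences.

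\textbf{Construction.} Fix $\mathcal{D}$ to be uniform on $\{a_1,a_2\}$, with mass $1/2$ each. Define two populations of size $n$ (with $n$ large, divisible as needed):
\begin{itemize}
\item population $\mathcal{N}_1$ has a $(1/2+2\epsilon)$ fraction of voters with $a_1 \succ a_2$ and the rest with $a_2 \succ a_1$;
\item population $\mathcal{N}_2$ is the mirror image.
\end{itemize}

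\textbf{Step 1: the two instances have disjoint $\epsilon$-PVCs.} In $\mathcal{N}_1$, let $T$ be the coalition of voters preferring $a_1$ and take $S=\{a_1\}$. Then
\[ \mu_{\mathcal{D}}(S) = \tfrac12 > 1-\left(\tfrac12+2\epsilon\right)+\epsilon = \tfrac12-\epsilon, \]
so $a_2$ is $\epsilon$-blocked. Now consider $a_1$. Any blocking set must be $\{a_2\}$, with mass $1/2$. The coalition available to block has fraction at most $1/2-2\epsilon$, and
\[ 1-\left(\tfrac12-2\epsilon\right)+\epsilon = \tfrac12+3\epsilon > \tfrac12, \]
so $a_1$ is not blocked. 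Hence the $\epsilon$-PVC of $\mathcal{N}_1$ is $\{a_1\}$, and symmetrically that of $\mathcal{N}_2$ is $\{a_2\}$. Any algorithm succeeding with probability greater than $1-\delta$ on both instances therefore yields a test distinguishing them with error at most $\delta$.

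\textbf{Step 2: each query carries little information.} Voters are accessed as in the paper's model by sampling a uniformly random voter, since the algorithm has no a priori knowledge of $\mathcal{N}$. Any discriminative query (min-query or pairwise) on a set containing both alternatives then reveals exactly one Bernoulli bit. This bit has parameter $1/2+2\epsilon$ under $\mathcal{N}_1$ and $1/2-2\epsilon$ under $\mathcal{N}_2$. Queries on singletons, and generative queries, have identical distributions under the two instances and carry no information. The per-query KL divergence is therefore $O(\epsilon^2)$. By the chain rule, this holds even for adaptive algorithms: the total KL divergence after $q$ queries is $O(q\epsilon^2)$. Applying Lemma \ref{lemma:cam} as in the proof of \cref{thm:lower_samples} shows that $q=\Omega(\log(1/\delta)/\epsilon^2)$ queries are needed to achieve error probability at most $\delta$.

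\textbf{Main obstacle.} The delicate point is formalizing the voter-access model. If the algorithm could query a \emph{specific} named voter repeatedly, or enumerate all voters, it could learn the population exactly. The argument must therefore rest on voters being anonymous and exchangeable to the algorithm. We handle this by randomly relabeling voters: querying any voter not previously queried is equivalent to drawing without replacement. Re-querying the same voter gives no new information, since the answer is deterministic. Drawing without replacement from a population with $n \gg q$ is close in total variation to i.i.d.\ draws, so taking $n$ large reduces this case to the i.i.d.\ Bernoulli setting. Strictly, the slack in the margins also requires $\epsilon<1/8$, say, which is harmless for an asymptotic lower bound.
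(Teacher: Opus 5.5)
Your proposal is correct and follows the paper's proof. Both use a uniform $\mathcal{D}$ on $\{a_1,a_2\}$ and two mirrored voter populations whose $\epsilon$-PVCs are $\{a_1\}$ and $\{a_2\}$, and both reduce the problem to distinguishing two Bernoulli coins via \cref{lemma:cam}. Your two refinements tighten that same route rather than change it. First, the $1/2+2\epsilon$ majority is actually needed: with the paper's $1/2+\epsilon$, the blocking condition $\mu_{\mathcal{D}}(S) > 1-\frac{|T|}{n}+\epsilon$ holds only with equality ($\tfrac12=\tfrac12$), so neither alternative is blocked. Second, your treatment of anonymous voter access, re-queries, and adaptivity via the chain rule justifies the equivalence to Bernoulli testing, which the paper simply asserts.
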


\begin{proof}
    Consider the uniform distribution $\mathcal{D}$ with support $\{a_1, a_2\}$. Consider the following two sets of voters. In $\mathcal{N}_1$, a $(1/2+\epsilon)$ fraction of voters have preference $a_1 \succ a_2$ and a $(1/2-\epsilon)$ fraction of voters have preference $a_2 \succ a_1$. In $\mathcal{N}_2$,   a $(1/2+\epsilon)$ fraction of voters have preference $a_2 \succ a_1$ and a $(1/2-\epsilon)$ fraction of voters have preference $a_1 \succ a_2$. The $\epsilon$-PVC for voters $\mathcal{N}_1$ is just $\{a_1\}$ while the $\epsilon$-PVC for the voters $\mathcal{N}_2$ is $\{a_2\}$. Distinguishing between these two voter populations with discriminative queries is equivalent to distinguishing between a $Bernoulli(1/2 + \epsilon)$ distribution and a $Bernoulli(1/2 - \epsilon)$ distribution. We can again appeal to Lemma \ref{lemma:cam} to get the desired result that no algorithm can use fewer than $\Omega(\log(1/\delta)/\epsilon^2)$ queries to identify an element of the $\epsilon$-PVC with probability higher than $1-\delta$ for both of these settings.
\end{proof}

The proofs of \Cref{thm:lower_samples} and \Cref{thm:min} both rely on the following standard result from information theory, which we state here formally and reprove for completeness.

    \begin{lemma}\label{lemma:cam}
        If two distributions $\mathcal{D}_1$ and $\mathcal{D}_2$ have Kullback-Leibler distance satisfying  $KL(\mathcal{D}_1 || \mathcal{D}_2) \le c\epsilon^2$ for some constant $c$, then no algorithm can achieve minimax misclassification error of less than $\delta$ for these two distributions using fewer than $O(\log(1/\delta)/\epsilon^2)$ samples. 
    \end{lemma}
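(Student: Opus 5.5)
The plan is to prove the standard two-point (Le Cam-type) lower bound, with the sample size entering through KL tensorization. Note that the statement should be read as: fewer than $\Omega(\log(1/\delta)/\epsilon^2)$ samples, i.e.\ $k \le c'\log(1/\delta)/\epsilon^2$ for a suitable constant $c'$, is insufficient.

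Fix any (possibly randomized) test $\psi$ that sees $k$ i.i.d.\ samples and outputs a guess in $\{1,2\}$. Write $P_1=\mathcal{D}_1^{\otimes k}$ and $P_2=\mathcal{D}_2^{\otimes k}$. The minimax error is at least the average error, and the average error equals
\[ \max_j P_j(\psi\ne j) \;\ge\; \tfrac12\bigl(P_1(\psi=2)+P_2(\psi=1)\bigr) \;\ge\; \tfrac12\int \min(dP_1,dP_2). \]

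Step 1 is to lower bound the overlap $\int\min(dP_1,dP_2)$ in terms of KL. I would use the Bretagnolle--Huber inequality,
\[ \int \min(dP_1,dP_2) \;\ge\; \tfrac12 \exp\bigl(-KL(P_1\|P_2)\bigr). \]
If I want to reprove it, the route is:
\begin{itemize}
\item Cauchy--Schwarz gives $\bigl(\int\sqrt{dP_1dP_2}\bigr)^2 \le \int\min \cdot \int\max \le 2\int\min$.
\item Jensen gives $\int\sqrt{dP_1dP_2} = \mathbb{E}_{P_1}\bigl[\sqrt{dP_2/dP_1}\bigr] \ge \exp\bigl(-\tfrac12 KL(P_1\|P_2)\bigr)$.
\end{itemize}
Squaring the second bullet and combining with the first yields the inequality. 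Randomized tests are handled by conditioning on the internal randomness, or equivalently by viewing $\psi$ as a $[0,1]$-valued function.

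Step 2 is tensorization: $KL(P_1\|P_2) = k\,KL(\mathcal{D}_1\|\mathcal{D}_2) \le ck\epsilon^2$. Combining, the error is at least $\tfrac14 e^{-ck\epsilon^2}$. This exceeds $\delta$ whenever $k < \log(1/(4\delta))/(c\epsilon^2)$, which is $\Omega(\log(1/\delta)/\epsilon^2)$ for $\delta<1/8$, say. That is exactly the claimed statement.

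Step 3 addresses adaptivity. In the application to Theorem \ref{thm:min}, the queries are adaptive: the algorithm chooses which voter to query based on past answers. However, each answer there is a Bernoulli($1/2\pm\epsilon$) draw, because querying a random voter gives the same law regardless of history. So the transcript law factorizes, and the chain rule for KL still gives $KL \le k\,c\epsilon^2$. I would remark on this.

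The constants also need checking for the applications. For Bernoulli$(1/2+\epsilon)$ versus Bernoulli$(1/2-\epsilon)$, the KL divergence is at most $\chi^2 = (2\epsilon)^2/(1/4-\epsilon^2) = O(\epsilon^2)$ for $\epsilon\le1/4$.

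The main obstacle is only bookkeeping: the lemma's phrasing says ``$O(\cdot)$'' where $\Omega$ is meant, and the reduction from ``identify a PVC element'' to ``classify the distribution'' must be valid. That reduction holds because the two $\epsilon$-PVCs $\{a_1\}$ and $\{a_2\}$ are disjoint, so any correct identifier yields a correct classifier.
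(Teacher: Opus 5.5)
Your proposal is correct and follows the same route as the paper: tensorize to get $KL(\mathcal{D}_1^{\otimes k}\|\mathcal{D}_2^{\otimes k}) \le ck\epsilon^2$, use Bretagnolle--Huber to bound the overlap $1-\mathrm{TV}$ below by $\tfrac12 e^{-ck\epsilon^2}$, and apply Le Cam's two-point bound to get error at least $\tfrac14 e^{-ck\epsilon^2}$, which exceeds $\delta$ for $k < \log(1/(4\delta))/(c\epsilon^2)$. The paper only cites Bretagnolle--Huber and Le Cam, so your Cauchy--Schwarz/Jensen derivation and your remarks on reading $O$ as $\Omega$ and on adaptivity are additions, not a different argument.
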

    \begin{proof}
         Let $\mathcal{D}_1^n$ and $\mathcal{D}_2^n$ be the product distributions for $n$ samples from $\mathcal{D}_1$ and $\mathcal{D}_2$ respectively. Then by definition, $KL(\mathcal{D}_1^n, \mathcal{D}_2^n) = nc\epsilon^2$. By the Bretagnolle-Huber Inequality, this implies that the $TV(\mathcal{D}_1^n, \mathcal{D}_2^n) \le 1 - \frac{1}{2}e^{-nc\epsilon^2}$. By Le Cam's method, the minimax misclassification error for any classifier distinguishing between $\mathcal{D}_1^n, \mathcal{D}_2^n$ must be lower bounded by $\frac{1}{2}\left(1 - TV(\mathcal{D}_1^n, \mathcal{D}_2^n)\right) \ge \frac{1}{4}e^{-nc\epsilon^2}$. Plugging in $n = \log(1/(4\delta))/(c\epsilon^2)$ gives a misclassification error of at least $\delta$, which is the desired result. 
    \end{proof}
    
\subsection{Pairwise Discriminative Queries}

Theorems \ref{thm:lower_samples} and \ref{thm:min} together imply that Algorithm \ref{alg:find_epsilon_PVC_element} is tight in terms of the number of generative queries and min-queries needed. As discussed in Section \ref{sec:algorithm}, Algorithm \ref{alg:find_epsilon_PVC_element} works by finding an element of the PVC for a set of voters $N$ and alternatives $M$. We also showed that this algorithm can be modified to use $O(nm)$ pairwise queries, where $n = |N|$ and $m = |M|$. In Theorem \ref{thm:find_PVC}, we show that finding the PVC from a set of alternatives $N$ and alternatives $M$ (with sizes $n$ and $m$ respectively) cannot be done in fewer than $\Omega(nm)$ pairwise discriminative queries, which implies that there is no better subroutine than this modification in the worst case. Theorem \ref{thm:find_PVC} and its proof may be of independent interest to those studying the elicitation of voter preferences using pairwise comparisons. The proof below establishes a lower bound of $nm/32$; we present a longer proof of the same result that achieves a tighter lower bound of $nm/2$ in Appendix \ref{sec:longer_proof}. 

\begin{theorem}\label{thm:find_PVC}
	When the sets of alternatives $M$ and voters $N$ are finite, no algorithm can always find an element in the PVC using fewer than $\Omega(mn)$ pairwise discriminative queries.
\end{theorem}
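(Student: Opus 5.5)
We prove the bound with an adversary argument: an adversary answers pairwise queries adaptively so that, until roughly $nm/32$ queries have been made, the information revealed is consistent with profiles in which any alternative the algorithm might output is blocked. Throughout we use the finite PVC condition with the uniform distribution. An alternative $a$ is blocked if some coalition $T$ and set $S$ satisfy $b \succ_i a$ for all $i\in T$, $b\in S$, and $|S|/m > 1-|T|/n$. We will use blocks in which the coalition is a single voter: voter $i$ blocks $a$ whenever strictly more than $m(1-1/n)$ alternatives lie above $a$ in $i$'s ranking. The simplest case is $a$ being $i$'s bottom alternative, since $m-1>m-m/n$ whenever $n>1$. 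More generally, a coalition $T$ of size $t$ blocks $a$ when every member of $T$ ranks $a$ below a common set of more than $m(1-t/n)$ alternatives.

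\textbf{Step 1: the adversary.} Assume $n$ and $m$ are at least a suitable constant, and group the voters into blocks $T$ of size $t=\lceil n/4\rceil$ or so. Order the alternatives so that a block $T$ with $|T|=t$ blocks $a$ exactly when all its members put $a$ within their bottom roughly $tm/n$ positions, below a common top set. The adversary maintains, for each voter, a partial order consistent with all answers given so far. It answers each query $(i,b,c)$ with a fixed default rule: the alternative with the smaller index wins. Within that rule it keeps every voter's comparisons forming a forest of queried pairs. A voter $i$ on whom fewer than $m/8$ queries have been asked has at least $m - m/4$ alternatives that are ``untouched'', meaning they appear in no query about $i$.

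\textbf{Step 2: counting.} If the algorithm makes fewer than $nm/32$ queries, then by Markov's inequality at least $3n/4$ voters have received fewer than $m/8$ queries each. Call these the \emph{light} voters. Now fix any alternative $a$ that the algorithm outputs.

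\textbf{Step 3: completing the profile.} We complete every light voter's ranking consistently with the answers given, as follows. Each light voter places $a$ as low as the revealed comparisons permit. Its untouched alternatives go at the very bottom, except that $a$ itself must sit below a common set $S$. For $S$ we take untouched-by-everyone-light alternatives: the number of alternatives touched by some light voter is large, so instead we choose $S$ per voter as the alternatives above $a$. The real requirement is that $S$ be common to the coalition.

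The cleanest route is a coalition of size $\geq n/2$ (a majority) together with $|S| > m/2$, with $a$ below all of $S$ for every member. This is the hard part. A union of up to $nm/32$ touched pairs can touch everything, so $S$ cannot be chosen globally untouched. Instead we use that each light voter's revealed relations involving $a$ are few. A revealed relation $a \succ_i b$ forces $b$ below $a$, and the voter has fewer than $m/8$ of these. Take $S$ to be the alternatives that have been revealed below $a$ for at most a small fraction of the light voters. By averaging, $|S| \ge m - \frac{(m/8)\cdot n}{n/4} \cdot$(const), which we tune to exceed $m/2+1$. Then take as the coalition the light voters having no revealed $a\succ_i b$ with $b\in S$, also sized by averaging. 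Each such voter admits a completion placing $a$ below all of $S$. This is possible because the revealed comparisons form an acyclic relation in which nothing forces $a$ above $S$: we take a linear extension in which $a$ is pushed as low as possible.

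\textbf{Main obstacle.} The step I expect to fail as written is the transitivity issue in Step 3. Revealed chains such as $a\succ_i c \succ_i b$ force $a$ above $b$ even without a direct query. The adversary's answering rule must therefore prevent long forced chains through $a$. I would handle this by having the adversary answer consistently with a hidden ranking in which each voter's placement of alternatives is decided lazily. An alternative's position is fixed only when it is queried, and queried alternatives are placed in the top half. Then every revealed fact about a light voter places queried items above unqueried ones, and $a$ can always be dropped into the bottom region. This works unless $a$ itself was queried for many voters, which happens for at most a quarter of the voters by Markov's inequality again. The constants ($1/32$) come from balancing these Markov and averaging losses so that the coalition has size $\ge n/2$ and $|S|>m/2$, giving $|S|/m > 1-|T|/n$. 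This yields the $\Omega(nm)$ bound.
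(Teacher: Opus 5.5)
There is a genuine gap. Your adversary never reacts to queries piling up on a single alternative, and an adversary of that kind is beaten with $O(m+n)$ queries.

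Take $n \mid m-1$, split $M\setminus\{1\}$ into blocks $B_1,\dots,B_n$ of size $(m-1)/n$, and ask each voter $i$ to compare alternative $1$ with every $b\in B_i$. Under ``smaller index wins'' every answer is $1\succ_i b$. So for any coalition $T$, a set $S$ lying above $1$ for all of $T$ must avoid $\{1\}\cup\bigcup_{i\in T}B_i$, which gives $|S|\le (m-1)(1-|T|/n)\le m(1-|T|/n)$. Alternative $1$ is therefore in the PVC of \emph{every} profile consistent with the answers, after only $m-1$ queries.

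The lazy top-half variant does not fix this on its own. With the natural placement rules (new items placed above, or below, those already placed), the algorithm can order its queries so that a fixed alternative ends up above its whole block for every voter.

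The sentence meant to dismiss this case is false. Markov's inequality bounds how many alternatives (or voters) are heavily queried. But the output $a$ is chosen by the algorithm after seeing the answers, and it can be exactly a heavily queried alternative: already $n$ queries involve $a$ for every voter. (Also, the adversary cannot keep each voter's queried pairs a forest, since the algorithm chooses the pairs.)

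Step 3's averaging also fails on its own. A voter is excluded from your coalition as soon as a \emph{single} element of $S$ is revealed below $a$. Knowing that each $b\in S$ lies below $a$ for few voters therefore gives no control once $|S|>m/2$. In the block example every $b$ qualifies for $S$, yet every voter is excluded.

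Transitivity is not the real obstacle. If $D_i(a)$ is the transitive down-set of $a$ in voter $i$'s revealed order, a completion putting all of $S$ above $a$ exists exactly when $S\cap D_i(a)=\emptyset$.

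What is missing is an adversary that is adaptive per alternative, which is how the paper proceeds:
\begin{itemize}
\item Take $m=kn+1$, so that a single voter vetoes its bottom $k$ alternatives.
\item Count the queries involving each alternative. As soon as an alternative reaches $n/4$ of them, commit it to the bottom $k$ of some still-active voter who can accommodate it.
\item With fewer than $nm/32$ queries, at most $m/4$ alternatives get committed, and these are vetoed outright.
\item Each remaining alternative has been asked about for fewer than $n/4$ voters, while roughly $23n/32$ voters still have free bottom slots. So it can still be placed at the bottom of some active voter never asked about it, and no alternative can be certified.
\end{itemize}
Working with single-voter vetoes also removes the need for a common $S$ across a large coalition, which is exactly what your averaging could not deliver.
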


\begin{proof}
    Let $m = kn+1$ for any $k \ge 1$. By definition, any alternative $a$ in the PVC must not be ranked in the last $k$ alternatives by any voter. We will show that any algorithm $ALG$ needs at least $nm/32$ queries to find any alternative not ranked in the last $k$ by any voter, which implies we need at least $nm/32$ queries to find any alternative in the PVC.

    Consider the following adversary. The adversary answers pairwise queries by $ALG$ arbitrarily (and consistently with previous queries), and for every alternative $a$ keeps track of a counter $c_a$ which counts the number of pairwise queries involving alternative $a$ up to this point. The adversary also keeps track of a set $X$ of ``active'' voters which is initialized to be $\mathcal{N}$. As soon as the counter $c_a$ for an alternative $a$ hits $n/4$, the adversary arbitrarily chooses one of the remaining ``active'' voters $i$ for which $a$ can be in the bottom $k$, and assigns $a$ to be one of voter $i$'s $k$ worst ranked alternatives. The adversary then truthfully tells $ALG$ that $a$ is ranked in one of the last $k$ spots for voter $i$. Similarly, if the queries answered so far force an alternative $a$ to be ranked in the last $k$ of a voter $i$, then $a$ is also assigned to voter $i$. If voter $i$ has $k$ distinct alternatives assigned to voter $i$'s last $k$ spots, then voter $i$ is no longer active and is removed from the set of active voters.

    Now suppose that $ALG$ has asked fewer than $nm/32$ discriminative queries. Partition the full set of alternatives $M$ into $M_1$ and $M_2$, such that $M_1$ contains all alternatives that have been involved in at least $n/4$ discriminative queries up until this point, and $M_2$ contains all alternatives that have been involved in strictly less than $n/4$ discriminative queries. Note that for any alternative in $M_1$, the adversary has already placed that alternative in the last $k$ spots of at least one voter, and therefore no alternative in $M_1$ is in the PVC.

    To conclude the proof, we must show that $ALG$ cannot determine whether or not any alternative in $M_2$ is in the PVC. Because $ALG$ has asked fewer than $nm/32$ queries so far and each query involves exactly two alternatives, the number of alternatives involved in at least $n/4$ discriminative queries is at most $m/4$. This implies that $|M_1| \leq m/4$. An alternative $a$ is only assigned to a voter if either it is in $M_1$ or if the queries for a specific voter $i$ force $a$ to be in $i$'s bottom $k$ spots. For the latter case, there must have been at least $\frac{m-1}{k} = n$ queries per such assigned $a$. Therefore, there are at least $nk - m/4 - \frac{nm/32}{n} \ge \frac{23kn - 9}{32}$ remaining spots left in the bottom $k$ positions of the active voters, which implies there are at least $23n/32 - 1$ active voters remaining. For any alternative $a \in M_2$, we know that strictly less than $n/4$ voters have been queried about $a$. Therefore, since there are at least $23n/32 -1$ active voters remaining, there must be at least $23n/32- 1 - n/4 > 0$ active voters (for sufficiently large $n$) who have not yet been queried about alternative $a$. Importantly, this means that $a$ may be in one of the last $k$ spots of these voters' rankings, which further implies that $ALG$ is unable to determine whether $a$ is in the PVC. This is true for every alternative $a \in M_2$, so we can conclude that $ALG$ cannot find any element of the PVC with fewer than $nm/32$ queries.
\end{proof}

\section{Interpretation of $\mathcal{D}$}\label{sec:distr}

In this section, we further discuss the significance of the distribution $\mathcal{D}$. One way to think about $\mathcal{D}$ is to view it as a weighting scheme that puts more weight on more ``important'' alternatives. Traditional voting theory does not explicitly assume such a weighting scheme, but does commonly assume neutrality. A voting rule satisfies neutrality if it is indifferent to the identity of the alternatives, in other words the outcome does not change if the alternatives are permuted. Therefore, any voting rule that satisfies neutrality must implicitly weight the alternatives equally, i.e. use a uniform weighting. Similarly, when there is a compact space of infinite alternatives, any voting rule satisfying neutrality must implicitly use a distribution $\mathcal{D}$ that is the uniform distribution over all alternatives.

 All of our theoretical results (both positive and negative) apply when $\mathcal{D}$ is the uniform distribution. However, when there are infinite alternatives, the uniform distribution may not always be the correct choice. As an example, consider the alternative space $\mathcal{M} = [0,1]$ with $100$ voters. $99$ of these voters only like alternative $0$ and one voter only likes alternative $1$. In this example, despite infinite alternatives existing in the interval $[0,1]$, the only two alternatives that should be relevant are $\{0,1\}$, which have measure $0$ under the uniform distribution. Unfortunately, the natural generalization of many voting rules (e.g., Borda count) ignore sets of size $0$, and therefore may even select an alternative not from $\{0,1\}$. Similarly, when computing the $\epsilon$-PVC, both of the alternatives in  $\{0,1\}$ with measure $0$ could be vetoed depending on tie-breaking. This is clearly an undesirable outcome, and can be easily avoided by choosing a more appropriate distribution $\mathcal{D}$.

Motivated by the example above, one way to choose the distribution $\mathcal{D}$ is to put more weight on alternatives that are more ``important''. This means that the distribution $\mathcal{D}$ is implicitly capturing the cardinal preferences of the voters by putting higher weight on alternatives for which voters have higher utilities. A common choice for such a distribution is the Bradley-Terry-Luce (BTL) model, where the probability of selecting an alternative is proportional to a score for that alternative. Suppose each player $i$ has a utility function $u_i : \mathcal{M} \rightarrow [0,1]$ such that $\int_{\mathcal{M}} u_i(m)dm = 1$. Then we could let $\mathcal{D}$ be the mixture of BTL models over all voters, i.e. $\mu_\mathcal{D}(m) = \frac{1}{|\mathcal{N}|}\sum_{i \in \mathcal{N}} u_i(m)$. This is equivalent to a BTL model with scores equal to the sum of voter utilities. 

Happily, choosing $\mathcal{D}$ to be this BTL model also gives us an intuitive interpretation for the $\epsilon$-PVC. If $\mathcal{D}$ is this BTL model, then the size of a coalition necessary to veto an alternative is proportional to the utilitarian social welfare (total population utility) for that alternative. This means that a larger coalition of voters is needed to veto an alternative that has higher utility among the voters. We formalize this idea below in the extreme case .

\begin{theorem}
    If $\mathcal{D}$ satisfies $\mu_{\mathcal{D}}(m) = \frac{1}{|\mathcal{N}|}\sum_{i \in \mathcal{N}} u_i(m)$ then the following holds for any subset $S \in \mathcal{M}$.  If at least an $x+\epsilon$ fraction of voters rank every alternative in $S$ above every alternative not in $S$ and the total utility of $S$ is greater than $1-x$ fraction of the total utility of all alternatives,  then the $\epsilon$-PVC must be a subset of $S$.
\end{theorem}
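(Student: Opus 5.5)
The plan is to argue directly from Definition~\ref{def:epsilon_pvc}. Fix an alternative $a \notin S$; I will show that $a$ is $\epsilon$-blocked, so every element of the $\epsilon$-PVC must lie in $S$.

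\textbf{Coalition and blocking set.} Let $T \subseteq \mathcal{N}$ be the voters who rank every alternative in $S$ above every alternative outside $S$. By hypothesis, $|T|/n \ge x+\epsilon$. Take the blocking set to be $S$ itself. Since $a \notin S$, each voter $i \in T$ satisfies $b \succ_i a$ for all $b \in S$, so the preference condition of Definition~\ref{def:epsilon_pvc} holds immediately.

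\textbf{Mass condition.} Next I translate the utility hypothesis into a bound on $\mu_{\mathcal{D}}(S)$. The normalization $\int_{\mathcal{M}} u_i = 1$ gives total utility $\sum_i \int_{\mathcal{M}} u_i = n$. So the statement that $S$ captures more than a $1-x$ fraction of total utility says
\[
\sum_i \int_S u_i > (1-x)\,n ,
\]
and dividing by $n$ gives $\mu_{\mathcal{D}}(S) > 1-x$. The required blocking inequality is $\mu_{\mathcal{D}}(S) > 1 - |T|/n + \epsilon$. Using $|T|/n \ge x+\epsilon$, the right-hand side is at most $1 - x - \epsilon + \epsilon = 1-x$. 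Combining the two gives the inequality, so $a$ is $\epsilon$-blocked by $(T,S)$, which completes the argument.

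\textbf{Main obstacle.} The logic itself is routine. The delicate point is bookkeeping: reading "an $x+\epsilon$ fraction of voters" as $|T|/n \ge x+\epsilon$, and making sure the strict inequality comes from the utility hypothesis rather than the coalition size. If the intended reading is weak at both ends, I would handle it by noting that strictness is needed in only one place, and the strict "greater than $1-x$" supplies it.

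I would also remark on interpretation. The density $\mu_{\mathcal{D}}(m)=\frac1n\sum_i u_i(m)$ is automatically a probability measure because each $u_i$ integrates to $1$. So "fraction of total utility" and $\mu_{\mathcal{D}}$-mass coincide exactly, which is the whole content of the theorem.
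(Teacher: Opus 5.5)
Your proposal is correct and follows essentially the same approach as the paper. The paper also takes $T$ to be the coalition ranking $S$ above everything outside $S$, uses $S$ itself as the blocking set against each $a \in \mathcal{M}\setminus S$, and concludes with the chain $\mu_{\mathcal{D}}(S) > 1-x = 1-(x+\epsilon)+\epsilon \ge 1-\frac{|T|}{n}+\epsilon$; your explicit derivation of $\mu_{\mathcal{D}}(S) > 1-x$ from the normalization $\int_{\mathcal{M}} u_i = 1$, which the paper leaves implicit, is handled correctly.
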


\begin{proof}
Let $T$ be the coalition of voters that prefers $S$ to every alternative not in $S$ and satisfies $\frac{|T|}{n} \ge x+\epsilon$. We will show that the coalition $T$ with alternatives $S$ can veto every alternative $a \in \mathcal{M} \setminus S$. This follows directly from the fact that $\mu_{\mathcal{D}}(S) = \frac{1}{|\mathcal{N}|}\sum_{i \in \mathcal{N}} u_i(m) > 1-x \ge 1 - (x+\epsilon) + \epsilon  \ge 1 - \frac{|T|}{n}+\epsilon$ which satisfies Definition \ref{def:epsilon_pvc}.
\end{proof}

In the discrete setting, we have the following corollary:
\begin{corollary}\label{axiom:weak_plurality}
    If $\mathcal{D}$ satisfies $\mu_{\mathcal{D}}(m) = \frac{1}{|\mathcal{N}|}\sum_{i \in \mathcal{N}} u_i(m)$ then the following holds. If at least an $x+\epsilon$ fraction of the population ranks alternative $a \in \mathcal{M}$ first and the utilitarian social welfare of $a$ is greater than $x$ fraction of the total utility of all alternatives, then $a$ is the only alternative in the $\epsilon$-PVC.
\end{corollary}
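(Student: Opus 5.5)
The plan is to derive the corollary from the preceding theorem by taking the singleton $S=\{a\}$, and then to use \cref{prop:epsilon_size} to turn ``$\epsilon$-PVC $\subseteq\{a\}$'' into ``$\epsilon$-PVC $=\{a\}$''. One numerical step does not go through as the statement is worded, and I flag it below.

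\textbf{Step 1 (coalition).} Let $T$ be the set of voters who rank $a$ first. By hypothesis $|T|/n \ge x+\epsilon$. For each $i\in T$ and each $b\neq a$ we have $a\succ_i b$. So $T$ ranks every alternative of $S=\{a\}$ above every alternative outside $S$, which is the first hypothesis of the theorem.

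\textbf{Step 2 (blocking every $b\neq a$).} In the discrete setting, the measure of $a$ is $\mu_{\mathcal{D}}(\{a\})=\frac{1}{|\mathcal{N}|}\sum_i u_i(a)$. Since each $u_i$ has total mass $1$, this is exactly $a$'s share of total utility. To show that $T$ $\epsilon$-blocks $b$ with blocking set $\{a\}$ (\cref{def:epsilon_pvc}), we need
\[
\mu_{\mathcal{D}}(\{a\}) \;>\; 1-\tfrac{|T|}{n}+\epsilon .
\]
Since $1-\frac{|T|}{n}+\epsilon \le 1-x$, it suffices to have $\mu_{\mathcal{D}}(\{a\})>1-x$.

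\textbf{Main obstacle.} The hypothesis as worded gives only $\mu_{\mathcal{D}}(\{a\})>x$. This implies $\mu_{\mathcal{D}}(\{a\})>1-x$ exactly when $x\ge 1/2$, so I would first verify the inequality in that regime.

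For $x<1/2$, I do not expect the statement as worded to hold. Take $x=\epsilon=0.1$. Suppose $20\%$ of voters rank $a$ first and $a$ carries just over $10\%$ of total utility. Suppose the remaining $80\%$ rank some $b$ first, with $b$ carrying most of the utility. Then $a$ is itself blocked by those $80\%$, using $\{b\}$ as the blocking set. So on this route the argument must either restrict to $x\ge1/2$ or use the welfare condition ``share $>1-x$'' coming from the theorem. I would settle this point before writing the proof, since the rest is mechanical.

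\textbf{Step 3 (non-emptiness).} Once every $b\neq a$ is shown to be $\epsilon$-blocked, the $\epsilon$-PVC is contained in $\{a\}$. By \cref{prop:epsilon_size} the $\epsilon$-PVC has $\mu_{\mathcal{D}}$-mass at least $\epsilon>0$, so it is nonempty. Hence it equals $\{a\}$, which gives the conclusion.
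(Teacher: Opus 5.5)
Your route is the intended one: the paper gives no separate proof, and the corollary is meant to be the preceding theorem specialized to $S=\{a\}$.

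That specialization needs $a$'s welfare share to exceed $1-x$, so your diagnosis is correct. ``Greater than $x$'' in the statement is a slip for ``greater than $1-x$''. The two conditions agree at $x=1/2$, which is the case the paper illustrates right afterwards, and ``$>x$'' suffices whenever $x\ge 1/2$. Your $x=\epsilon=0.1$ example is a genuine counterexample to the statement as worded: $\mu_{\mathcal{D}}(\{b\})>0.3=1-0.8+\epsilon$, so the $80\%$ coalition blocks $a$. With the corrected hypothesis, your Steps 1--3 go through.

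Step 3 proves more than the paper does, since the theorem only gives containment in $\{a\}$. Note, though, that \cref{prop:epsilon_size} says nothing at $\epsilon=0$. A cleaner way to get non-emptiness for every $\epsilon\ge 0$ is to check directly that $a$ is unblocked. Any coalition blocking $a$ must avoid the $\ge x+\epsilon$ fraction that ranks $a$ first, so its blocking set would need mass $>x+2\epsilon$. But all alternatives other than $a$ together have mass $<x$.
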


For intuition, suppose $x = 1/2$ and $\mathcal{D}$ is the BTL distribution (as above). In this case, Corollary \ref{axiom:weak_plurality} says that if there is any alternative $a$ such that a majority of voters rank $a$ first and $a$ has more than half of the total population utility, then $a$ is the only alternative in the $\epsilon$-PVC. Similarly, if at least a $1/2+\epsilon$ fraction of voters have utility $1$ for $a$ and utility $0$ for every other alternative, then $a$ is the only alternative in the $\epsilon$-PVC. Examples such as these demonstrate how the $\epsilon$-PVC combines both ordinal and cardinal preferences when $\mathcal{D}$ is the utility-based BTL model.

\section{Experimental Results}\label{sec:experimental-results}

We now analyze the $\epsilon$-PVC of synthetically generated preference data, modeling the preferences of individuals over opinion statements on various policy issues.
Our first experimental goal is to verify that our algorithm performs well in practical settings. This complements our asymptotic analysis in \cref{sec:algorithm}, demonstrating that our algorithm also performs well at small sample sizes. 
Our second goal
is to measure how often popular voting rules choose alternatives that are in the $\epsilon$-PVC. Our third goal is to understand how well LLMs can generate statements in the $\epsilon$-PVC given different generative capabilities and types of information. 

For each experiment, we first use the persona dataset of \citet{castricato2024persona_reproducible_testbed_pluralistic_alignment} to select the 100 distinct voters in $\mathcal{N}$.
We then generate a different alternative for each voter for a total of $100$ alternatives.
We let $\mathcal{D}$ be the uniform distribution over these alternatives, to capture a distribution 
defined by
first choosing a random voter, and then generating a statement based on that user's preferences, as discussed in \cref{sec:model}.

As in our theoretical model, the voting rules in our experiments only have access to a random subset of alternatives and voters, specifically 20 of each.
Of course, practical situations may have more than 100 voters and alternatives, and one would want to sample as many as possible, but this small-scale experiment allows us to run many experiments with LLMs at reasonable cost.
We compute the critical $\epsilon$ with respect to all $100$ alternatives and voters to measure how well the alternative selected by each voting rule does at finding common ground for the overall population.
Recall that the critical $\epsilon$ (Definition \ref{def:crit_eps}) measures the smallest $\epsilon$ for which the alternative is in the $\epsilon$-PVC.
Therefore, a method that is better at finding common ground will choose alternatives with critical $\epsilon$ close to $0$ while a method that is worse at finding common ground will choose alternatives with relatively large critical $\epsilon$.
\footnote{We open source our implementation of various \(\epsilon\)-PVC computations at \href{https://github.com/jeqcho/pvc-toolbox}{PVC Toolbox}. The pip-installable library is performance-optimized with a max-flow algorithm compiled with C through SciPy \citep{2020SciPy-NMeth}. Furthermore, we release the codebase for our experiments at this \href{https://github.com/jeqcho/single-winner-generative-social-choice}{GitHub repository}.}

\subsection{Methodology}
\begin{figure}
    \centering
    \includegraphics[width=\linewidth]{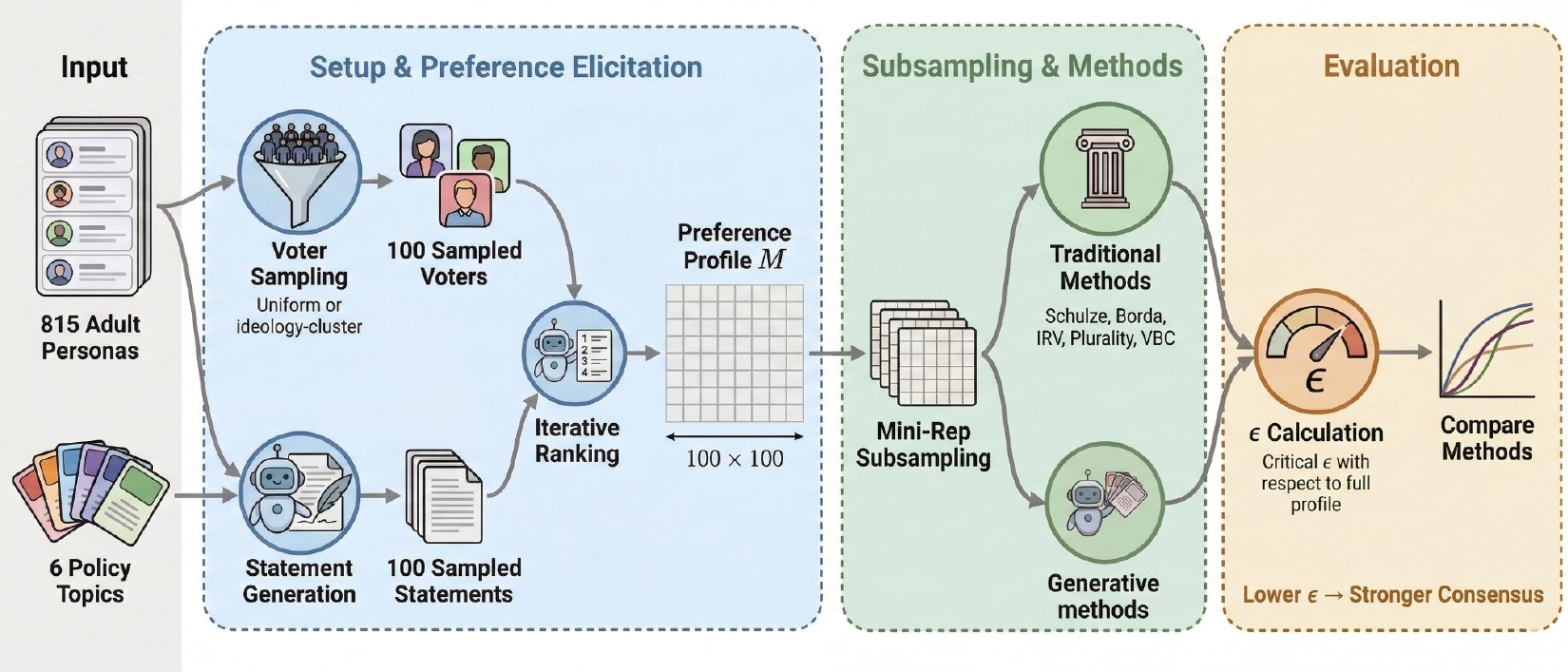}
    \caption{Our experiment setup. We generate statements conditioned on synthetic persona, and build preference profiles from voters conditioned on synthetic persona. We then evaluate various voting rules, including those powered by LLMs, and compare their critical epsilons.}
    \label{fig:experiment_flow_horizontal}
\end{figure}

Our experiment setup is illustrated in \Cref{fig:experiment_flow_horizontal}. For each of six topics, we perform 10 replications, each with 100 voters and 100 alternatives drawn from the persona dataset.
We then obtain a ranking of the generated alternatives for each voter to form the overall preference profile.
From there, we subsample to get 20 voters and 20 statements yielding a $20\times 20$ preference profile, which we do 4 times per replication. Finally, various voting rules are run on this $20\times 20$ preference profile, and we compare the critical epsilons of the selected alternatives (with respect to the original $100\times 100$ preference profile). We elaborate on the details of each step of the experiment below.

\paragraph{Topics} We originally formulated 13 topics covering contemporary political issues (\Cref{subsubsec:long-list-of-topics}). To study if certain voting rules do better or worse on polarized or less polarized questions, we rate each topic by degree of polarization.
For each topic, we generate 100 statements and sample 100 voters independently to assign a Likert score from 0-10 to each statement indicating how much they agree with that statement (\Cref{fig:likert_histograms}). We then compute the mean Likert score and take its absolute difference from 5 to measure how strongly voters feel about different topics, i.e. how polarizing topics are. Finally, we selected the top three and bottom three topics by this metric.

\paragraph{Personas} We use the 1,000 synthetic personas generated by \citet{castricato2024persona_reproducible_testbed_pluralistic_alignment}. Their choice of personas is informed by the US census and intended to be ideologically and demographically diverse. Each persona contains comprehensive personal characteristics, such as race, occupation, and how they spend their personal time. An example of a persona can be found in \Cref{subsec:example-persona}. We filter the personas to those above the age of 18, thus obtaining 815 adult personas.

\paragraph{Alternatives} For each persona and topic, we generate a representative statement from that persona for that topic by querying an LLM (specifically, \verb|gpt-5-mini|). We explore other methods to generate alternatives in Appendix \ref{subsec:alternative-methods-to-generate-alternatives}. 

\paragraph{Preference Profiles}\label{para:preferences-profile} We construct preference profiles by asking each voter (powered by \verb|gpt-5-mini| and conditioned on a persona) to sequentially rank their top 10 and bottom 10 statements until all 100 statements are ranked. We term this approach \emph{iterative ranking} and discuss further details in \Cref{subsec:building-preference-profiles}. Each preference profile consists of 100 voters and 100 statements.

\paragraph{Calculating Critical Epsilons} We compute the critical epsilons for each statement using \Cref{alg:calculate_critical} (where the critical $\epsilon$ is with respect to the full $100\times 100$ preference profile and the uniform distribution over these alternatives).

\subsection{Voting Rules}\label{sec:social_choice_fns}

In this section, we compare the critical epsilon of the statements selected by various voting rules. Veto-By-Consumption (VBC) \citep{ianovski_computing_2023}, like Algorithm \ref{alg:find_epsilon_PVC_element}, finds an element of the PVC for the subsampled preference profile.
We then compare VBC to other standard voting rules, including Borda count, Schulze \citep{Schulze11}, Instant Runoff Voting (IRV), and plurality. We also include a baseline that is the critical $\epsilon$ of a uniform randomly chosen statement among the $100$. \Cref{fig:cdf_traditional_group1} shows the cumulative distribution function (CDF) of critical epsilons across all iterations for the three controversial topics, and \Cref{tab:mean-critical-epsilon-voting-method-uniform} shows the mean critical epsilons for all six topics.
CDF graphs for the other topics and additional figures can be found at \Cref{subsec:detailed-results-epsilons}. Across topics there is a consistent pattern: the voting rules ranked in increasing order of critical epsilons are:
\begin{equation}\label{eq:vbc-beats-plurality}
    \text{VBC} < \text{Borda} < \text{Schulze} < \text{IRV} < \text{Plurality}.
\end{equation}

As expected from our theoretical results, even with access to only a small number of voters and alternatives, finding an element of the PVC for the subset gives an alternative with very low critical $\epsilon$ for the full sets. The critical $\epsilon$ for VBC is almost always $0$, with mean consistently less than $0.001$ across topics. This supports using methods for finding elements of the PVC in practical situations even without having access to full information.

Unsurprisingly, the other voting rules select alternatives that have higher critical $\epsilon$. For example, Schulze (used by the Habermas machine) selects alternatives with an order of magnitude higher critical epsilon than VBC. Perhaps more surprisingly, plurality consistently selects alternatives with higher critical $\epsilon$ than just choosing a random alternative. The social choice function that consistently does the best after VBC is Borda count. In practice, Borda count could be a good compromise as a scoring rule that is easily computable, interpretable, and also chooses alternatives with small critical $\epsilon$.

In addition to these synthetic experiments for preferences over textual statements, we also tested how well each of these voting rules perform on preference datasets from Preflib~\citep{MW13}, without subsampling, i.e., when the voting rule has full access to voters and alternatives.
In these real-world scenarios, we observe the same pattern of \Cref{eq:vbc-beats-plurality}, with Borda count consistently having the lowest critical $\epsilon$ among the standard voting rules and Plurality consistently having the highest. See \Cref{subsec:bridgingness-of-voting-methods-in-real-election-instances} for more details.

\begin{figure}[!htb]
    \centering
    \includegraphics[width=\linewidth]{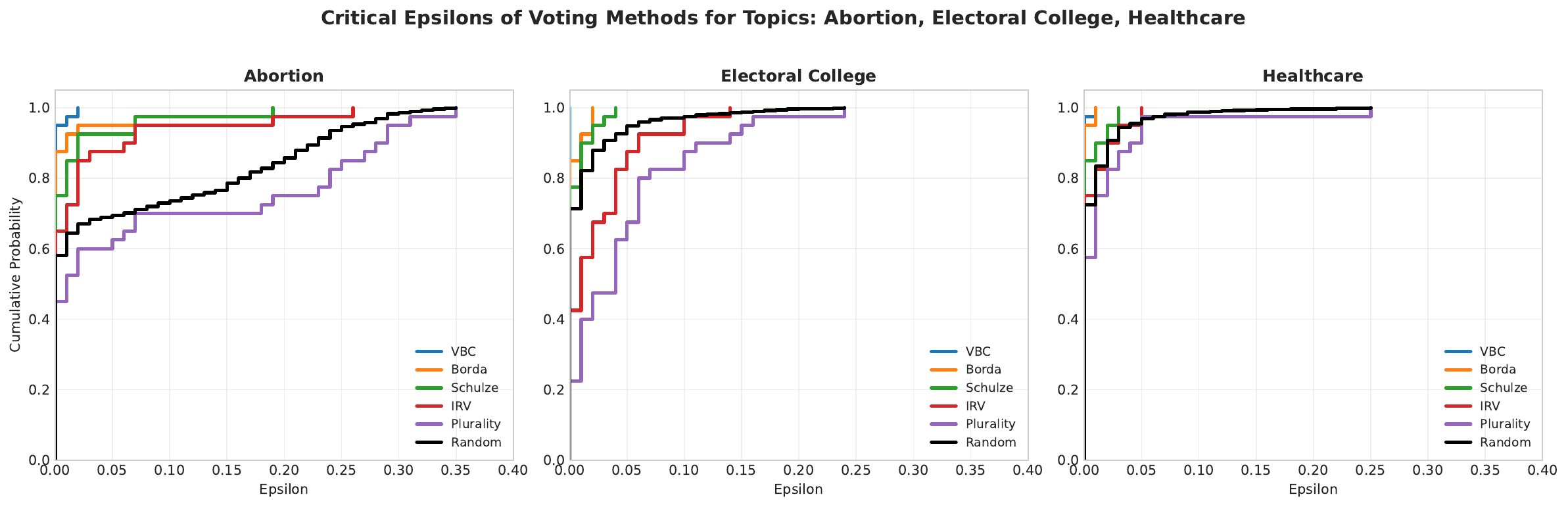}
    \caption{CDF of critical epsilons for the chosen alternatives of various voting rules for the three polarizing topics. The smaller the epsilons (the closer the line is to the top-left corner), the better at finding common ground the voting method is.}
    \label{fig:cdf_traditional_group1}
\end{figure}

\begin{table}[!htb]
\centering
\caption{Mean critical epsilons for various voting rules across various topics. The smallest epsilon for each topic is bolded.}
\label{tab:mean-critical-epsilon-voting-method-uniform}
\resizebox{\columnwidth}{!}{
    \begin{tabular}{lcccccc}
    \toprule
    Method & Abortion & Electoral College & Healthcare & Policing & Environment & Trust in Institutions \\
    \midrule
    VBC & \textbf{0.0008} & \textbf{0.0000} & \textbf{0.0003} & \textbf{0.0003} & \textbf{0.0000} & \textbf{0.0014} \\
    Borda & 0.0075 & 0.0023 & 0.0005 & 0.0015 & 0.0014 & 0.0025 \\
    Schulze & 0.0108 & 0.0040 & 0.0030 & 0.0040 & 0.0033 & 0.0033 \\
    IRV & 0.0203 & 0.0233 & 0.0063 & 0.0133 & 0.0086 & 0.0050 \\
    Plurality & 0.0865 & 0.0460 & 0.0158 & 0.0268 & 0.0142 & 0.0128 \\
\addlinespace
Random & 0.0609 & 0.0108 & 0.0082 & 0.0103 & 0.0108 & 0.0072 \\
    \bottomrule
    \end{tabular}
}
\end{table}

\subsection{Generative Voting Rules}\label{subsec:comparing-with-chatgpt-methods}

In this section we examine \emph{generative voting rules}, which do not have to select among the existing alternatives and instead can use an LLM-generated new alternative. We consider several different generative voting rules that each use different amounts of information about the problem to generate common-ground statements:
\begin{itemize}
    \item GPT-Blind: the model is only given the topic and no information about voters or alternatives
    \item GPT-Synthesize: the model is given the topic and the 20 alternatives.
    \item GPT-Synthesize+Rankings: the model is given the topic, the 20 alternatives, and the $20\times 20$ preference profile.
    \item GPT-Synthesize+Personas: the model is given the topic and the persona descriptions of the 20 voters.
\end{itemize}
To keep the values comparable, we do not compute the critical epsilon of a generated statement with respect to the uniform distribution over 101 statements.
Instead, we continue to consider the uniform distribution over the 100 original statements, and treat the generated statement as having zero probability mass in $\mathcal{D}$.

\Cref{tab:mean-critical-epsilon-generative-methods-generative} shows the mean critical $\epsilon$ for these different generative voting rules. From this table, we can see that the generative voting rules do generate alternatives with smaller critical $\epsilon$ than many standard voting rules like Schulze, IRV, and Plurality. VBC and Borda both slightly outperform the generative voting rules despite the fact that the generative methods can produce their own alternatives. Overall, the generative voting rules choose alternatives with relatively low critical $\epsilon$, indicating that what an LLM believes represents common ground is well aligned with the concept of the PVC.

\begin{table}[!htb]
\centering
\caption{Mean critical epsilons for VBC, generative voting rules and random baselines across various topics. Smallest epsilon for each topic is bolded. Under ``Random Insertion,'' we sample a random statement from the 815 global statements that are not in the $100\times 100$ preference profile and use the same insertion program that we use for the generative methods.}
\label{tab:mean-critical-epsilon-generative-methods-generative}
\resizebox{\columnwidth}{!}{
\begin{tabular}{lcccccc}
\toprule
Method & Abortion & Electoral College & Healthcare & Policing & Environment & Trust in Institutions \\
\midrule
VBC & \textbf{0.0008} & \textbf{0.0000} & \textbf{0.0003} & \textbf{0.0003} & \textbf{0.0000} & \textbf{0.0014} \\
\addlinespace
GPT-Blind & 0.0103 & 0.0093 & 0.0080 & 0.0045 & 0.0085 & 0.0200 \\
GPT-Synthesize & 0.0055 & 0.0108 & 0.0090 & 0.0128 & 0.0168 & 0.0238 \\
GPT-Synth+Rank. & 0.0030 & 0.0053 & 0.0038 & 0.0025 & 0.0080 & 0.0095 \\
GPT-Synth.+Pers. & 0.0030 & 0.0060 & 0.0040 & 0.0040 & 0.0058 & 0.0105 \\
\addlinespace
Random Insertion & 0.0700 & 0.0100 & 0.0108 & 0.0100 & 0.0073 & 0.0110 \\
Random & 0.0609 & 0.0108 & 0.0082 & 0.0103 & 0.0108 & 0.0072 \\
\bottomrule
\end{tabular}

}
\end{table}

In addition to generative LLM-based methods, we also evaluate LLM-based methods that are still restricted to choosing one of the original $100$ alternatives. We discuss these methods and results in more detail in \Cref{subsec:selector}.

\paragraph{Clustered Voter Populations}

In the previous section, we showed that the generative voting rules are able to generate alternatives with very low critical $\epsilon$, and the low numbers make it difficult to distinguish between the generative and non-generative methods. As LLMs are post-trained to be aligned to the general population, we hypothesize that generative voting rules will struggle in elections where the voter base is skewed to a particular ideology. To investigate this, we partition the 815 adult personas into 431 progressives and 255 conservatives by keyword filtering, with 129 unmatched personas (see \Cref{subsec:details-on-clustered-voters} for details). We plot our results for conservative voters in \Cref{fig:ori_cdf_generative_group1} and report the mean critical epsilon in \Cref{tab:mean-critical-epsilon-generative-methods-conservative}. We find that generative voting rules degraded significantly compared to \Cref{tab:mean-critical-epsilon-generative-methods-generative} and sometimes perform worse than the random baseline, while VBC maintains its performance of having most critical epsilons close to $0$. 

We also find that providing more information to the generative voting rules helps them significantly. Interestingly, providing personas of the voters is the most helpful to the generative voting rules, but with this information the mean critical epsilon is still 10 times larger than that of VBC. Due to space constraints, we report the results for progressive voters in Table \ref{tab:mean-critical-epsilon-generative-methods-progressive} in \cref{app:progressive}.

\begin{figure}
    \centering
    \includegraphics[width=\linewidth]{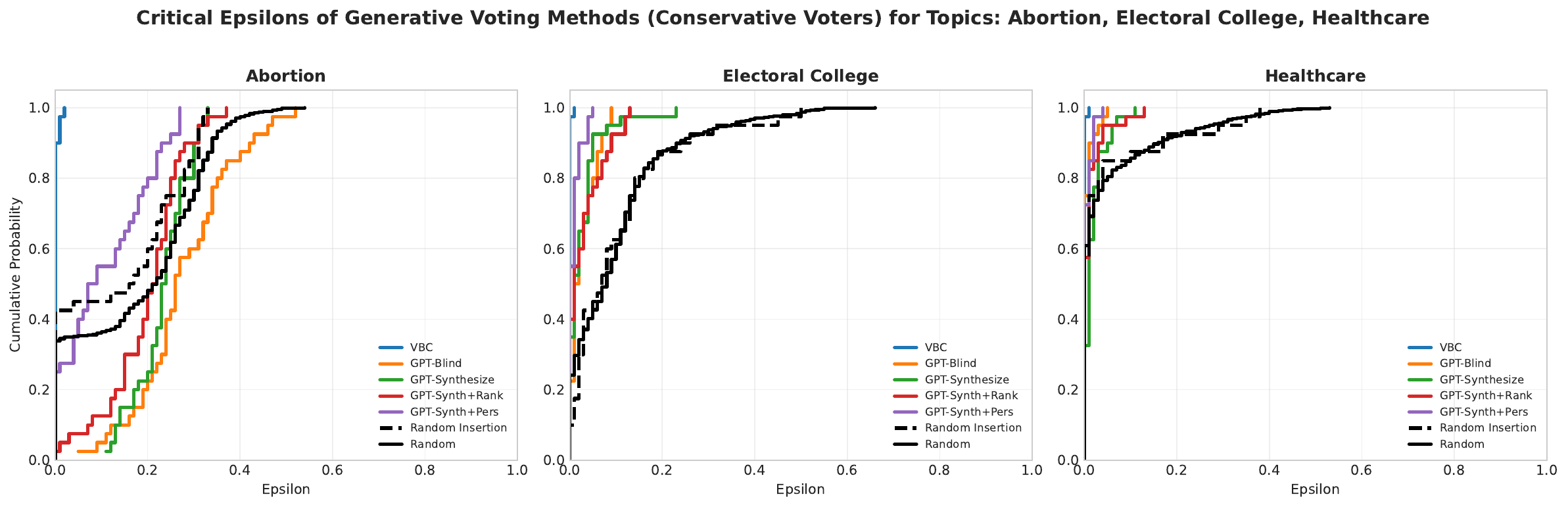}
    \caption{Critical epsilons for the winners of various generative voting rules on conservative voters for polarizing topics. 
    Generative voting rules consistently do worse than VBC and sometimes even the random baseline.
    }
    \label{fig:ori_cdf_generative_group1}
\end{figure}

\begin{table}[!htb]
\centering
\caption{Mean critical epsilons for VBC, generative voting rules and random baselines across various topics for conservative voters. Smallest epsilon for each topic is bolded.}
\label{tab:mean-critical-epsilon-generative-methods-conservative}
\resizebox{\columnwidth}{!}{
\begin{tabular}{lcccccc}
\toprule
Method & Abortion & Electoral College & Healthcare & Policing & Environment & Trust in Institutions \\
\midrule
VBC & \textbf{0.0013} & \textbf{0.0003} & \textbf{0.0003} & \textbf{0.0000} & \textbf{0.0003} & \textbf{0.0000} \\
\addlinespace
GPT-Blind & 0.2775 & 0.0263 & 0.0050 & 0.0225 & 0.0843 & 0.0220 \\
GPT-Synthesize & 0.2313 & 0.0263 & 0.0178 & 0.0393 & 0.0230 & 0.0305 \\
GPT-Synth+Rank. & 0.1980 & 0.0295 & 0.0120 & 0.0215 & 0.0280 & 0.0048 \\
GPT-Synth.+Pers. & 0.1065 & 0.0088 & 0.0048 & 0.0033 & 0.0065 & 0.0010 \\
\addlinespace
Random Insertion & 0.1360 & 0.1013 & 0.0418 & 0.0640 & 0.0153 & 0.0135 \\
Random & 0.1751 & 0.0989 & 0.0428 & 0.0625 & 0.0382 & 0.0330 \\
\bottomrule
\end{tabular}

}
\end{table}

\section{Discussion}

One of the main theoretical contributions of this paper is analyzing the sample complexity of finding alternatives in the $\epsilon$-PVC using different types of queries. While we studied certain forms of generative and discriminative queries, there are many other ways to elicit information about both alternatives and voters. A natural open question is whether there are other reasonable query models that are more effective at finding elements of the $\epsilon$-PVC. For example, different forms of discriminative or generative queries may be able to reduce the complexity of identifying an element of the $\epsilon$-PVC and close the $1/\epsilon$ to $1/\epsilon^2$ gap we showed in Section \ref{sec:lower_bounds}.

One of our key assumptions is that the number of alternatives is large or infinite, a setting in which many voting rules are not well-defined. Therefore, another open question is how to generalize other voting rules to the setting of infinite alternatives. For example, despite the PVC technically requiring a number of constraints which is exponential in $|\mathcal{N}|$, Theorem \ref{thm:alg_result} shows that the number of samples needed to find an element of the $\epsilon$-PVC does not depend on the distribution or the number of voters. One could ask whether other voting rules also have such shortcuts in the infinite alternative setting, and whether similar sample complexity results hold.

Finally, we use the critical $\epsilon$ to measure the degree to which a given statement represents common ground for the population. However, there may be many alternatives with critical $\epsilon$ equal to $0$, in which case it may not be obvious how to choose among them. One practical way to break ties would be to choose the alternative from the $\epsilon$-PVC with the highest score according to some social choice scoring function. For example, we could select the alternative with the highest Borda score among all alternatives in the $\epsilon$-PVC. If we take the Borda score as a proxy for maximizing welfare, then the Borda winner in the $\epsilon$-PVC can be interpreted as the common-ground alternative with the highest welfare. One direction for future work is to further explore the best way to combine the $\epsilon$-PVC with different scoring functions to select alternatives with small critical $\epsilon$ and high welfare.

\section*{Acknowledgements}
This work was partially supported by the National Science Foundation under grant IIS-2229881, by the Office of Naval Research under grants N00014-24-1-2704 and N00014-25-1-2153, and by grants from the Cooperative AI Foundation and the Foresight Institute. Schiffer and Zhang were supported by an NSF Graduate Research Fellowship.
We thank Teddy Lee and OpenAI for the generous support of API credits for running our experiments.

\bibliographystyle{plainnat}
\bibliography{EC26/abb,bibliography,EC26/ultimate}

\newpage

\appendix
\crefalias{section}{appendix}
\crefalias{subsection}{subappendix}      
\crefalias{subsubsection}{subsubappendix} 

\section*{\LARGE Appendix}
\section{Reduction of $\epsilon$-PVC to PVC}\label{app:pvc}

The standard definition of PVC (as in \citep{Moulin82}) is the following.
\begin{definition}\label{def:pvc}
    Let there be a set of voters $N$ with $|N| = n$, and a set of alternatives $M$ with $|M| = m$. Each voter $i$ has a ranking $\sigma_i$ over all alternatives. Let the veto function $f(x,n,m)$ for some positive integer $x \leq n$ be defined as follows:
\[
    f(x) = \left \lceil \frac{x}{n} \cdot m - 1 \right \rceil.
\]
An alternative $a$ is vetoed if there exists some group of voters $T \in N$ such that there exists a set $S$ of alternatives with $|S| \ge m - f(|T|)$ where every voter $i \in T$ prefers every alternative $b \in S$ to $a$. The set of alternatives which are not vetoed is called the \emph{proportional veto core}.
\end{definition}

Now consider Definition \ref{def:epsilon_pvc} with $\mathcal{N} = N$, $\mathcal{M} = M$, $\epsilon = 0$, and $\mathcal{D}$ as the uniform distribution over $M$. We will show this is equivalent to Definition \ref{def:pvc} Define $n = |N|$ and $m = |M|$. In this choice of parameters for Definition \ref{def:epsilon_pvc}, an alternative $a \in M$ is blocked if there exists a $T \subseteq N$ and $S \subseteq M$ such that 
\[
    \frac{|S|}{m} > 1 - \frac{|T|}{n}.
\]
Defining $y = \frac{m|T|}{n}$, this can be rewritten as
\begin{equation}\label{eq:S}
    |S| > m - y.
\end{equation}
Note that $|S| > m - y$ if and only if $|S| \ge m - \lceil y - 1 \rceil$. Therefore, Equation \eqref{eq:S} is equivalent to
\[
    |S| \ge m - \lceil y - 1 \rceil = m - f(|T|).
\]
This is exactly the same definition of being blocked as in Definition \ref{def:pvc}, and therefore we are done.

\section{Deferred Proofs}
\subsection{Proof of Proposition \ref{prop:epsilon_size}}\label{sec:epsilon_size_proof}
\begin{proof}[Proof of Proposition \ref{prop:epsilon_size}]

A key technical tool we use in the proof of Proposition \ref{prop:epsilon_size} is a linear time algorithm that, given full access to $\mathcal{D}$ and the preferences of the voters, can return a set of alternatives that are in the $\epsilon$-PVC. This algorithm is based on the Veto by Voting (VBV) algorithm in \citep{Moulin82} for finding the PVC when $M$ is finite. For any instance of the problem with finite $M$, VBV always output an element of the PVC. Our modification allows these algorithms to generalize to the setting of $\epsilon$-PVC with infinitely many alternatives. 
\begin{algorithm}[ht]
\caption{Vote by $\gamma$-Veto}
\label{alg:veto-by-consumption-text}
\DontPrintSemicolon

\textbf{Input:} A set of voters $\mathcal{N}$ and a set of alternatives $\mathcal{M}$\;

\textbf{Output:} A set of alternatives.\;

\vspace{0.5em}

Let $\gamma = \frac{1-\epsilon}{n}$.

Assign each alternatives $a \in \mathcal{M}$ capacity $c_a = \mu_{\mathcal{D}}(a)$.\;

Initialize $C  = \mathcal{M}$\;

\For{voter $i \in \mathcal{N}$}{ 
Voter $i$ identifies the smallest set $X \subseteq C$ with the following properties. First, for every $x \in X$ and $y \in C$, $y \succ_i x$. Second, defining $X_0 = \{x \in X : \mu_{\mathcal{D}}(x) = 0\}$,
\[
    \mu_{\mathcal{D}}(X_0) + \sum_{x \in X \setminus X_0} c_x \ge \gamma.
\]

Remove all alternatives in $X$ from $C$ except the one most highly ranked by voter $i$ and call the most highly ranked alternative $a$.

If $c_a \le \gamma - \mu_{\mathcal{D}}(X \setminus \{a\})$, then remove $a$ from $C$. Otherwise, let $c_a = c_a - (\gamma - \mu_{\mathcal{D}}(x \setminus \{a\}))$.
}

\Return{ $C$}
\end{algorithm}

Proposition \ref{prop:epsilon_size} follows directly from the following lemma analyzing Algorithm \ref{alg:veto-by-consumption-text}.
\begin{lemma}\label{lemma:alg_analysis}
    Every alternative in the set $C$ returned by Algorithm $\ref{alg:veto-by-consumption-text}$ is in the $\epsilon$-PVC. Furthermore, $\mu_{\mathcal{D}}(C) \ge \epsilon$.
\end{lemma}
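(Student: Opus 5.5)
The lemma has two claims, and I would prove them separately: first the mass bound, then membership in the $\epsilon$-PVC. For cleanliness, I would reinterpret Algorithm \ref{alg:veto-by-consumption-text} as a continuous ``eating'' process. Each voter $i$, in turn, removes exactly $\gamma=\frac{1-\epsilon}{n}$ units of remaining mass from the bottom of her ranking restricted to the current set $C$. Atoms are handled through the capacities $c_a$: an atom that is only partially consumed keeps its residual capacity and stays in $C$. Measure-zero alternatives are swept along with the removed block, and they are accounted for separately through $X_0$.

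\textbf{Mass bound.} I would show by induction that after each voter's step the remaining ``capacity mass'' $\sum_{a\in C,\mu(a)>0} c_a + \mu_{\mathcal{D}}(\text{non-atomic part of }C)$ decreases by exactly $\gamma$, or by less if $C$ runs out. After $n$ voters it is therefore at least $1-n\gamma=\epsilon$. Since $c_a\le \mu_{\mathcal{D}}(a)$ throughout, it follows that $\mu_{\mathcal{D}}(C)\ge\epsilon$. Some care is needed for the existence of a ``smallest'' bottom segment $X$ when $\mathcal{M}$ is infinite. I would take $X$ to be a lower set of $i$'s ranking within $C$ and use continuity of measure, with the leftover mass pushed onto the top element $a$ of $X$.

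\textbf{Core membership (main step).} This adapts Moulin's vote-by-veto argument. Suppose $a\in C$ at the end but $a$ is $\epsilon$-blocked by $(T,S)$, so every $i\in T$ ranks all of $S$ above $a$ and $\mu_{\mathcal{D}}(S)>1-\frac{|T|}{n}+\epsilon$. The complement $S^c$ then has mass $<\frac{|T|}{n}-\epsilon$.

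Consider the voters of $T$. I claim every voter $i\in T$, whenever $a$ is still present, consumes only mass lying weakly below $a$ in her ranking, hence mass from $S^c$, before touching $a$ itself. This holds because she eats from the bottom of $C$ and $a\in C$ throughout, so she never reaches anything in $S$.

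Hence the $|T|$ voters of $T$ together consume $|T|\gamma$ of capacity from $S^c$. Capacity is never regenerated, so $|T|\gamma\le \mu_{\mathcal{D}}(S^c)$, with $a$ not exhausted. But
\[
|T|\gamma=\frac{|T|(1-\epsilon)}{n}\ge \frac{|T|}{n}-\epsilon>\mu_{\mathcal{D}}(S^c),
\]
which is a contradiction: some voter in $T$ must have exhausted all of $S^c$ below $a$, including $a$, and removed it.

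\textbf{Expected obstacle.} The delicate part is the bookkeeping in this step. I must check that consumption by voters outside $T$ only decreases available $S^c$ capacity; this is fine, because it only makes the bound tighter. I must also check that measure-zero alternatives in $X_0$ and partially eaten atoms, with the algorithm's rule on $c_a$, do not let $a$ survive when it is exactly at the boundary. I would handle ties via the strict inequality in the blocking condition, and I would treat $a$ itself as part of $S^c$, which contributes its own capacity.
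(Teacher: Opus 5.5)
Your proposal is correct and follows the paper's proof essentially step for step. For the mass bound, the capacity potential drops by exactly $\gamma$ per voter, so it ends at $\epsilon\le\mu_{\mathcal{D}}(C)$; for core membership, each voter in $T$ consumes $\gamma$ of capacity from alternatives weakly below $a$, which gives $|T|\gamma\ge\frac{|T|}{n}-\epsilon>\mu_{\mathcal{D}}(S^c)$. The only cosmetic difference is that you count consumption inside $S^c$, whereas the paper uses the smaller set $W=\{a\}\cup\bigcup_{i\in T}W_i\subseteq S^c$; either works. Your reading that each voter removes ``exactly $\gamma$ of residual capacity'' is also the one the paper's proof needs, although its update rule is written with $\mu_{\mathcal{D}}(X\setminus\{a\})$ in place of residual capacities.
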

\begin{proof}
    We first will show that the set of alternatives returned by Algorithm \ref{alg:veto-by-consumption-text} must all be in the $\epsilon$-PVC.  Proof by contradiction. Suppose that $a$ is returned by Algorithm \ref{alg:veto-by-consumption-text} and that $a$ is not in the $\epsilon$-PVC. By definition of $\epsilon$-PVC, there must exist a coalition $T$ and set of alternatives $S$ such that $w \succ_i a $ for all $i \in T$ and $w \in S$ and 
    $
    \mu_{\mathcal{D}}(S) > 1 - \frac{|T|}{n} + \epsilon.
    $
    For $i \in T$, let $W_i$ be the set of alternatives that $i$ ranks lower than $a$, and let $W = \{a\} \cup \left(\cup_{i \in T} W_i\right)$. By construction, we must have that $S$ and $W$ are disjoint, so
    \[
        \mu_{\mathcal{D}}(W) \le  1 - \mu_{\mathcal{D}}(S) < \frac{|T|}{n} - \epsilon.
    \]
    Because $a$ is in $C$ when the algorithm terminates, every voter in $T$ must have removed or decreased the capacity of alternatives they like less than or equal to $a$, so every voter in $T$ must have removed or decreased the capacity of alternatives in $W$. Each voter in $S$ consumes $\gamma$ mass, therefore the total amount of mass that is consumed from $W$ must be at least
    \[
        \gamma |T|  = \frac{|T|(1-\epsilon)}{n} = \frac{|T|}{n} - \frac{|T|}{n}\epsilon \ge \frac{|T|}{n} - \epsilon > \mu_{\mathcal{D}}(W).
    \]
    Therefore, all alternatives in $W$ must have been removed from $C$ by the time that the algorithm terminated, and therefore $a$ must have been removed from $C$. This is a contradiction with the fact that $a$ is returned by the algorithm, which completes the first part of the proof.

    To see that $\mu_{\mathcal{D}}(C) \ge \epsilon$, define 
    \[
        m(C) := \mu_{\mathcal{D}}(\{x \in C : \mu_{\mathcal{D}}(x) = 0\}) + \sum_{x \in C : \mu_{\mathcal{D}}(x) \ne 0 } c_x.
    \]
    By construction, $m(C)$ decreases by exactly $\gamma$ in each round of the for loop.  
    Since $\gamma = \frac{1-\epsilon}{n}$ and there are $n$ rounds of the for loop, this implies that $m(C) = \epsilon$ when the algorithm terminates. Furthermore, by definition $m(C) \le \mu_{\mathcal{D}}(C)$, therefore $m(C) = \epsilon$ implies that $\mu_{\mathcal{D}}(C) \ge \epsilon$ as desired.
\end{proof}
\end{proof}

\subsection{Alternative Proof of Theorem \ref{thm:find_PVC}}\label{sec:longer_proof}

In this section, we present an alternative proof for Theorem \ref{thm:find_PVC} that achieves a tighter lower bound of $mn/2$ (improvement over $mn/32$ as shown in the proof in Section \ref{sec:lower_bounds}). Note that our algorithm using Median of Medians only uses $\approx 3.33mn$ pairwise discriminative queries, and therefore this tighter lower bound is only a factor of $\approx 6.66$ from being fully tight.

\begin{proof}[Second Proof of Theorem \ref{thm:find_PVC}]
Let $m = n+1$.

First, assume that we know we are in one of two cases:
\begin{itemize}
	\item Alternative $a$ is ranked second to last in every single ranking, and every alternative in $A \setminus \{a\}$ is ranked last in exactly one ranking
	\item Alternative $a$ is ranked second to last by $n-1$ voters and is ranked last in exactly one ranking, in which some alternative $x$ is ranked second to last. Every alternative in $A \setminus \{a,x\}$ is ranked last in exactly one ranking.
\end{itemize}

In the first option above, the PVC is just $\{a\}$. In the second option above, the PVC is just $\{x\}$. Therefore, in order to find the proportional veto core, the algorithm must be able to distinguish between these two cases. 

We will show that no algorithm can distinguish between these two cases with fewer than $\frac{nm}{2} = \frac{n(n+1)}{2}$ queries by giving an adversary that requires at least this many queries.

Define $X = A \setminus \{a\}$. We therefore will consider the easier problem of determining if every element of $X$ is ranked last exactly once in $N$. Note that any queries not involving $a$ will not give any information for this alternate problem, therefore we will only consider queries involving $a$.

We will prove the desired result using induction. Let $Q$ be a set of queries of the form $(x,a, y)$ for $x \in X$ and $y \in N$ where $x \succ_y a$. For any $X,N,Q$ such that $|X| = |N|$, define $f(X,N,Q)$ as the minimum number of queries in this adversarial setting necessary to determine whether or not every element in $X$ is ranked last in exactly one ranking in $N$ assuming that we start by doing the queries in $Q$ .

Base case: If $|X| = |N| = 1$ then $X$ and $N$ are both singletons (i.e. $X = \{x\}$ and $N = \{y\}$). If $|Q| \ge 1$, then clearly $f(X,N,Q) \ge |Q| \ge 1$. If $Q = \emptyset$, then we then need one query of $(x,a,y)$ to determine whether or not $x$ is ranked last by $y$. Therefore, for any $X,N$ such that $|X| = |N| = 1$ and any $Q$, we have that
$	f(X,N,Q) \ge 1 = \frac{|N|(|N|+1)}{2}.
$
Inductive Hypothesis: Assume that $f(X,N,Q) \ge \frac{|N|(|N|+1)}{2}$ for all $X,N,Q$ such that $|X| = |N| < n$.

Consider any $X,N,Q$ such that $|X| = |N| = n$.

Case 1: 
 There exists a set $X_1 \subset X$ and $Y_1 \subset N$ such that $|X_1| = |Y_1| < n$ and such that the only possible way to assign every alternative in $X$ to be last in some ranking in $N$ is to assign every alternative in $X_1$ to be last in some ranking in $Y_1$. Define $Y_2 = N \setminus Y_1$ and $X_2 = X \setminus X_1$. Let $Q_1 \subset Q$ be the set of all queries in $Q$ involving alternatives in $X_1$ and voters in $Y_1$ and let $Q_2 \subset Q$ be the set of all queries in $Q$ involving alternatives in $X_2$ and voters in $Y_2$. We will show that
\begin{equation}\label{eq:f_X_Y_Q}
	f(X,Y,Q) \ge f(X_1,Y_1,Q_1) + f(X_2,Y_2,Q_2) + |X_1| \cdot (n - |X_1|).
\end{equation}
To see this result, note that since $X_1$ alternatives must be ranked last in some ranking in $Y_1$ and alternatives in $X_2$ must be ranked last in some ranking in $Y_2$, from now on the player only needs to ask queries involving alternatives in $X_1$ and voters in $Y_1$ or ask queries involving alternatives in $X_2$ and voters in $Y_2$. This explains the first two recursive terms in the above equation. To show the last additive term, we must show that the number of queries in $Q \setminus Q_1 \setminus Q_2$ ( or equivalently the number of queries involving alternatives in $X_1$ and voters in $Y_2$ or involving alternatives in $X_2$ and voters in $Y_1$) is at least $|X_1| \cdot (n - |X_1|)$.

This is because in order for such an $X_1$ and $Y_a$ to exist, we must be in one of the following two situations:
\begin{itemize}
	\item Every alternative  $x_1 \in X_1$ has been compared to $a$ in every ranking $y_2 \in Y_2$ and $x_1 \succ_{y_2} a$
	\item Every alternative  $x_2 \in X_2$ has been compared to $a$ in every ranking $y_1 \in Y_1$ and $x_2 \succ_{y_1} a$
\end{itemize}

Option 1 above requires $|X_1||Y_2|$ comparisons in $Q \setminus Q_1 \setminus Q_2$ and option 2 requires $|X_2||Y_1|$ comparisons in $Q \setminus Q_1 \setminus Q_2$. In both cases, we must have that
\[
	|Q \setminus Q_1 \setminus Q_2| \ge  |X_1| \cdot (n - |X_1|).
\]

Therefore, we have shown Equation \eqref{eq:f_X_Y_Q}.

Using the inductive hypothesis, Equation \eqref{eq:f_X_Y_Q} implies that
\begin{align*}
f(X,N,Q) 
&\ge \frac{|X_1|(|X_1| +1)}{2} + \frac{|X_2|(|X_2| + 1)}{2} + |X_1|\,(n - |X_1|) \\
&= \frac{|X_1|(|X_1| +1)}{2} + \frac{|X_2|(|X_2| + 1)}{2} + |X_1|\,|X_2| \tag{$|X_2| = n - |X_1|$} \\
&= \frac{|X_1|^2 + |X_1| + |X_2|^2 + |X_2| + 2|X_1||X_2|}{2} \\
&= \frac{(|X_1| + |X_2|)^2 + (|X_1| + |X_2|)}{2} \\
&= \frac{|N|^2 + |N|}{2} \\
&= \frac{|N|(|N|+1)}{2}.
\end{align*}

Case 2: If we are not in case 1, then for any query of the form $q = (x,a,y)$, suppose that the adversary responds that $x \succ_y a$. In this case, we have that (for $Q' = Q + \{q\}$)
\[
	f(X,Y,Q) = f(X,Y,Q').
\]

There are a finite number of potential queries, so eventually we will reach some $Q'$ where $Q \subseteq Q'$ and such that $f(X,Y,Q) = f(X,Y,Q')$ and such that $X,Y,Q'$ falls in case 1. We already showed that in Case 1 we must have that $f(X,Y,Q') \ge \frac{n(n+1)}{2}$, so we can conclude that
\[
	f(X,Y,Q) = f(X,Y, Q') \ge \frac{n(n+1)}{2}.
\]

In order to show that the above adversary is valid, we must maintain that at any point given the query feedback from the adversary, $a$ could either never ranked last or $a$ could be ranked last exactly once. Because the adversary always responds that $x \succ_y a$ for all queries, at any point it is still possible that $a$ could be ranked last by every voter. Therefore, we must show that for this adversary, we also could have that $a$ is not ranked last by one voter. This will follow from the following lemma. This lemma implies that any time one query response from the adversary would result in a query set $Q$ for which $a$ must be ranked last by every voter in $N$, we must be in the setting of case 1. This implies that the adversary never needs to answer a query in such a way that forces $a$ to be ranked last by every voter until at least $n(n+1)/2$ queries have already been asked.

\begin{lemma}
	For a given $(X,N,Q)$, suppose that there exists a matching of alternatives in $X$ to voters in $N$ such that every alternative in $X$ could be ranked last by exactly one voter in $N$ given the results of the queries in $Q$. Further, assume that there exists a query $(a, x,y)$ such that if $x \succ_y a$, then this is no longer the case. Then there must exist a set $X_1 \subset X$ and $Y_1 \subset Y$ such that $|X_1| = |Y_1| < n$ and such that the only possible way to assign every alternative in $X$ to be last in some ranking in $N$ is to assign every alternative in $X_1$ to be last in some ranking in $Y_1$. 
\end{lemma}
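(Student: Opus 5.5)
This is a statement about perfect matchings in a bipartite graph, so I plan to prove it with Hall's theorem, using the structure of tight Hall sets.

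Define the bipartite graph $G_Q$ on $X \cup N$. Put an edge $\{x,y\}$ whenever the queries in $Q$ still allow $x$ to be ranked last by $y$, that is, whenever $Q$ does not contain a response $x \succ_y a$. Since only responses of the form $x \succ_y a$ are ever recorded, each such response deletes exactly one edge, namely $\{x,y\}$. The hypothesis says $G_Q$ has a perfect matching, while $G_{Q}-\{x,y\}$ has none. Call this the \emph{fragile} edge. Because $x$ can be last for at most one voter and each voter has one last alternative, "every alternative in $X$ is last in exactly one ranking" is exactly "a perfect matching exists". So $\{x,y\}$ lies in every perfect matching of $G_Q$.

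First I apply Hall's theorem to $G' = G_Q - \{x,y\}$. Since $|X| = |N| = n$ and $G'$ has no perfect matching, there is a set $Z \subseteq X$ with $|N_{G'}(Z)| < |Z|$. In $G_Q$, Hall's condition holds, so $|N_{G_Q}(Z)| \ge |Z|$. Removing a single edge changes the neighbourhood of $Z$ by at most the vertex $y$. Hence $x \in Z$, $y \in N_{G_Q}(Z)\setminus N_{G'}(Z)$, and $|N_{G_Q}(Z)| = |Z|$. So $Z$ is a tight set in $G_Q$.

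Next I derive the decomposition. Take any perfect matching of $G_Q$. It must match $Z$ into $N_{G_Q}(Z)$, and since the two sets have equal size it matches $Z$ onto $N_{G_Q}(Z)$ bijectively. Therefore every perfect matching of $G_Q$ assigns the alternatives of $Z$ to the voters of $N_{G_Q}(Z)$. This is the forced structure the statement asks for, with the candidate choice $X_1 = Z$ and $Y_1 = N_{G_Q}(Z)$. It also holds symmetrically for the complement: $X\setminus Z$ must go to $N \setminus N_{G_Q}(Z)$.

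The remaining step, which I expect to be the main obstacle, is the strictness condition $|X_1| < n$. If $Z \ne X$ we are done. If $Z = X$, the tight set is trivial, so I will look for a different proper tight set. Every perfect matching uses $\{x,y\}$, so consider the graph $G_Q - x - y$ on $X\setminus\{x\}$ and $N\setminus\{y\}$. The set $X_1 = \{x\}$, $Y_1 = \{y\}$ works: every perfect matching of $G_Q$ matches $x$ to $y$, so every valid assignment places $x$ last for $y$. This is a decomposition with $|X_1| = |Y_1| = 1$.

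This choice is valid only if $n > 1$. When $n = 1$ the statement's strict inequality fails, and I would note that this degenerate case is handled separately by the base case of the induction. So for $n \ge 2$ the pair $(\{x\},\{y\})$ always works. The tight-set argument via Hall's theorem is still useful, because it yields possibly larger forced blocks that match the Case 1 structure used in the induction.
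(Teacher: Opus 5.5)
Your proof is correct, but it takes a different route from the paper.

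\textbf{The paper's route.} The paper fixes one valid matching $M$ and builds a digraph on $X$, with an arc $x_1\to x_2$ whenever $x_1$ could be last for $M(x_2)$. It shows that $x$ lies on no directed cycle, since otherwise rotating $M$ along the cycle would give a valid assignment avoiding $(x,y)$. So $x$ is a singleton strongly connected component. The paper then takes $X_1$ to be a \emph{source} component of the condensation and $Y_1=M(X_1)$.

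\textbf{Your route.} You use Hall's theorem to extract a tight set $Z\ni x$ with $|N_{G_Q}(Z)|=|Z|$. More to the point, you observe that the fragile pair lies in every perfect matching. Hence $X_1=\{x\}$, $Y_1=\{y\}$ already satisfies the lemma as literally stated whenever $n\ge 2$, and neither Hall nor the SCC argument is needed for that. Your $n=1$ caveat is no weakness relative to the paper, whose claim that there are ``at least two components'' also silently needs $n\ge 2$.

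\textbf{What the heavier constructions buy.} They give extra structure that the surrounding Case~1 accounting actually invokes. The paper's source component guarantees that no alternative of $X\setminus X_1$ can be last for any voter in $Y_1$; this is its second ``situation.'' Your tight set guarantees that no alternative of $Z$ can be last for any voter outside $N_{G_Q}(Z)$; this is its first ``situation.''

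The bare pair $(\{x\},\{y\})$ gives neither in general. For example, let $G_Q$ have open pairs $u_1v_1,u_2v_2,u_3v_3,u_1v_3,u_3v_2$. Then $(u_3,v_3)$ is fragile, yet $u_3v_2$ and $u_1v_3$ are both still open, so neither situation holds for $(\{u_3\},\{v_3\})$.

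\textbf{Suggested addition.} If you want your argument to feed the induction as the paper uses it, say explicitly what happens in your fallback case $Z=X$. There $|N_{G'}(X)|<n$ forces $N_{G'}(X)=N\setminus\{y\}$, so $x$ is the only alternative that can be last for $y$. Hence $(\{x\},\{y\})$ then does satisfy the paper's second situation.
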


\begin{proof}

 Because there exists a matching of alternatives in $X$ to voters in $N$, choose one such mapping and call it $M : X \to N$. Create the following directed graph with vertices corresponding to $X$. Add a directed edge from $x_1 \in X$ to $x_2 \in X$ if and only if the alternative $x_1$ could be ranked last by the voter $M(x_2)$. If there exists an $(x,y)$ as assumed in the lemma, then there cannot be a directed cycle containing the vertex $x$. This is because if there is a directed cycle $C$ containing $x$, then there still exists a valid matching $M'$ that satisfies $x \succ_y a$, which violates the lemma assumption. To construct $M'$, we could simply modify $M$ by matching ever vertex $b$ in the cycle to $M(C(b))$.

 Now we show that the lack of cycles containing $x$ implies that there exists some subset of vertices $X_1 \subset X$ such that there are no edges from $X \setminus X_1$ to $X_1$ and $X_1 \subset X$. The simplest way to do this is to reduce the graph into strongly connected components. Because there is no cycle containing $x$, $x$ must be alone in its component. The strongly connected component graph is a DAG so it must have a source, which is a component with no incoming edges. Furthermore, because $x$ is alone in its component there are at least two components. Therefore, taking $X_1$ to be the source component gives  that there are no edges from $X \setminus X_1$ to $X_1$. Taking $Y_1 = M(X_1)$ gives that $X_1$ and $Y_1$ satisfy the desired property that the only possible way to assign every alternative in $X$ to be last in some ranking in $N$ is to assign every alternative in $X_1$ to $Y_1$, so we are done.

\end{proof}

\end{proof}

\section{Deferred Proofs for Computing \(\epsilon\)-PVC and Critical Epsilons}
\label{sec:deferred-proofs-for-eps-pvc-critical-epsilon}
\subsection{Computing the \(\epsilon-\)PVC and Critical Epsilons}
\label{subsec:computing-eps-pvc-and-critical-epsilons}

In this section, we first discuss computing the $\epsilon$-PVC efficiently when we have access to full information about voters and alternatives when $\mathcal{M}$ is finite. \citet{ianovski_computing_2023} showed that for the original definition of PVC, the problem of finding all statements in the PVC can be reduced to a biclique problem, for which a polynomial time algorithm is known. We adapt the proof to show that a similar reasoning applies to \(\epsilon\)-PVC.  

Fix an alternative \(a\in \mathcal{M}\). As we will use a graph reduction, we first present the definition of a \emph{blocking graph}.

\begin{definition}[Blocking graph][\citet{ianovski_computing_2023}]\label{def:blocking-graph}
    Given an election instance \((\mathcal{N},\mathcal{M},\succ)\) and an alternative \(a\in \mathcal{M}\). The \emph{blocking graph} of \(a\) is a bipartite graph where there are \(nm\) left vertices, where each vertex can be identified as one of the \(m\) copies of the \(n\) voters, and there are \(n(m-1)\) right vertices, where each vertex can be identified as one of the \(n\) copies of the \(m-1\) alternatives of \(\mathcal{M}\setminus\{a\}\). An edge exists from a left vertex to a right vertex if and only if the left vertex corresponds to a voter \(i\in \mathcal{N}\) and the right vertex corresponds to an alternative \(c\in \mathcal{M}\setminus\{a\}\) such that \(i\) prefers \(c\) to \(a\).
\end{definition}

We now present two statements and show that they are equivalent:

\begin{condition}[Blocking Set]\label{st:blocking-set}
    There exists a coalition \(T\subseteq \mathcal{N}\) and a blocking set \(S\subseteq \mathcal{M}\) such that \(|T|/n>1-|S|/m+\epsilon\) and every voter in \(T\) prefers every alternative in \(S\) to \(a\).
\end{condition}

\begin{condition}[Blocking Graph]\label{st:blocking-graph}
    The blocking graph of \(a\) has a biclique with \(mk\) left vertices and \(nb\) right vertices such that \(mk+nb> (1+\epsilon)nm\).
\end{condition}

To bridge these two statements, we use \(|T|=k\) and \(|S|=b\). By showing that \Cref{st:blocking-set} is equivalent to \Cref{st:blocking-graph}, or in words, that the existence of the blocking set is equivalent to the existence of a sufficiently large biclique in the blocking graph, we can use a known polynomial time algorithm for \Cref{st:blocking-graph} to check \Cref{st:blocking-set}, which in turn determines membership in \(\epsilon-\)PVC.

\begin{proposition}\label{prop:blocking-set-blocking-graph-equivalence}
    \Cref{st:blocking-set} is equivalent to \Cref{st:blocking-graph}. 
\end{proposition}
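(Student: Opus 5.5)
We prove the two directions separately, identifying $k=|T|$ and $b=|S|$. The relevant arithmetic identity is that, after multiplying by $nm$, the inequality $|T|/n > 1-|S|/m+\epsilon$ becomes $mk > nm - nb + \epsilon nm$, i.e.\ $mk+nb > (1+\epsilon)nm$. This is exactly the size condition in \Cref{st:blocking-graph}. So the substance of the proof is the correspondence between pairs $(T,S)$ with the preference property and bicliques in the blocking graph of \Cref{def:blocking-graph}.

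\textbf{Forward direction (\Cref{st:blocking-set} $\Rightarrow$ \Cref{st:blocking-graph}).} Take $T$ and $S$ as in \Cref{st:blocking-set}. Since every voter in $T$ strictly prefers every element of $S$ to $a$, we have $a\notin S$, so $S\subseteq\mathcal{M}\setminus\{a\}$. Let the left side be all $m$ copies of each voter in $T$, giving $mk$ vertices. Let the right side be all $n$ copies of each alternative in $S$, giving $nb$ vertices. Every left copy of some $i\in T$ is adjacent to every right copy of some $c\in S$, because $c\succ_i a$. Hence these vertices form a biclique with the required counts, and the identity above gives $mk+nb>(1+\epsilon)nm$.

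\textbf{Reverse direction (\Cref{st:blocking-graph} $\Rightarrow$ \Cref{st:blocking-set}).} This is the main obstacle, because a biclique need not consist of full blocks of copies. The plan is to pass to a closure.

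1. Start from an arbitrary biclique $(L,R)$. Let $T$ be the set of voters having at least one copy in $L$, and let $S$ be the set of alternatives having at least one copy in $R$.

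2. Adjacency depends only on the underlying voter and alternative, so every $i\in T$ prefers every $c\in S$ to $a$. Therefore the full block of copies, $T\times[m]$ on the left and $S\times[n]$ on the right, is also a biclique.

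3. This full biclique contains $(L,R)$, so it has size $m|T|+n|S|\ge |L|+|R|$.

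4. If the original biclique has sizes of the prescribed form $mk+nb$, then $|T|\ge k$ and $|S|\ge b$. In either reading, $m|T|+n|S|>(1+\epsilon)nm$.

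5. Apply the arithmetic identity in reverse to obtain $|T|/n>1-|S|/m+\epsilon$, together with the preference property. This is exactly \Cref{st:blocking-set}.

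The one point to handle carefully is step 3: the closure argument ensures that no count is lost when the biclique consists of partial copies.
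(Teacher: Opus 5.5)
Your proof is correct and follows essentially the same argument as the paper. In the forward direction you take all $m$ copies of each voter in $T$ and all $n$ copies of each alternative in $S$; in the reverse direction you close the biclique under copies, which works because copies share neighbourhoods, obtaining $|T|\ge k$ and $|S|\ge b$ before rescaling the inequality by $nm$. Your check that $a\notin S$, which the paper leaves implicit, tacitly needs $T\neq\emptyset$; this is automatic, since $|S|\le m$ and $\epsilon\ge 0$ rule out an empty blocking coalition.
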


\begin{proof}
    (\Cref{st:blocking-set}\(\implies\)\Cref{st:blocking-graph}) Note that by multiplying the inequality in \cref{st:blocking-graph} by \(nm\), we get \(mk+nb>(1+\epsilon)nm\). By construction of the blocking graph, the \(mk\) vertices on the left that correspond to the \(m\) copies each of the \(k\) voters in \(T\) are all connected to the \(nb\) vertices on the right that correspond to the \(n\) copies each of the \(b\) alternatives in \(S\).
    
    (\Cref{st:blocking-graph}\(\impliedby\)\Cref{st:blocking-set}) Take one left vertex \(x\) in the biclique. Say \(x\) corresponds to a copy of voter \(i\). Enlarge the biclique by including all left vertices that correspond to copies of voter \(i\). This is possible because these vertices have the same set of neighbours by construction of the blocking graph. Do the same for every left vertex in the biclique. Similarly, do a corresponding enlargement for the right vertices in the biclique. Now, since there are at least \(mk\) left vertices, of which we included all \(m\) copies of their corresponding voters, we must have at least \(k\) voters included in the biclique. Similarly, as we have at least \(nb\) right vertices, of which we included all \(n\) copies of their corresponding alternatives, we must have at least \(b\) alternatives included in the biclique. Let these \(k\) voters be \(T\) and these \(b\) alternatives be \(S\). Then note that \(mk+nb>(1+\epsilon)nm\implies m|T|+b|S|>(1+\epsilon)nm\implies |T|/n>1-|S|/m+\epsilon\). By construction of the blocking graph, these voters in \(T\) all prefers every alternative in \(S\) to \(a\).
\end{proof}

Prior work has established that \cref{st:blocking-graph} can be checked in polynomial time:

\begin{proposition}[][\citet{garey_computers_1990}, pg. 196, GT24]\label{prop:garey}
    Let \(G\) be a bipartite graph and \(k\) an integer. We can, in polynomial time, determine whether there exists a biclique \(K\subseteq G\) with \(x\) left vertices and \(y\) right vertices such that \(x+y=k\).
\end{proposition}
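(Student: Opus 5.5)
The plan is to reduce the biclique question to a maximum independent set computation in a bipartite graph, which König's theorem makes solvable in polynomial time via maximum matching. Let $G=(L\cup R,E)$ and define the \emph{bipartite complement} $\bar G=(L\cup R,\bar E)$, where $\bar E=\{\{\ell,r\}:\ell\in L,\ r\in R,\ \{\ell,r\}\notin E\}$.

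\textbf{Step 1 (bicliques are independent sets).} I first verify that a pair $(L',R')$ with $L'\subseteq L$, $R'\subseteq R$ is a biclique in $G$ if and only if $L'\cup R'$ is an independent set in $\bar G$. Both conditions say exactly that every pair in $L'\times R'$ is an edge of $G$, since $\bar G$ has no edges inside $L$ or inside $R$. Hence the maximum of $x+y$ over bicliques equals $\alpha(\bar G)$, the independence number of $\bar G$.

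\textbf{Step 2 (computing $\alpha(\bar G)$).} Because $\bar G$ is bipartite, König's theorem gives $\alpha(\bar G)=|L|+|R|-\nu(\bar G)$, where $\nu(\bar G)$ is the size of a maximum matching. A maximum matching can be found in polynomial time, for instance by Hopcroft--Karp or by max-flow. This is the same flow machinery used in \Cref{alg:calculate_critical}.

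\textbf{Step 3 (from ``at least $k$'' to ``exactly $k$'').} Deleting any vertex of a biclique leaves a biclique. So if some biclique has $x+y\ge k$, vertices can be removed one at a time until $x+y=k$ exactly. The answer is therefore ``yes'' if and only if $\alpha(\bar G)\ge k$.

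\textbf{Main obstacle: nonempty sides.} The step needing care is the convention on empty sides. If bicliques must have $x,y\ge 1$, then trivial independent sets lying entirely in $L$ or in $R$ must be excluded. I would handle this by enumerating every edge $\{u,v\}\in E$ with $u\in L$, $v\in R$, which gives polynomially many guesses. For each guess, restrict to the left vertices $N_G(v)$ and the right vertices $N_G(u)$; every vertex there is adjacent in $G$ to the guessed partner on the opposite side. In this restricted graph, compute the maximum independent set of the bipartite complement that contains both $u$ and $v$. Since $u$ and $v$ have no $\bar G$-neighbors in the restricted vertex set, this is just $\alpha$ of the restricted complement.

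Shrinking is still valid under this convention, as long as at least one vertex remains on each side; this is possible whenever $2\le k\le\alpha$. Taking the maximum over all guessed edges decides the question. The total running time is polynomial: $O(|E|)$ matching computations.
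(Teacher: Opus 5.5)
Your proof is correct and takes essentially the paper's approach. The paper cites Garey--Johnson without proof, and your bipartite-complement reduction with König's theorem (plus the edge-guessing step if both sides must be nonempty) is the standard ``via matching'' justification for that cited fact. It is also the same duality as the paper's min-cut algorithm: there the infinite-capacity edges are the non-edges of the blocking graph, and $Q+K=2mn-n$ is the vertex-weighted analogue of your $\alpha(\bar G)=|L|+|R|-\nu(\bar G)$.
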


We can thus determine whether a given alternative is in the \(\epsilon-\)PVC by using the polynomial time algorithm in \cref{prop:garey} and iterating \(k\) from \((1+\epsilon)nm\) to \(2mn-n\), giving us \Cref{prop:eps-pvc-polytime}. 

\begin{corollary}\label{prop:eps-pvc-polytime}
The \(\epsilon-\)PVC can be computed in polynomial time.
\end{corollary}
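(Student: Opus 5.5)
The plan is to decide membership of each alternative separately and then collect the results. For each alternative $a \in \mathcal{M}$ we decide whether $a$ is $\epsilon$-blocked, and the $\epsilon$-PVC is the set of alternatives that are not. Since $|\mathcal{M}| = m$, this needs only $m$ membership tests. So it suffices to show that one test runs in time polynomial in $n$ and $m$.

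Fix $a$. By \Cref{prop:blocking-set-blocking-graph-equivalence}, $a$ is $\epsilon$-blocked exactly when \Cref{st:blocking-set} holds, that is, exactly when the blocking graph of $a$ contains a biclique with $mk$ left and $nb$ right vertices and $mk+nb > (1+\epsilon)nm$. Building the blocking graph is polynomial: it has $nm$ left vertices and $n(m-1)$ right vertices. For each voter--alternative pair we read off whether $i$ prefers $c$ to $a$, which gives $O(n^2m^2)$ edges. Bicliques contain at most $2nm-n$ vertices in total, so the only relevant totals are the integers $k'$ in the range $\lfloor (1+\epsilon)nm\rfloor+1 \le k' \le 2nm-n$. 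There are $O(nm)$ of them. For each $k'$ we call the polynomial-time procedure of \Cref{prop:garey} to decide whether a biclique with $x+y=k'$ exists. Then $a$ is blocked iff some call returns yes.

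The main obstacle is a mismatch between what \Cref{prop:garey} certifies and what \Cref{st:blocking-graph} asks for. \Cref{prop:garey} finds a biclique of a given total size, but \Cref{st:blocking-graph} asks for one whose left side is a multiple of $m$ and whose right side is a multiple of $n$. To bridge this, reuse the enlargement argument from the proof of \Cref{prop:blocking-set-blocking-graph-equivalence}. All copies of a voter have the same neighbourhood, and likewise all copies of an alternative, so any biclique with $x$ left and $y$ right vertices extends to one containing all copies of each voter and alternative it touches. The extended biclique has $mk \ge x$ left and $nb \ge y$ right vertices, and therefore total size at least $k' > (1+\epsilon)nm$. Hence a yes answer for any $k'$ in the range yields \Cref{st:blocking-graph}. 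Conversely, \Cref{st:blocking-graph} directly exhibits a biclique whose total size lies in the range.

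Multiplying out, the procedure makes $m\cdot O(nm)$ calls to a polynomial-time subroutine, each on a graph of polynomial size, so the total running time is polynomial. (Alternatively, one can avoid \Cref{prop:garey} and use the min-cut construction of \Cref{alg:calculate_critical}. It yields the maximum biclique size directly as $2nm-n-K$, so one compares that value against $(1+\epsilon)nm$.)
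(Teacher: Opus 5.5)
Your proposal is correct and follows the paper's argument: test each alternative separately, use the equivalence between blocking sets and large bicliques in the blocking graph, and call the polynomial-time biclique procedure of Garey--Johnson for each target size up to $2nm-n$. Your explicit enlargement step, which turns an arbitrary biclique into one whose sides are multiples of $m$ and $n$, and your strict starting point $\lfloor (1+\epsilon)nm\rfloor+1$ are somewhat more careful than the paper's one-line sketch, but the route is the same.
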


\subsection{A Faster Algorithm through a Correspondence with Min-Cut}
\label{subsubsec:algo-pvc-membership}

With proof of existence of a polynomial time algorithm out of the way, here we proceed to present the proof and runtime analysis for the faster algorithm that is used in our implementation (\Cref{alg:calculate_critical}). Many ideas for this algorithm, including the max-flow correspondence, are from \citet{ianovski_computing_2023}'s \(O(m\max(n^3,m^3))\) algorithm for computing the PVC. The core of the idea is that instead of copies of the same vertex in a blocking graph, we turn these copies into weighted edges in a flow graph where the weight is the number of copies.

\subsubsection{Correctness}\label{subsubsec:min-cut-correctness}
\begin{proof}
Consider the flow graph after applying the min cut. The min cut could not cut the edges between the voter nodes and the alternative nodes as the edge weights are infinite, so the only edges possibly cut are those between the source node and the voter nodes, and those between the alternative nodes and the sink node. Note that whenever an edge is cut, that node contributes to the min-cut, while the remaining nodes after the min-cut forms a biclique of the blocking graph, which is in fact maximum. To be precise, all voter nodes and alternative nodes are partitioned into either contributing to the min-cut or the biclique. When an edge between the source and a voter node is cut, that voter node contributes \(m\) to the min-cut. When an edge between the sink and an alternative node is cut, that alternative node contributes \(n\) to the min-cut. The remaining nodes after the min-cut will form a biclique in the corresponding blocking graph. To see this, note that since the remaining voter nodes are still connected to the source, and the remaining alternative nodes are still connected to the sink, there must not be any edges left between any remaining voter nodes and alternative nodes. Otherwise, a single edge between a voter node and an alternative node, which has infinite edge weight, could transmit positive flow.\footnote{The careful reader will find that the infinite edge will not transmit flow if all voter nodes are cut from the source, or all alternative nodes are cut from the sink. Indeed this is the baseline case, which is to cut all alternative nodes from the sink for a cut of \((m-1)n\) (which is less than cutting voter nodes from the source for \(mn\)), and will yield \(Q=mn\implies Q\leq (1+\epsilon)nm\;\forall\epsilon\in[0,1]\) (here \(Q\) loses its meaning as the maximum biclique, but the algorithm is still valid). By finding smaller cuts than \(mn\), the algorithm will obtain a bigger corresponding blocking set, and raise \(\epsilon\).} Since no edges exist between the remaining voter nodes and alternative nodes, by construction, all of these voters must approve all of these alternatives over the given alternative, which is precisely the blocking set, or equivalently, the maximum biclique in the blocking graph. Now, the size of this biclique in the blocking graph involves the number of copies, which is precisely the weights of the edges leftover from the min-cut. Since the min-cut minimizes the weights cut, the constructed biclique is the biggest possible. To be precise, \(Q+K=2mn-n\), where \(Q\) is the size of the biclique, \(K\) is the min-cut, and \(2mn-n\) is the total weight of the flow graph (excluding infinities), or equivalently, the total number of vertices in the blocking graph. With \(Q\) obtained, the correspondence in \Cref{prop:blocking-set-blocking-graph-equivalence} where \(Q=mk+nb\) allows us to check \(\epsilon-\)PVC membership with \(Q\leq (1+\epsilon)nm\) as desired.
\end{proof}

\subsubsection{Runtime}\label{subsubsec:critical-epsilon-runtime}
The runtime of the algorithm is chiefly that of the construction of the flow graph, and finding the min-cut. The flow graph construction takes \(O(nm)\) time, since the flow graph could be fully connected. For example, consider the case where the given alternative is ranked first by all voters. The min-cut algorithm depends on implementation. By the max-flow min-cut theorem, we could use standard max-flow algorithms. Dinic's algorithm \citep{Dinitz1970} requires \(O(|V|^2|E|)\), which in our case where \(|V|=O(m+n)\) and \(|E|=O(mn)\), the algorithm runs in \(O(mn(m+n)^2)\).

\section{Additional Experiment Details}
We present additional details for the experiments described in \Cref{sec:experimental-results}.
\subsection{Details on Topics}
Below we present the long list of topics and their degree of polarization.

\subsubsection{Long List of Topics}\label{subsubsec:long-list-of-topics}
\begin{itemize}
    \item How should we increase the general public's trust in US elections?
    \item What are the best policies to prevent littering in public spaces?
    \item What are your thoughts on the way university campus administrators should approach the issue of Israel/Gaza demonstrations?
    \item What should guide laws concerning abortion?
    \item What balance should exist between gun safety laws and Second Amendment rights?
    \item What role should the government play in ensuring universal access to healthcare?
    \item What balance should be struck between environmental protection and economic growth in climate policy?
    \item What principles should guide immigration policy and the path to citizenship?
    \item What limits, if any, should exist on free speech regarding hate speech?
    \item What responsibilities should tech companies have when collecting and monetizing user data?
    \item What role should artificial intelligence play in society, and how should its risks be governed?
    \item What reforms, if any, should replace or modify the Electoral College?
    \item What strategies should guide policing to address bias and use-of-force concerns while maintaining public safety?
\end{itemize}

The Likert scores for each topic used to calculate degree of polarization can be found in \Cref{fig:likert_histograms}.
\begin{figure}
    \centering
    \includegraphics[width=.5\linewidth]{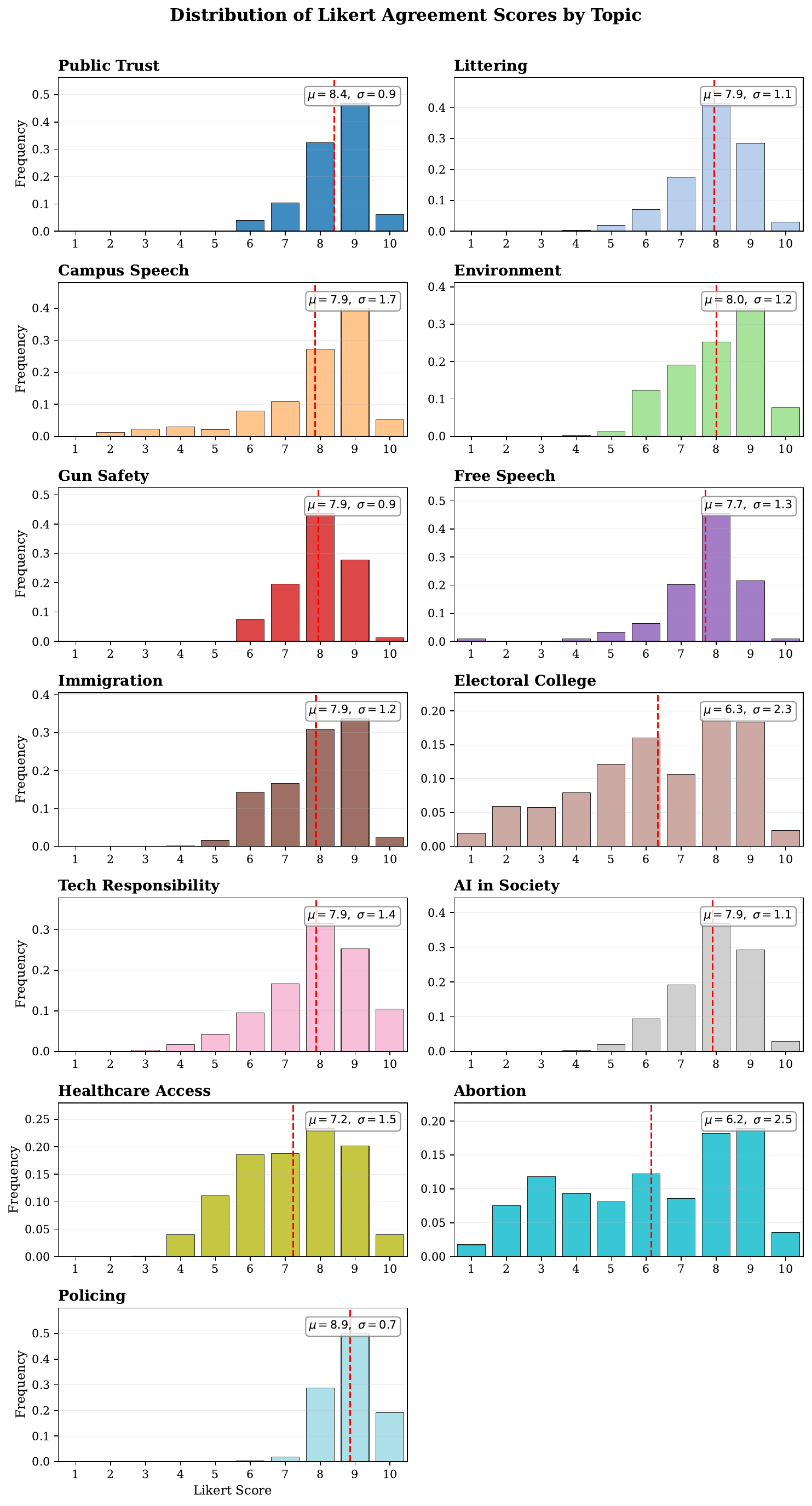}
    \caption{Likert scores of 100 voters on 100 generated statements. Higher Likert scores indicate higher agreement. The most polarizing topics have average Likert scores close to 5 and a wide spread. The top 3 topics by polarization are abortion, Electoral College, and healthcare. The bottom 3 polarizing topics are environment, policing, and public trust in elections.}
    \label{fig:likert_histograms}
\end{figure}

\newpage

\subsection{Example Persona}\label{subsec:example-persona}
In this section we present an example persona that we use as a voter and for generating statements.

\begin{promptbox}{Example Persona}
age: 72
sex: Female
race: White alone
ancestry: Irish
household language: English only
education: Bachelor's degree
employment status: Civilian employed, at work
class of worker: Employee of a private for-profit company or business, or of an individual, for wages, salary, or commissions
industry category: MED-Nursing Care Facilities (Skilled Nursing Facilities)
occupation category: MED-Registered Nurses
detailed job description: Provides patient care in a nursing home
income: 135500.0
marital status: Widowed
household type: Cohabiting couple household, no children of the householder less than 18
family presence and age: No family under 18
place of birth: Missouri/MO
citizenship: Born in the United States
veteran status: Non-Veteran
disability: None
health insurance: With health insurance coverage
big five scores: Openness: Average, Conscientiousness: Average, Extraversion: Average, Agreeableness: Extremely High, Neuroticism: Low
defining quirks: Collects vintage medical equipment
mannerisms: Speaks with a soft, soothing voice
personal time: Spends free time gardening or reading
lifestyle: Active and community-oriented
ideology: Conservative
political views: Republican
religion: Catholic
\end{promptbox}

\subsection{Alternative Methods to Generate Alternatives}\label{subsec:alternative-methods-to-generate-alternatives}
\subsubsection{Description of Methods to Generate Alternatives}\label{sssec:description-of-methods-to-generate-alternatives}

In addition to the experiment set-up presented in the body (listed below as Persona Only), we also explored three  different methods of generating statements.
\begin{itemize}
    \item \textbf{Blind}: Given only the topic, the LLM is queried using verbalized sampling \citep{zhang2025verbalizedsamplingmitigatemode} to generate diverse statements.
    \item \textbf{Persona Only}: Given the topic and a persona to conditioned on, the LLM is queried to produce a response. This is used for the experiments presented in the body.
    \item \textbf{Deliberation Round Only}: Within a rep, given 100 statements generated using Persona, the LLM is queried using verbalized sampling \citep{zhang2025verbalizedsamplingmitigatemode} to generate diverse statements taking into account the 100 statements. This is to simulate a "deliberation round" where a user sees statements written by others and is informed by their stances to write a new statement.
    \item \textbf{Persona+Deliberation Round}: Within a rep, first generate 100 statements using Personas. Then conditioning on each persona, the LLM is queried to produce a single statement that also takes into account the other 99 statements.
\end{itemize}

In \Cref{fig:combined_polarization_plots_vertical_top_three,fig:combined_polarization_plots_vertical_bottom_three} we show the CDFs for the same experiment as in the body but using different methods of generating alternatives. Using the Persona method gives a more interesting distribution of alternatives and has the added benefits of being relatively simple and motivated by the theory in Section \ref{sec:distr}. 

\begin{figure}[htbp]
    \centering
    \begin{subfigure}[b]{0.9\textwidth}
        \centering
        \includegraphics[width=\linewidth]{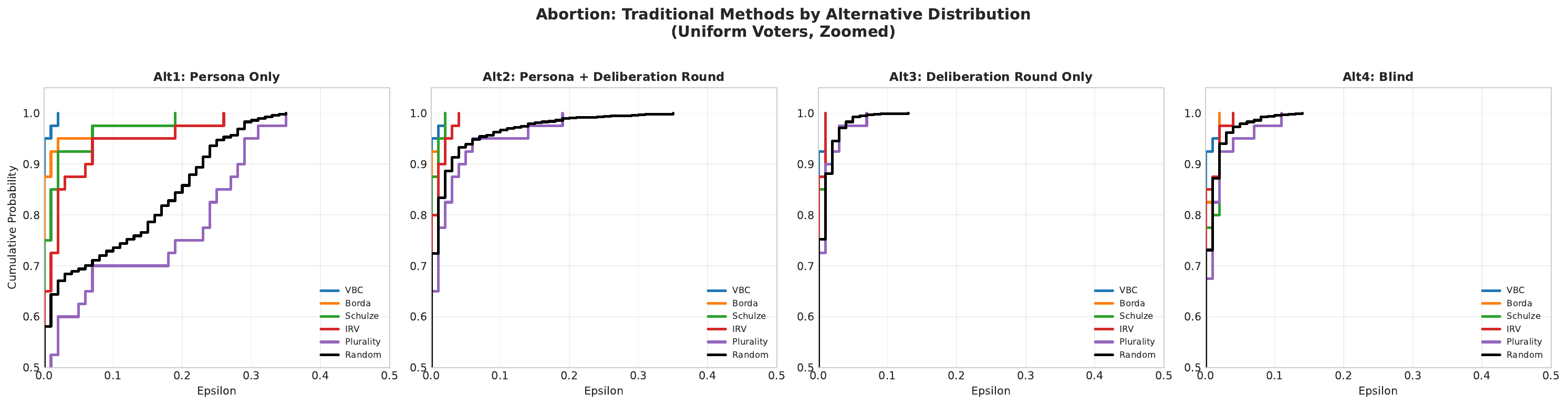}
        \caption{Abortion}
        \label{fig:cdf_abortion}
    \end{subfigure}

    \vspace{10pt}

    \begin{subfigure}[b]{0.9\textwidth}
        \centering
        \includegraphics[width=\linewidth]{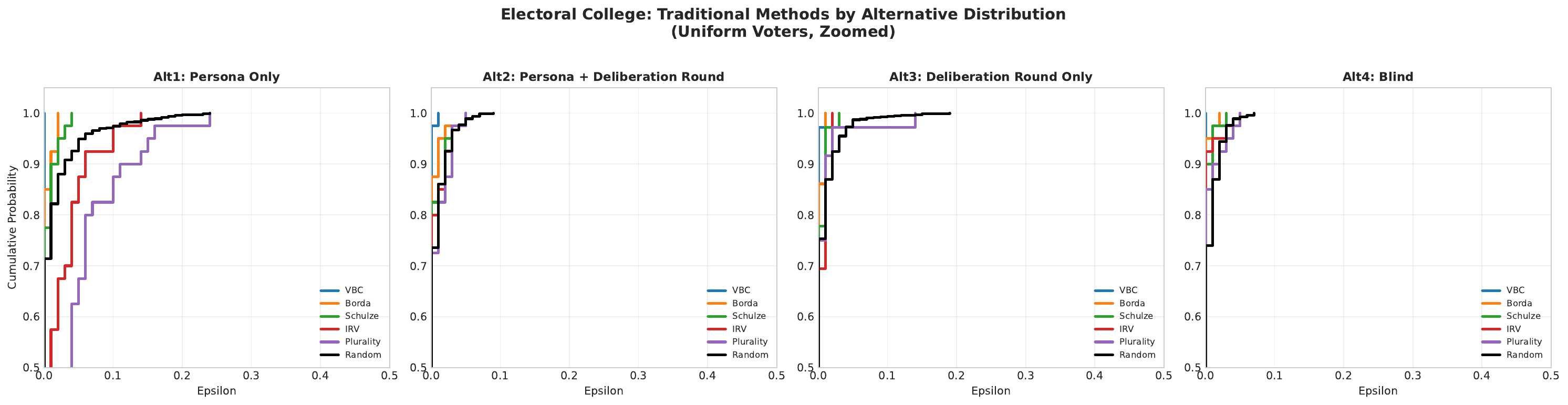}
        \caption{Electoral College}
        \label{fig:cdf_electoral}
    \end{subfigure}

    \vspace{10pt}

    \begin{subfigure}[b]{0.9\textwidth}
        \centering
        \includegraphics[width=\linewidth]{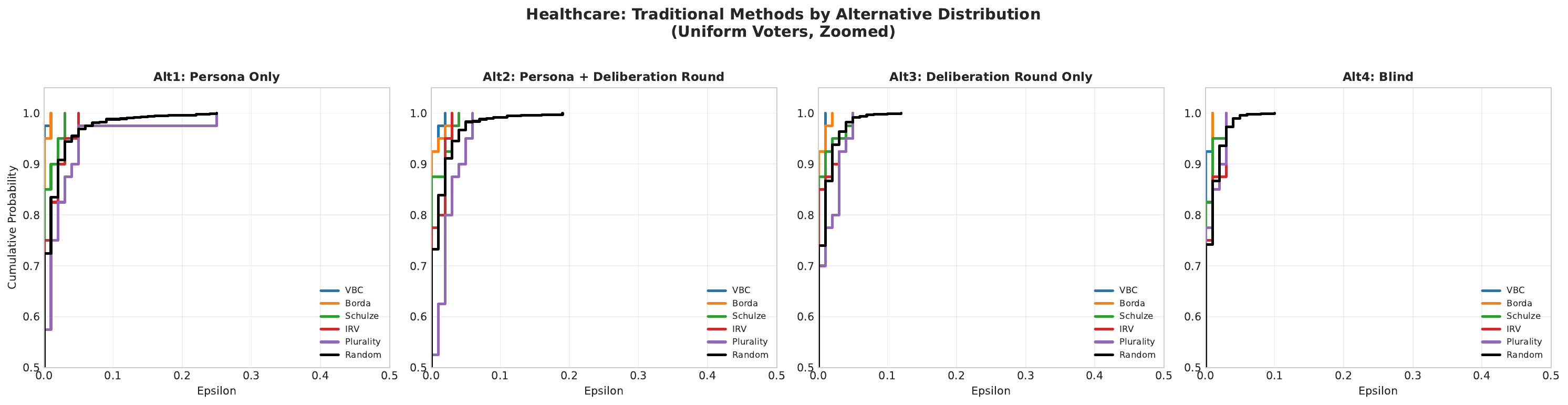}
        \caption{Healthcare}
        \label{fig:cdf_healthcare}
    \end{subfigure}
    \caption{CDF of critical epsilons across various topics (top three topics by degree of polarization). The leftmost subplot (Persona) allows for clearer analysis than other alternative generation methods.} 
    \label{fig:combined_polarization_plots_vertical_top_three}
\end{figure}

\begin{figure}[htbp]
    \centering
    \begin{subfigure}[b]{0.9\textwidth}
        \centering
        \includegraphics[width=\linewidth]{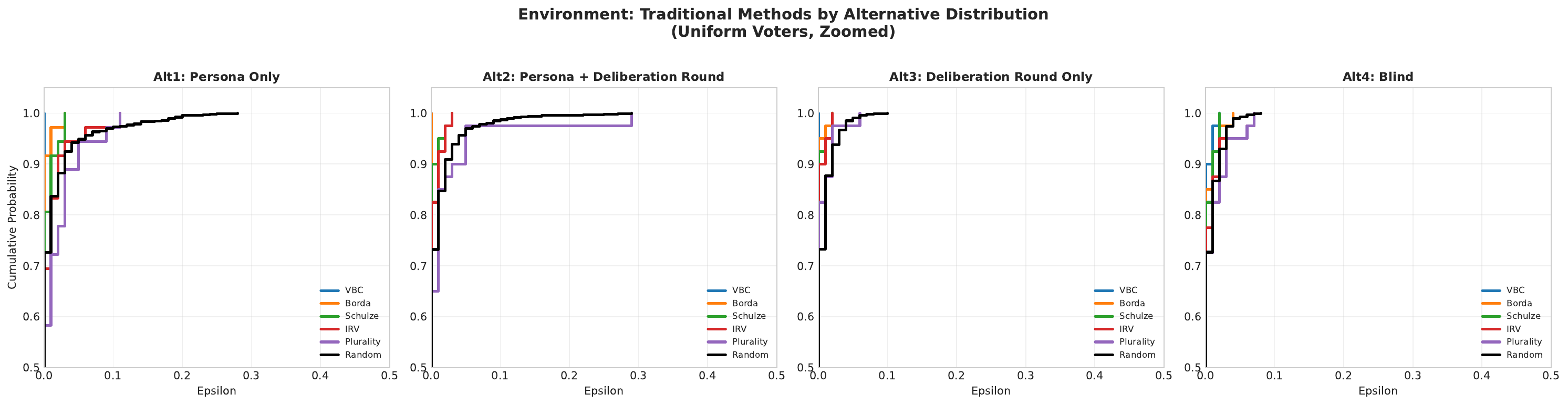}
        \caption{Environment}
        \label{fig:cdf_environment}
    \end{subfigure}

    \vspace{10pt}

    \begin{subfigure}[b]{0.9\textwidth}
        \centering
        \includegraphics[width=\linewidth]{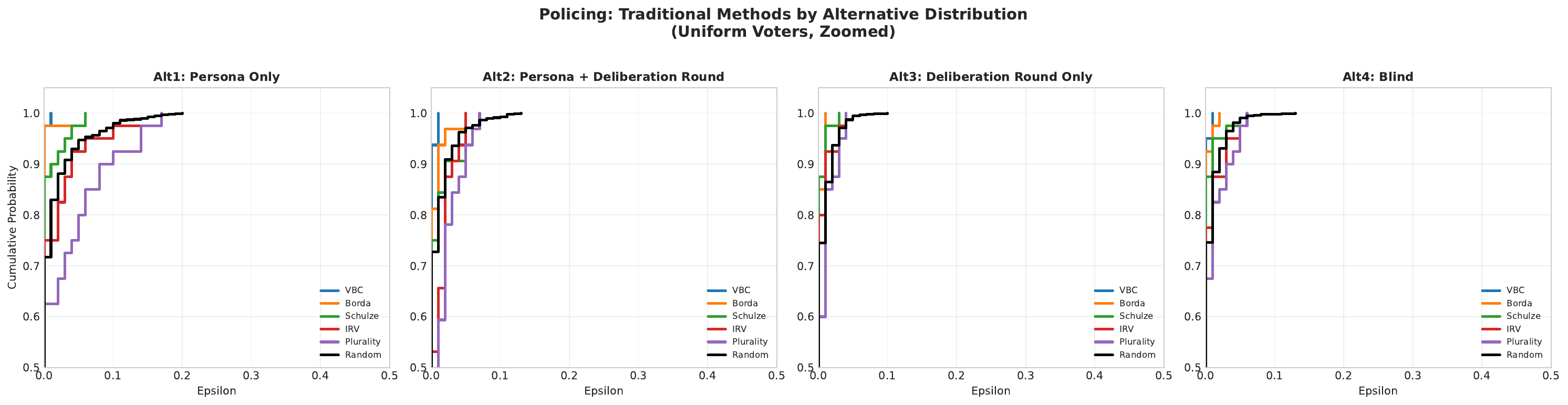}
        \caption{Policing}
        \label{fig:cdf_policing}
    \end{subfigure}

    \vspace{10pt}

    \begin{subfigure}[b]{0.9\textwidth}
        \centering
        \includegraphics[width=\linewidth]{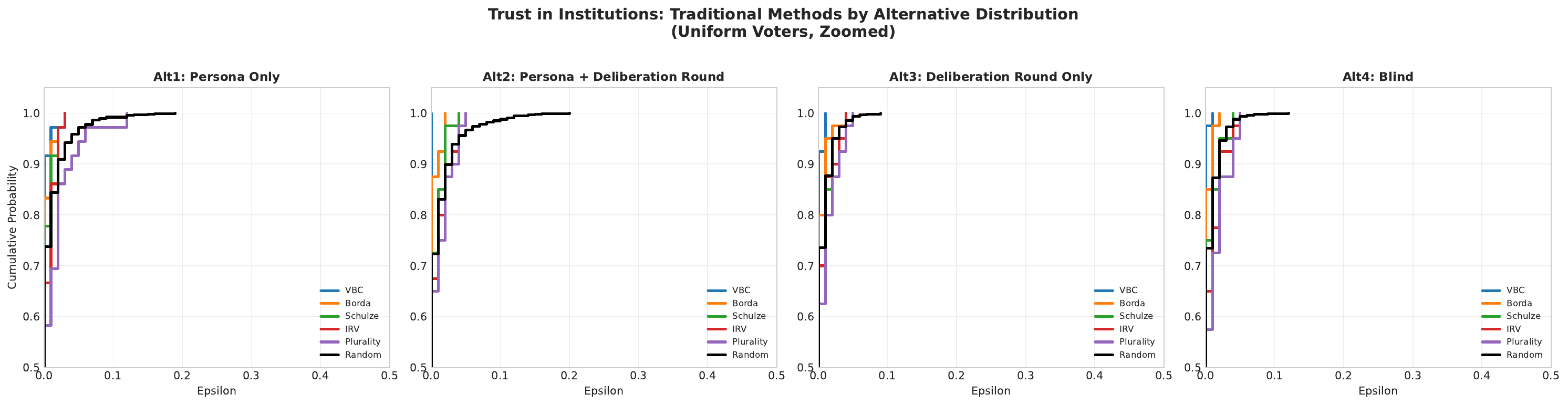}
        \caption{Public Trust}
        \label{fig:cdf_trust}
    \end{subfigure}
    \caption{CDF of critical epsilons across various topics (bottom three topics by degree of polarization). The leftmost subplot (Persona) allows for clearer analysis than other alternative generation methods.}
    \label{fig:combined_polarization_plots_vertical_bottom_three}
\end{figure}

\subsection{Building Preference Profiles}\label{subsec:building-preference-profiles}

In this section, we discuss why we selected iterative ranking as our process for using an LLM to build a preference profile.  

\subsubsection{The Problem of Preference Degeneracy}

Preference degeneracy occurs when the preference rankings produced by generative AI are very similar even for very different profiles. We found that this can happen when the AI is ``overwhelmed'' with the task complexity of ranking too many alternatives at once, and outputs a default ranking of the alternatives (for example, in the order presented to the AI, or its reverse). 

\subsubsection{Mitigating Preference Degeneracy}\label{ssc:mitigate-preference-degeneracy}
There are two natural ways to mitigate preference degeneracy:
\begin{enumerate}
    \item Increase model capabilities.\label{item:capabilities}
    \item Scaffold the task easier to make it easier for the model to complete.\label{item:scaffold}
\end{enumerate}

\Cref{item:capabilities} has the tradeoff of budget concerns. To illustrate, order of magnitude capability jumps between \texttt{gpt-5-nano}, \texttt{gpt-5-mini}, and \texttt{gpt-5.2} are accompanied by similar order of magnitude price jumps. We fix the model as \texttt{gpt-5-mini} as a compromise between capability and price.

Another axis of increasing model capabilities is to increase reasoning effort (the length of the model's chain-of-thought). As the number of tokens get higher, there is also an associated cost increase. Latency also increases as more tokens need to be generated before the tokens related to the extractable ranking is produced. We ran a sweep on the available options for reasoning efforts in \Cref{tab:degeneracy-mitigation}.

There are multiple ways to go about \Cref{item:scaffold}:
\paragraph{Pairwise comparison}\label{par:pairwise-comparison} We repeatedly show the model two alternatives, and ask the model which one to pick, until a full ranking is obtained. However, if there are \(n\) voters and \(m\) alternatives, and an API call takes \(k\) tokens, then the number of tokens required (which is usually dominated by input tokens for seeding techniques) is \(O(mnk\log m)\). For a reasonable experiment of \(m=100,n=100,k=500\), a single preference profile will take about 33 million tokens, which is about \$8.25. 
Using this option, the full experiment will cost: 6 topics \(\times\) 10 repetitions \(\times\) 3 voter distributions \(\times\) \$8.25/profile \(=\) \$1,485.

\paragraph{Direct Ranking}\label{par:direct-ranking} The other extreme is to simply give the model all the statements and ask it to output its ranking in one API call. This is the cheapest possible approach. Unfortunately, this approach frequently gives degenerate rankings, even for the most capable models currently available.

\paragraph{Top-K/Bottom-K Ranking (Iterative Ranking)} Given all remaining alternatives, we ask the model to pick its top K statements and bottom K statements. We then iterate for \(m/2k\) rounds with the subsequent set of remaining alternatives until all alternatives haven been chosen. 

\paragraph{Scoring} Given a set of alternatives, we ask the model to give a score to each alternative. The benefit to this approach is that compared to producing the ranking, the model is not heavily constrained by other alternatives it has processed or will process when processing a new alternative. In other words, the model could ignore the scores it gave to any previous alternatives and focus on giving a score to the current alternative. When the scoring is done, a ranking is induced, and any duplicate scores could be randomly tie-broken. Another approach is to ask the model not to give duplicate scores. 

Because pairwise comparisons were too expensive and direct ranking produced degenerate rankings, we focus on the remaining two methods and report results in \Cref{tab:degeneracy-mitigation}. When we receive a ranking, we screen it for invalid rankings (e.g. hallucinated statement IDs, or duplicate statement IDs, etc). If it is invalid, we retry the query, up to 20 times.

\begin{table}[htbp]
\centering
\caption{Degeneracy Mitigation Experiment Results}
\label{tab:degeneracy-mitigation}
\begin{tabularx}{\textwidth}{>{\raggedright\arraybackslash}X l c c c c}
\toprule
\textbf{Mitigation} & \textbf{Reasoning} & \textbf{Unique} & \textbf{Pres. Order} & \textbf{API} & \textbf{Avg Time} \\
                    & \textbf{Effort}    & \textbf{Rankings} & \textbf{Corr.} & \textbf{Calls} & \textbf{/Call} \\
\midrule
Top-K/Bottom-K & minimal & $79/100$  & $0.029$  & $1307$     & $3.3$s \\
Top-K/Bottom-K & low     & $98/100$  & $0.043$  & $567$      & $11.1$s \\
Top-K/Bottom-K & medium  & $100/100$ & $0.033$  & $\sim 500$ & $\sim 30$s \\
\midrule
Scoring & minimal & $99/100$  & $-0.126$ & $178$      & $8.9$s \\
Scoring & low     & $100/100$ & $-0.987$ & $100$      & $16.0$s \\
Scoring & medium  & $98/100$  & $-0.952$ & $\sim 100$ & $\sim 30$s \\
\bottomrule
\end{tabularx}
\begin{tablenotes}
\small
\item Note: ``Pres. Order Corr.'' stands for presentation order correlations. Values near $\pm 1$ indicate degenerate rankings.
\end{tablenotes}
\end{table}

As shown in \Cref{tab:degeneracy-mitigation}, the scoring method frequently resulted in degenerate rankings as measured by correlation with presentation order. To minimize preference degeneracy, we use the top-K/bottom-K method for all of our experiments. We found that the low reasoning effort produce a high number of unique rankings with minimal retries (67 compared to minimal's 807) and have better latency (11.1s vs medium's 30s). The rankings produced with low reasoning also has a Spearman's correlation of 0.802 with that of medium reasoning. We thus generate the preference profiles for our full experiment using the top-K/bottom-K approach (iterative ranking) using \texttt{gpt-5-mini} on \texttt{low} reasoning effort.

\subsection{Using different methods to find common ground in real election instances}\label{subsec:bridgingness-of-voting-methods-in-real-election-instances}

We validate the pattern in \Cref{eq:vbc-beats-plurality} on real-world election instances from Preflib \citep{MW13}. We plot our results in \Cref{fig:bridgingness-comparison-full}.

\begin{figure}[!htb]
    \centering
    \begin{subfigure}[b]{0.48\textwidth}
        \centering
        \includegraphics[width=\linewidth]{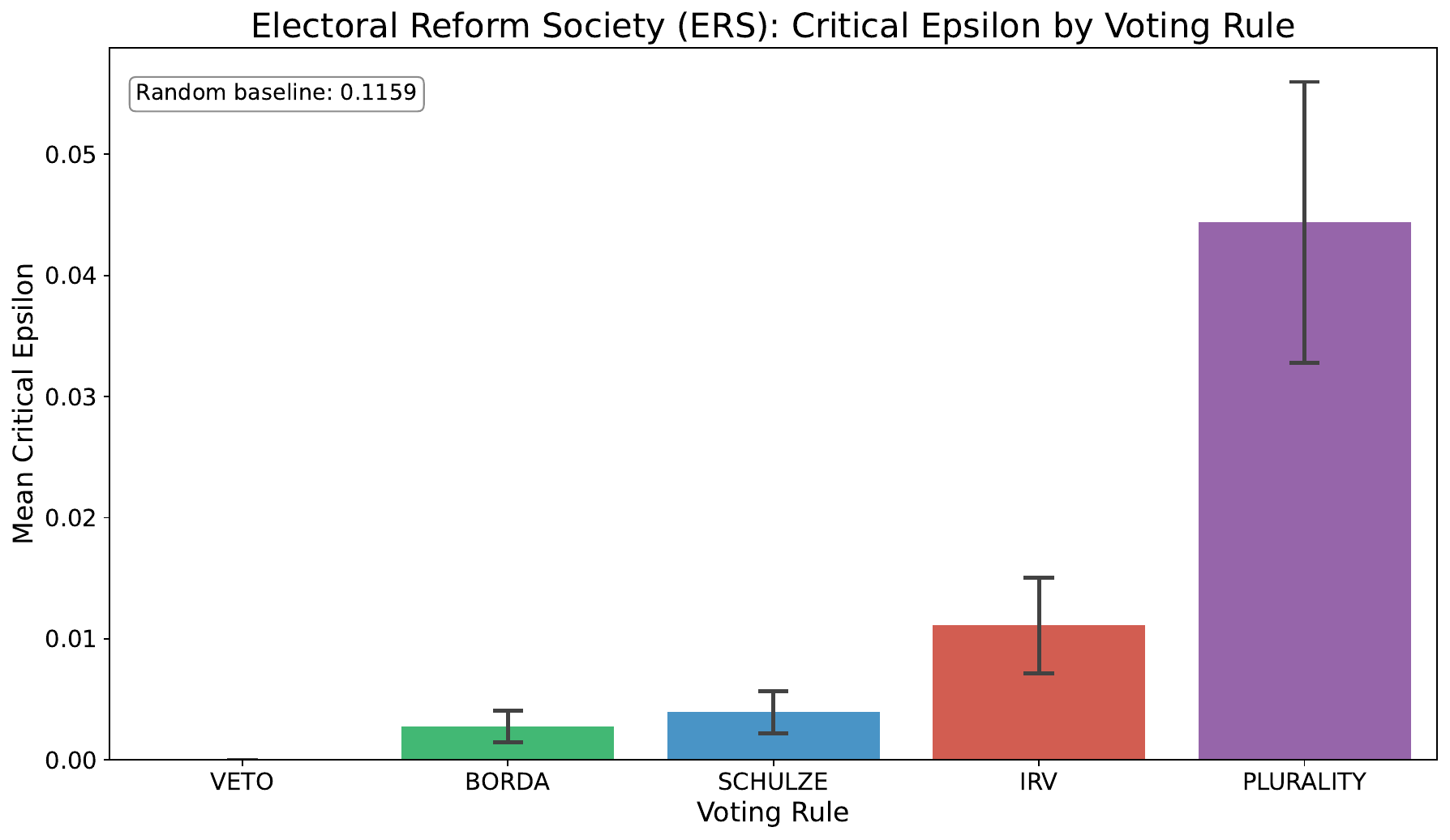}
        \caption{Critical epsilons for Electoral Reform Society \citep{TidemanPlassmann2012,TidemanPlassmann2014a,PlassmannTideman2014}}
        \label{fig:eps-ers}
    \end{subfigure}
    \hfill
    \begin{subfigure}[b]{0.48\textwidth}
        \centering
        \includegraphics[width=\linewidth]{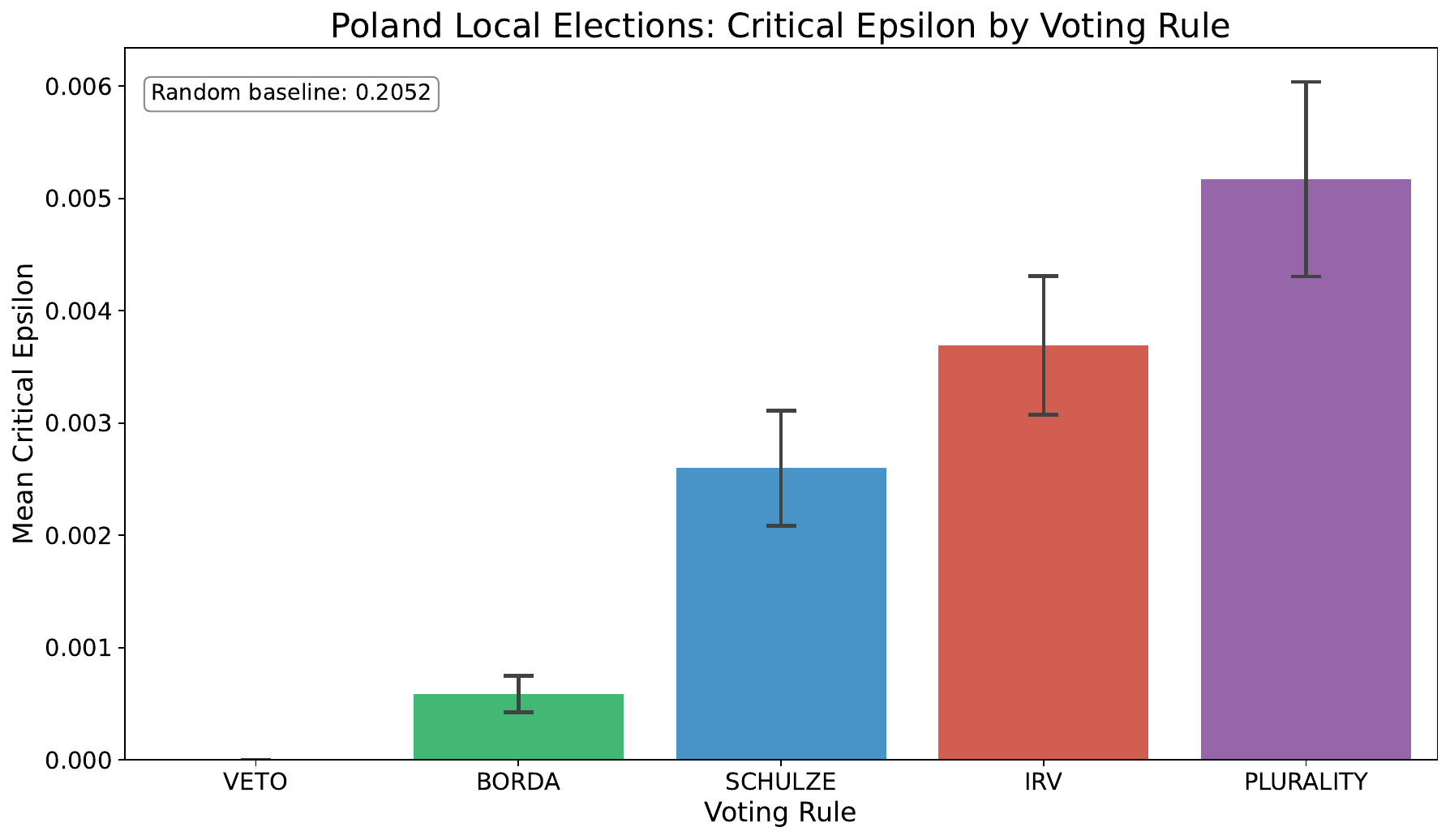}
        \caption{Critical epsilons for Poland Local Elections \citep{Boehmer2022}}
        \label{fig:eps-polish}
    \end{subfigure}

    \vspace{10pt}

    \begin{subfigure}[b]{0.48\textwidth}
        \centering
        \includegraphics[width=\linewidth]{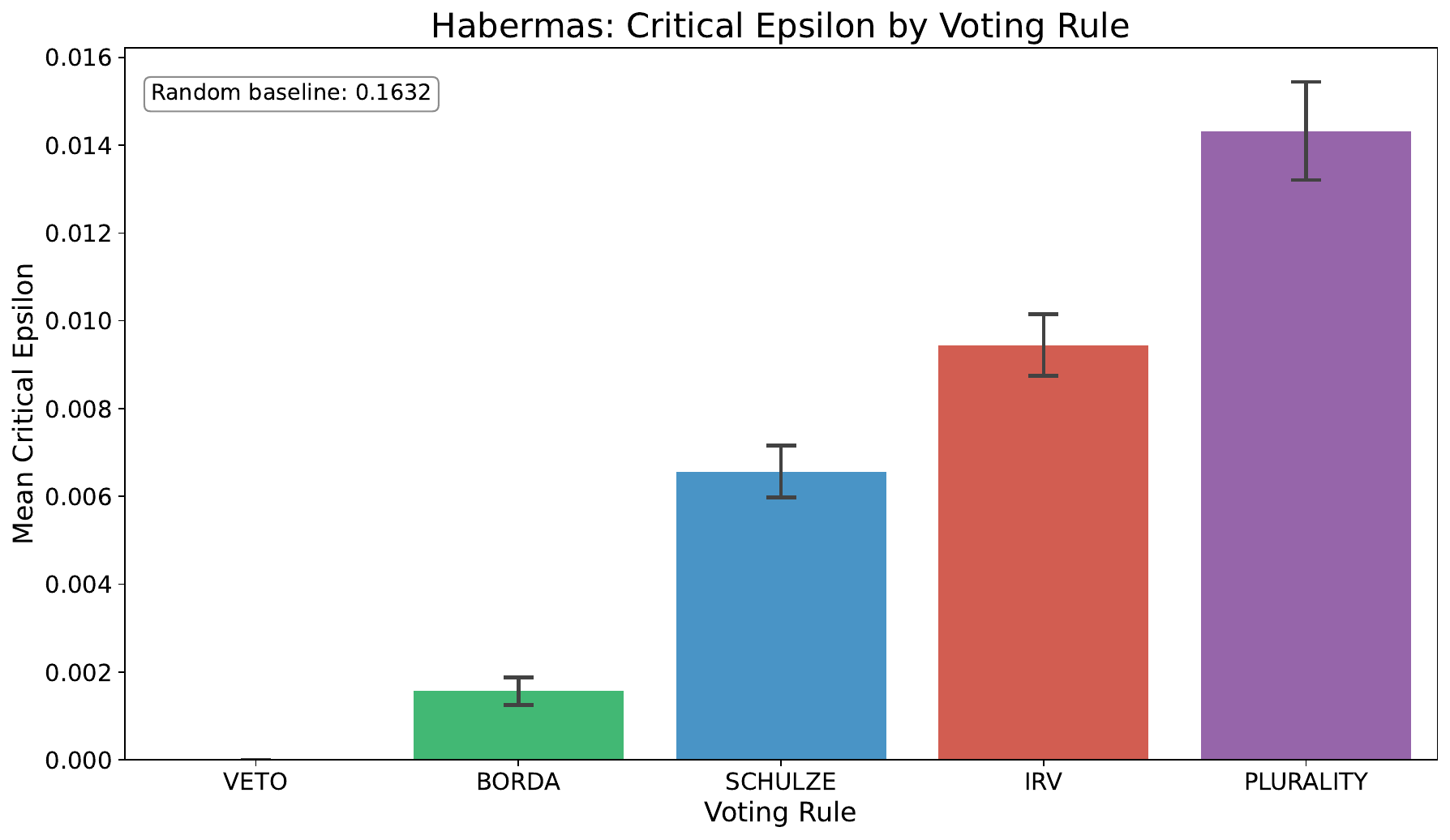}
        \caption{Critical epsilons for Habermas \citep{TBJ+24}}
        \label{fig:eps-habermas}
    \end{subfigure}
    \hfill
    \begin{subfigure}[b]{0.48\textwidth}
        \centering
        \includegraphics[width=\linewidth]{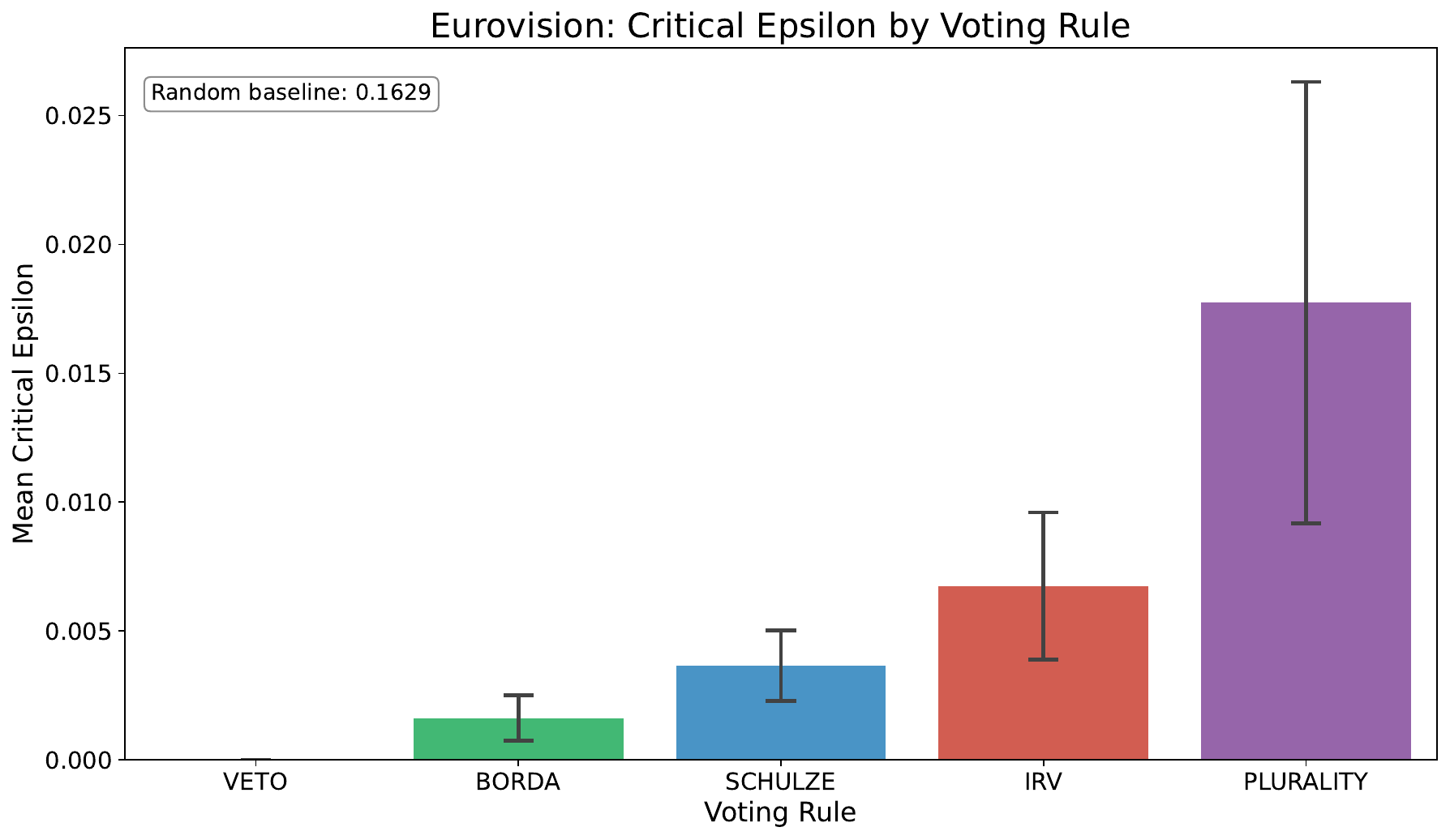}
        \caption{Critical epsilons for Eurovision \citep{Boehmer2023}}
        \label{fig:eps-eurovision}
    \end{subfigure}

    \caption{Comparison of critical epsilons across voting rules on various real-world datasets from PrefLib. Standard errors depicted.}
    \label{fig:bridgingness-comparison-full}
\end{figure}

\subsection{Inserting a New Statement to a Preference Profile}\label{subsec:inserting-a-new-statement-to-a-preference-profile}
In this section, we discuss different methods for inserting a single alternative into an already constructed ranking of 100 existing alternatives, which is a necessary step for evaluating generative voting rules.
\begin{itemize}
    \item Insert into 100: we give the model all 100 statements and ask the model to output the position for the new statement.
    \item Chunked insertion: we split the 100 statements into 5 contiguous chunks. We ask the model to output the position it will insert the new statement into for each chunk. We then tally the positions and according to its chunk assign the appropriate ranking. For example, if the new statement gets position 0, 3, 15, 8, 9 in the 5 chunks, then the insertion position is \(0+3+15+8+9=35\).
    \item Pairwise comparison: we ask the model to compare the new statement with every other statement in separate API calls to obtain the insertion position.
    \item Iterative ranking: we rerun the algorithm for building preference profiles as in \Cref{para:preferences-profile} and extract the resulting position of the new statement as its insertion position.
\end{itemize}
In \Cref{fig:insertion-methods-comparison}, we evaluate the performance of these insertion algorithms by randomly selecting a statement from a preference profile to be (re-)inserted. We did this for 10 randomly sampled statements and 100 voters for 1,000 insertion tests for each method. We found that the iterative ranking provided the best fit while still being cost-efficient, and this is the method used in our main experiments.

\begin{figure}[!htb]
    \centering
    \begin{subfigure}[b]{0.48\textwidth}
        \centering
        \includegraphics[width=\linewidth]{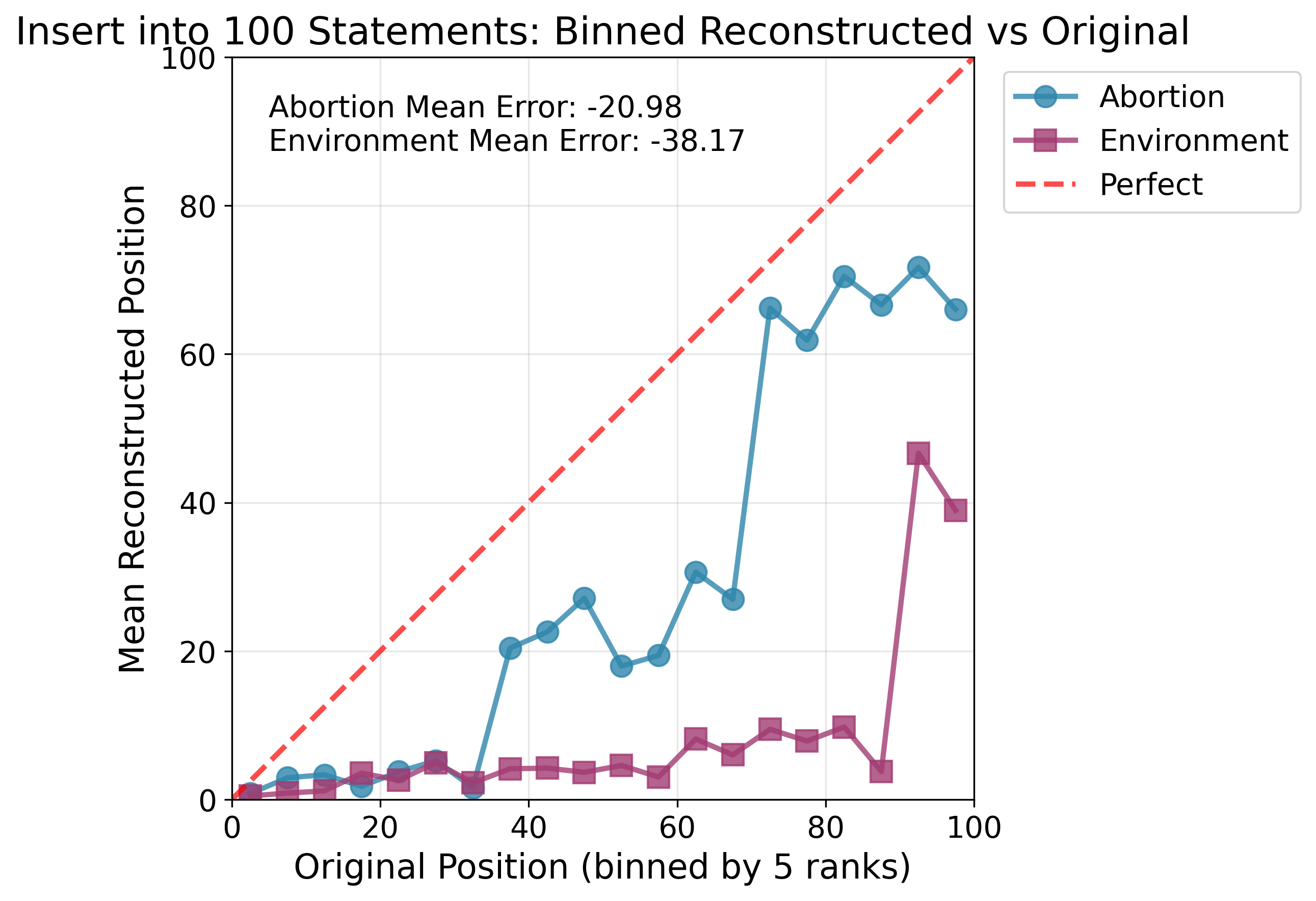}
        \caption{Insert 100 Method}
        \label{fig:insert_100}
    \end{subfigure}
    \hfill
    \begin{subfigure}[b]{0.48\textwidth}
        \centering
        \includegraphics[width=\linewidth]{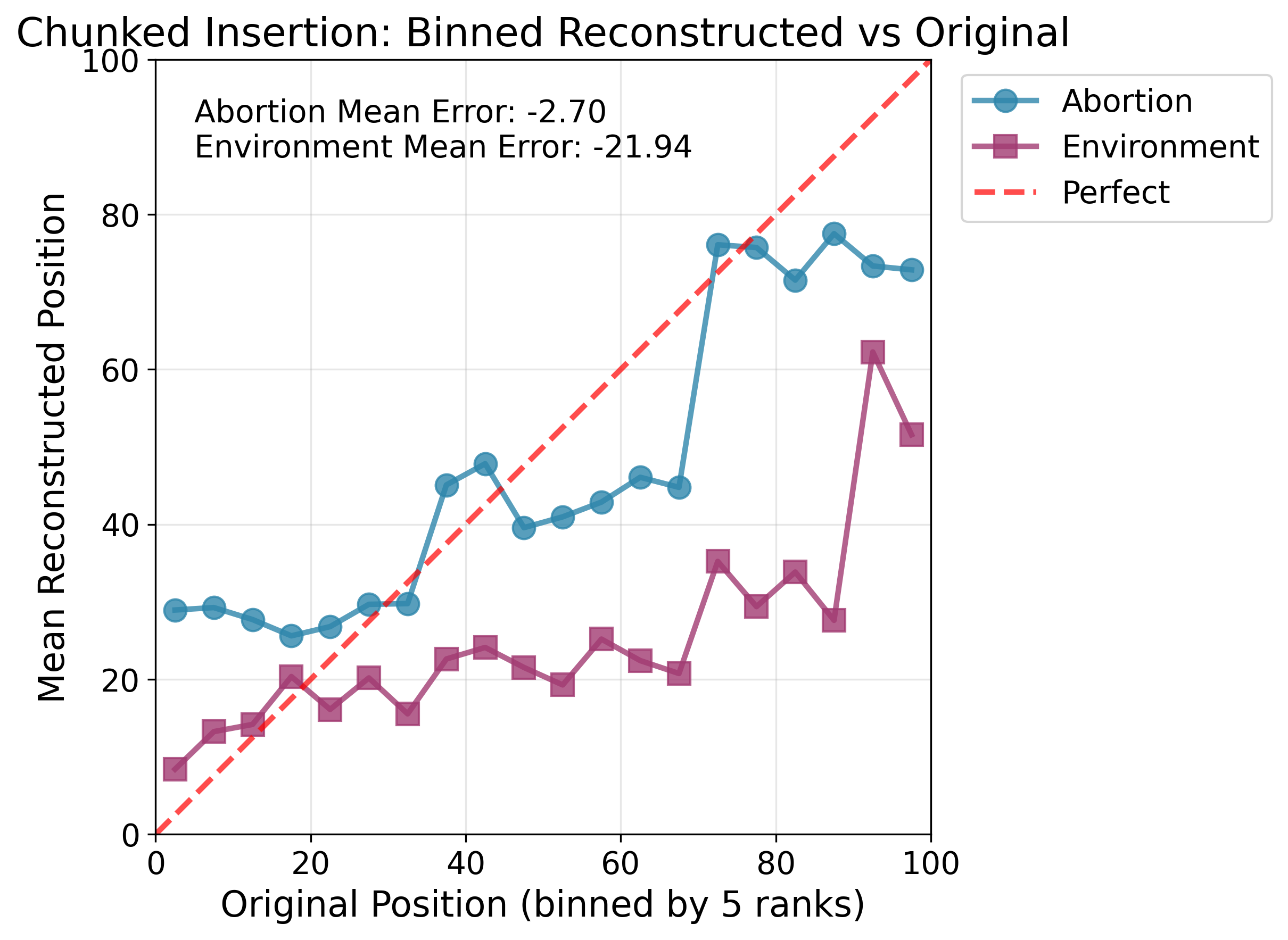}
        \caption{Chunked Method}
        \label{fig:chunked}
    \end{subfigure}

    \vspace{10pt} 
    
    \begin{subfigure}[b]{0.48\textwidth}
        \centering
        \includegraphics[width=\linewidth]{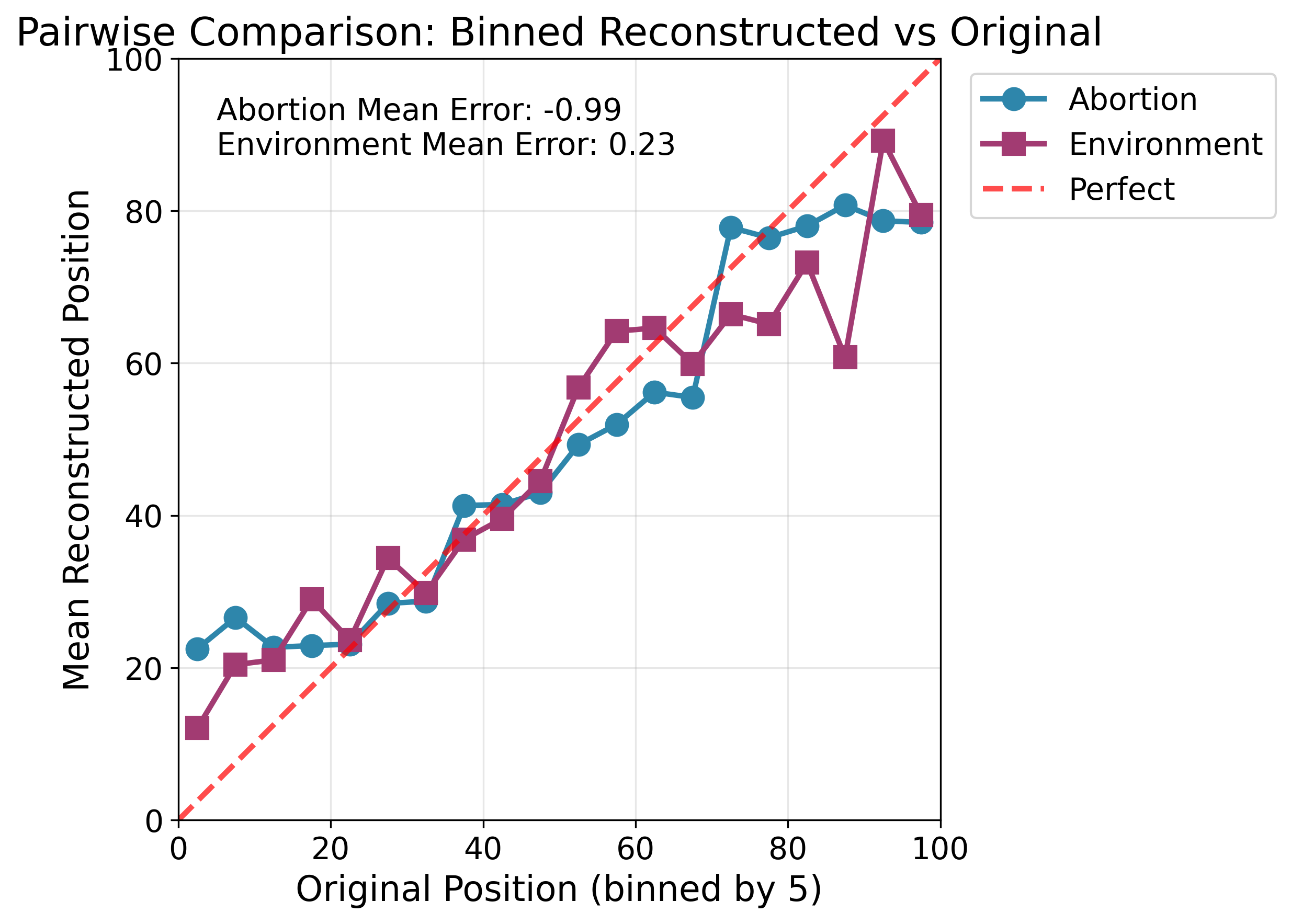}
        \caption{Pairwise Method}
        \label{fig:pairwise}
    \end{subfigure}
    \hfill
    \begin{subfigure}[b]{0.48\textwidth}
        \centering
        \includegraphics[width=\linewidth]{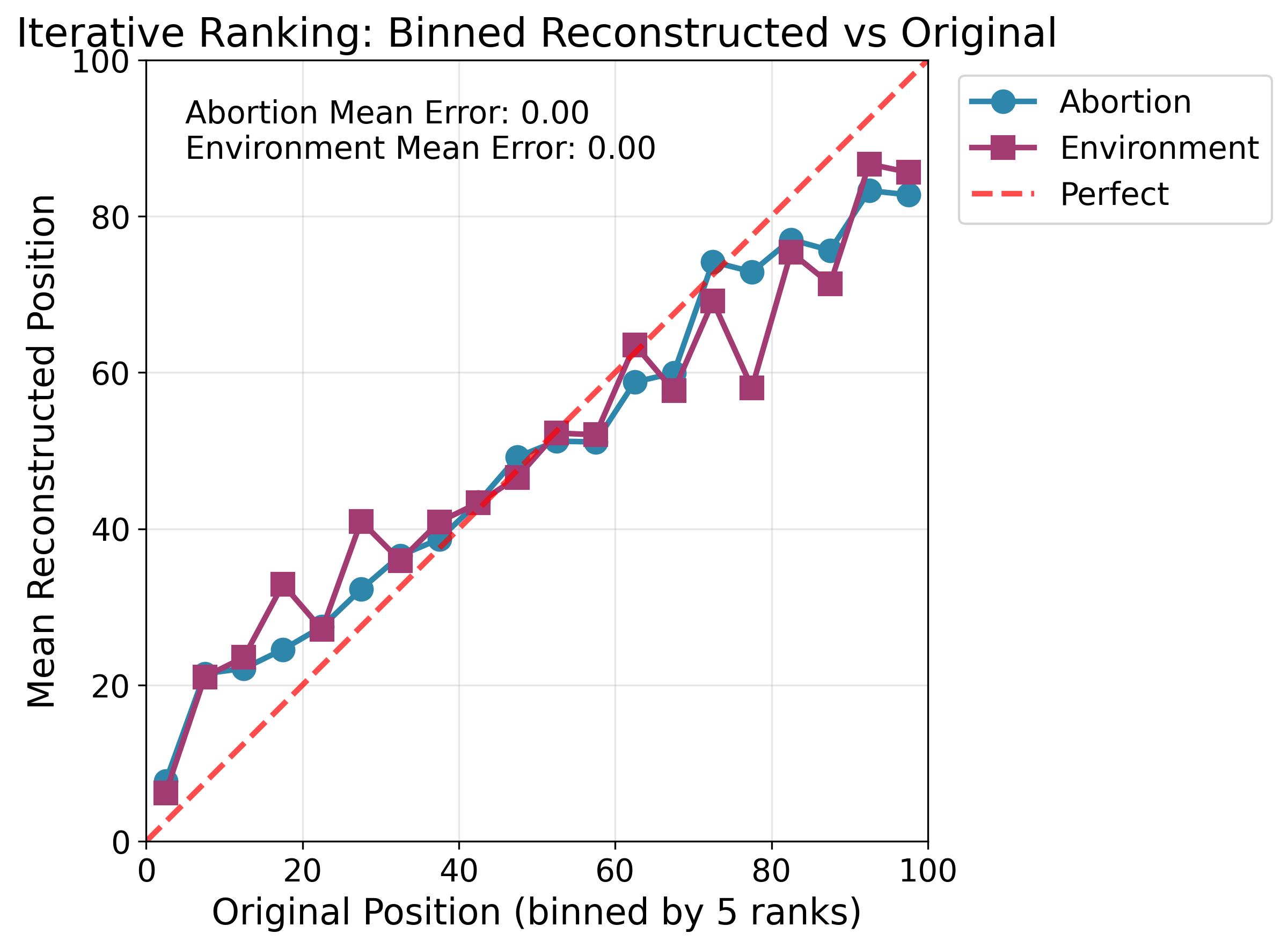}
        \caption{Iterative Ranking Method}
        \label{fig:iterative}
    \end{subfigure}
    
    \caption{Binned scatter plots comparing various statement insertion methods by plotting reconstruction position versus original position. Error units are in the ranks, e.g. an error of -20 means that the method ranks the new statement on average 20 positions below its original.}
    \label{fig:insertion-methods-comparison}
\end{figure}

\subsection{Detailed Results for Critical Epsilons of Voting Rules}\label{subsec:detailed-results-epsilons}
 In \Cref{tab:p99-epsilon,tab:pl0.01-epsilon,fig:cdf_traditional_group2}, we present additional detailed statistics of the critical epsilons achieved by various traditional voting rules.

\begin{table}[!htb]
\centering
\caption{P99 of critical epsilons for various voting rules across various topics}\label{tab:p99-epsilon}
\resizebox{\columnwidth}{!}{
\begin{tabular}{lcccccc}
\toprule
Method & Abortion & Electoral College & Healthcare & Policing & Environment & Trust in Institutions \\
\midrule
VBC & \textbf{0.0161} & \textbf{0.0000} & \textbf{0.0061} & \textbf{0.0061} & \textbf{0.0000} & \textbf{0.0230} \\
Borda & 0.1432 & 0.0200 & 0.0100 & 0.0366 & 0.0230 & 0.0265 \\
Schulze & 0.1432 & 0.0361 & 0.0300 & 0.0522 & 0.0300 & 0.0265 \\
IRV & 0.2327 & 0.1244 & 0.0500 & 0.1427 & 0.0925 & 0.0265 \\
Random & 0.3101 & 0.1700 & 0.1200 & 0.1401 & 0.1801 & 0.0801 \\
Plurality & 0.3344 & 0.2088 & 0.1720 & 0.1583 & 0.1030 & 0.0990 \\
\bottomrule
\end{tabular}

}
\end{table}

\begin{table}[!htb]
\centering
\caption{Percentage of critical epsilons that are less than 0.01 for various voting rules across various topics}\label{tab:pl0.01-epsilon}
\resizebox{\columnwidth}{!}{
\begin{tabular}{lcccccc}
\toprule
Method & Abortion & Electoral College & Healthcare & Policing & Environment & Trust in Institutions \\
\midrule
VBC & \textbf{95.0000} & \textbf{100.0000} & \textbf{97.5000} & \textbf{97.5000} & \textbf{100.0000} & \textbf{91.6667} \\
Borda & 87.5000 & 85.0000 & 95.0000 & \textbf{97.5000} & 91.6667 & 83.3333 \\
Schulze & 75.0000 & 77.5000 & 85.0000 & 87.5000 & 80.5556 & 77.7778 \\
IRV & 65.0000 & 42.5000 & 75.0000 & 75.0000 & 69.4444 & 66.6667 \\
Random & 58.1000 & 71.4000 & 72.4000 & 71.7000 & 72.6667 & 73.7778 \\
Plurality & 45.0000 & 22.5000 & 57.5000 & 62.5000 & 58.3333 & 58.3333 \\
\bottomrule
\end{tabular}
}
\end{table}

\begin{figure}[!htb]
    \centering
    \includegraphics[width=\linewidth]{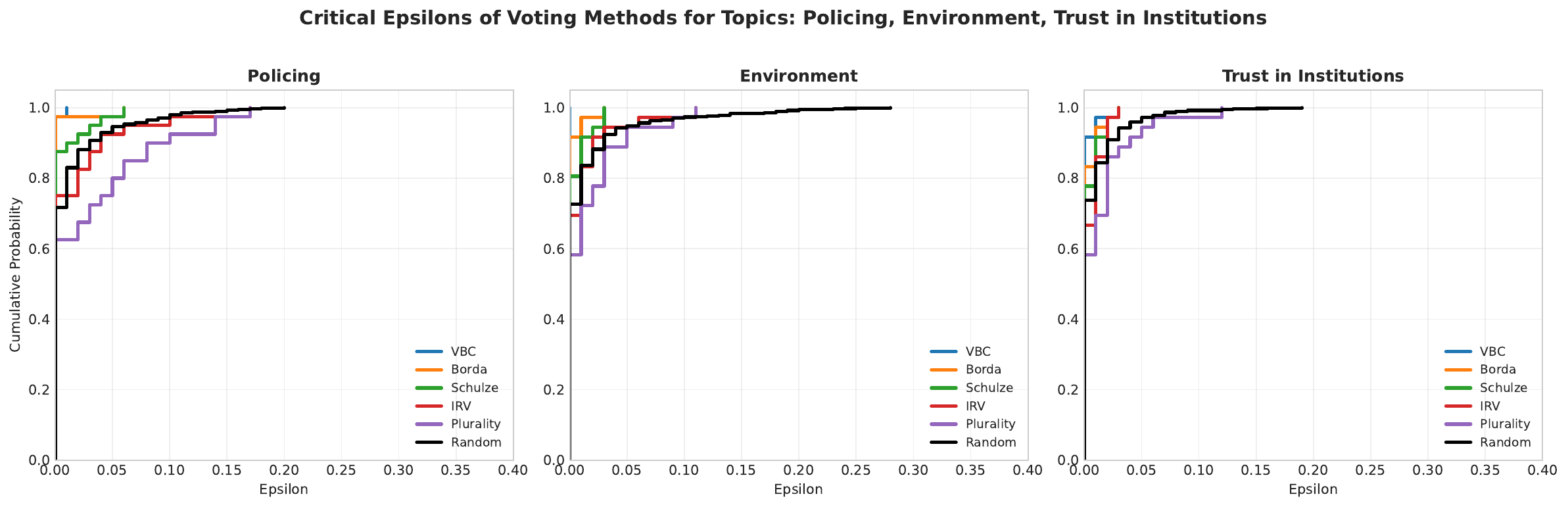}
    \caption{CDF of critical epsilons for least controversial topics for various voting rules.}
    \label{fig:cdf_traditional_group2}
\end{figure}

\FloatBarrier
\subsection{LLM-Powered Voting Rules as a Selector}\label{subsec:selector}
In addition to testing LLM methods that generate their own statement, we also tested methods that use LLMs to select an existing statement. We tested six different variants that vary in how much information is given to the model.
\begin{itemize}
    \item GPT-Select: the model is given the 20 sampled statements.
    \item GPT-Select+Rankings: the model is given the 20 sampled statements, and the $20\times 20$ preference profile.
    \item GPT-Select+Personas: the model is given the 20 sampled statements and the personas of the 20 voters.
    \item GPT-Full: the model is given the 100 statements.
    \item GPT-Full+Rankings: the model is given the 100 statements, and the $20\times 100$ preference profile.
    \item GPT-Full+Personas: the model is given the 100 statements and the personas of the 20 voters.
\end{itemize}
We present the results in \Cref{tab:crit-eps-selector-methods,tab:p99-critical-epsilon-selection-methods,tab:pl0.01-critical-epsilon-selection-methods}. VBC still outperforms the LLM methods. Interestingly, that providing more information, like voter personas or the rankings, slightly degrades performance.

\begin{table}[!htb]
\centering
\caption{Mean critical epsilons for VBC, voting rules with GPT as a selector and a random baseline across various topics. Smallest epsilons are bolded.}
\label{tab:crit-eps-selector-methods}
\resizebox{\columnwidth}{!}{
\begin{tabular}{lcccccc}
\toprule
Method & Abortion & Electoral College & Healthcare & Policing & Environment & Trust in Institutions \\
\midrule
VBC & \textbf{0.0008} & \textbf{0.0000} & \textbf{0.0003} & \textbf{0.0003} & \textbf{0.0000} & \textbf{0.0014} \\
GPT-Select & 0.0038 & 0.0048 & 0.0048 & 0.0108 & 0.0103 & 0.0094 \\
GPT-Sel+Rank & 0.0143 & 0.0365 & 0.0163 & 0.0085 & 0.0075 & 0.0056 \\
GPT-Sel+Pers & 0.0125 & 0.0028 & 0.0095 & 0.0195 & 0.0100 & 0.0067 \\
GPT-Full & 0.0013 & 0.0050 & 0.0068 & 0.0055 & 0.0014 & 0.0053 \\
GPT-Full+Rank & 0.0593 & 0.0425 & 0.0055 & 0.0115 & 0.0114 & 0.0075 \\
GPT-Full+Pers & 0.0050 & 0.0023 & 0.0083 & 0.0143 & 0.0086 & 0.0039 \\
Random & 0.0609 & 0.0108 & 0.0082 & 0.0103 & 0.0108 & 0.0072 \\
\bottomrule
\end{tabular}

}
\end{table}

\begin{table}[!htb]
\centering
\caption{P99 critical epsilons for for VBC, voting rules with GPT as a selector and a random baseline across various topics. Smallest epsilon for each topic is bolded.}
\label{tab:p99-critical-epsilon-selection-methods}
\resizebox{\columnwidth}{!}{
\begin{tabular}{lcccccc}
\toprule
Method & Abortion & Electoral College & Healthcare & Policing & Environment & Trust in Institutions \\
\midrule
VBC & 0.0161 & \textbf{0.0000} & \textbf{0.0061} & \textbf{0.0061} & \textbf{0.0000} & \textbf{0.0230} \\
GPT-Select & 0.0505 & 0.0422 & 0.0361 & 0.1166 & 0.1190 & 0.0400 \\
GPT-Sel+Rank & 0.1859 & 0.2088 & 0.2149 & 0.0722 & 0.0820 & 0.0265 \\
GPT-Sel+Pers & 0.1954 & 0.0422 & 0.0900 & 0.0900 & 0.0865 & 0.0365 \\
GPT-Full & \textbf{0.0100} & 0.0322 & 0.0666 & 0.0805 & 0.0100 & 0.0400 \\
GPT-Full+Rank & 0.3461 & 0.2205 & 0.0461 & 0.0922 & 0.0925 & 0.0465 \\
GPT-Full+Pers & 0.1032 & 0.0261 & 0.1254 & 0.1000 & 0.1700 & 0.0465 \\
Random Insertion & 0.3161 & 0.0861 & 0.0622 & 0.1449 & 0.0300 & 0.0600 \\
Random & 0.3101 & 0.1700 & 0.1200 & 0.1401 & 0.1801 & 0.0801 \\
\bottomrule
\end{tabular}

}
\end{table}

\begin{table}[!htb]
\centering
\caption{Percentage of critical epsilons that are less than 0.01 for VBC, voting rules with GPT as a selector and a random baseline across various topics. Largest percentages for each topic is bolded.}
\label{tab:pl0.01-critical-epsilon-selection-methods}
\resizebox{\columnwidth}{!}{%
\begin{tabular}{lcccccc}
\toprule
Method & Abortion & Electoral College & Healthcare & Policing & Environment & Trust in Institutions \\
\midrule
VBC & \textbf{0.950} & \textbf{1.000} & \textbf{0.975} & \textbf{0.975} & \textbf{1.000} & \textbf{0.917} \\
GPT-Select & 0.850 & 0.775 & 0.750 & 0.725 & 0.750 & 0.667 \\
GPT-Sel+Rank & 0.700 & 0.450 & 0.700 & 0.825 & 0.694 & 0.639 \\
GPT-Sel+Pers & 0.900 & 0.900 & 0.775 & 0.550 & 0.778 & 0.722 \\
GPT-Full & 0.875 & 0.700 & 0.675 & 0.875 & 0.861 & 0.778 \\
GPT-Full+Rank & 0.500 & 0.350 & 0.775 & 0.700 & 0.611 & 0.583 \\
GPT-Full+Pers & 0.900 & 0.875 & 0.750 & 0.650 & 0.889 & 0.889 \\
Random Insertion & 0.500 & 0.650 & 0.600 & 0.725 & 0.550 & 0.600 \\
Random & 0.581 & 0.714 & 0.724 & 0.717 & 0.727 & 0.738 \\
\bottomrule
\end{tabular}

}
\end{table}

\FloatBarrier
\subsection{Details on Clustered Voters}
\label{subsec:details-on-clustered-voters}

\subsubsection{Keywords Used for Classification}
To cluster voters, we applied case-insensitive keyword matching on the \verb|idealogy| field in the personas using the following sets of keywords.

\paragraph{Conservative/Traditional}

\begin{itemize}
    \item conservative
    \item traditional
    \item traditionalist
    \item libertarian
    \item libertarianism
    \item fiscal conservatism
    \item social conservatism
    \item christian values
    \item family values
    \item small government
\end{itemize}

\paragraph{Progressive/Liberal}

\begin{itemize}
    \item progressive
    \item liberal
    \item social liberal
    \item social justice
    \item feminist
    \item feminism
    \item egalitarian
    \item environmentalism
    \item environmentalist
    \item environmental conservation
    \item environmental protection
    \item social equality
    \item workers' rights
    \item social welfare
    \item humanism
    \item humanistic
    \item humanitarian
\end{itemize}

\subsubsection{More Results on Conservative Voters}
We report additional statistics in \Cref{tab:p99-critical-epsilon-generative-methods-conservative,tab:pl0.01-critical-epsilon-generative-methods-conservative}.

\begin{table}[!htb]
\centering
\caption{P99 critical epsilons for for VBC, generative voting rules and random baselines across various topics for conservative voters. Smallest epsilon for each topic is bolded.}
\label{tab:p99-critical-epsilon-generative-methods-conservative}
\resizebox{\columnwidth}{!}{
\begin{tabular}{lcccccc}
\toprule
Method & Abortion & Electoral College & Healthcare & Policing & Environment & Trust in Institutions \\
\midrule
VBC & \textbf{0.0161} & \textbf{0.0061} & \textbf{0.0061} & \textbf{0.0000} & \textbf{0.0061} & \textbf{0.0000} \\
GPT-Blind & 0.5005 & 0.0900 & 0.0461 & 0.1022 & 0.3588 & 0.1161 \\
GPT-Synthesize & 0.3300 & 0.1832 & 0.0944 & 0.2298 & 0.1166 & 0.3461 \\
GPT-Synth+Rank. & 0.3544 & 0.1261 & 0.1144 & 0.1022 & 0.2722 & 0.0300 \\
GPT-Synth.+Pers. & 0.2700 & 0.0461 & 0.0322 & 0.0200 & 0.0361 & 0.0100 \\
Random Insertion & 0.3261 & 0.4805 & 0.3683 & 0.3600 & 0.1000 & 0.2286 \\
Random & 0.4601 & 0.5100 & 0.4200 & 0.4201 & 0.3301 & 0.3901 \\
\bottomrule
\end{tabular}
}
\end{table}

\begin{table}[!htb]
\centering
\caption{Percentage of critical epsilons that are less than 0.01 for VBC, generative voting rules and random baselines across various topics for conservative voters. Largest percentages for each topic is bolded.}
\label{tab:pl0.01-critical-epsilon-generative-methods-conservative}
\resizebox{\columnwidth}{!}{
\begin{tabular}{lcccccc}
\toprule
Method & Abortion & Electoral College & Healthcare & Policing & Environment & Trust in Institutions \\
\midrule
VBC & \textbf{0.900} & \textbf{0.975} & \textbf{0.975} & \textbf{1.000} & \textbf{0.975} & \textbf{1.000} \\
GPT-Blind & 0.000 & 0.225 & 0.750 & 0.425 & 0.100 & 0.275 \\
GPT-Synthesize & 0.000 & 0.350 & 0.325 & 0.175 & 0.175 & 0.325 \\
GPT-Synth+Rank. & 0.025 & 0.400 & 0.575 & 0.425 & 0.500 & 0.675 \\
GPT-Synth.+Pers. & 0.250 & 0.550 & 0.725 & 0.750 & 0.575 & 0.900 \\
Random Insertion & 0.425 & 0.100 & 0.575 & 0.400 & 0.525 & 0.750 \\
Random & 0.338 & 0.242 & 0.609 & 0.389 & 0.484 & 0.539 \\
\bottomrule
\end{tabular}

}
\end{table}

\subsubsection{Results on Progressive Voters}
\label{app:progressive}
We present the critical epsilons of generative voting rules on progressive voters in \Cref{tab:mean-critical-epsilon-generative-methods-progressive,tab:p99-critical-epsilon-generative-methods-progressive,tab:pl0.01-critical-epsilon-generative-methods-progressive}.

\begin{table}[!htb]
\centering
\caption{Mean critical epsilons for for VBC, generative voting rules and random baselines across various topics for progressive voters. Smallest epsilon for each topic is bolded.}
\label{tab:mean-critical-epsilon-generative-methods-progressive}
\resizebox{\columnwidth}{!}{
\begin{tabular}{lcccccc}
\toprule
Method & Abortion & Electoral College & Healthcare & Policing & Environment & Trust in Institutions \\
\midrule
VBC & \textbf{0.0000} & \textbf{0.0018} & \textbf{0.0000} & \textbf{0.0005} & \textbf{0.0000} & \textbf{0.0003} \\
GPT-Blind & 0.0185 & 0.0280 & 0.0113 & 0.0048 & 0.0035 & 0.0215 \\
GPT-Synthesize & 0.0095 & 0.0103 & 0.0058 & 0.0135 & 0.0108 & 0.0200 \\
GPT-Synth+Rank. & 0.0055 & 0.0075 & 0.0005 & 0.0025 & 0.0025 & 0.0093 \\
GPT-Synth.+Pers. & 0.0058 & 0.0050 & 0.0013 & 0.0033 & 0.0018 & 0.0143 \\
Random Insertion & 0.2205 & 0.0130 & 0.0220 & 0.0532 & 0.0388 & 0.0120 \\
Random & 0.1751 & 0.0989 & 0.0428 & 0.0625 & 0.0382 & 0.0330 \\
\bottomrule
\end{tabular}

}
\end{table}

\begin{table}[!htb]
\centering
\caption{P99 critical epsilons for for VBC, generative voting rules and random baselines across various topics for progressive voters. Smallest epsilon for each topic is bolded.}
\label{tab:p99-critical-epsilon-generative-methods-progressive}
\resizebox{\columnwidth}{!}{
\begin{tabular}{lcccccc}
\toprule
Method & Abortion & Electoral College & Healthcare & Policing & Environment & Trust in Institutions \\
\midrule
VBC & \textbf{0.0000} & 0.0405 & \textbf{0.0000} & \textbf{0.0122} & \textbf{0.0000} & \textbf{0.0069} \\
GPT-Blind & 0.1622 & 0.1805 & 0.0822 & 0.0261 & 0.0283 & 0.0922 \\
GPT-Synthesize & 0.0844 & 0.0600 & 0.0300 & 0.1710 & 0.0927 & 0.1161 \\
GPT-Synth+Rank. & 0.0200 & 0.0461 & 0.0100 & 0.0322 & 0.0161 & 0.0661 \\
GPT-Synth.+Pers. & 0.0200 & \textbf{0.0300} & 0.0161 & 0.0300 & 0.0100 & 0.1083 \\
Random Insertion & 0.6444 & 0.1781 & 0.2544 & 0.3966 & 0.3149 & 0.0761 \\
Random & 0.4601 & 0.5100 & 0.4200 & 0.4201 & 0.3301 & 0.3901 \\
\bottomrule
\end{tabular}

}
\end{table}

\begin{table}[!htb]
\centering
\caption{Percentage of critical epsilons that are less than 0.01 for VBC, generative voting rules and random baselines across various topics for progressive voters. Largest percentages for each topic is bolded.}
\label{tab:pl0.01-critical-epsilon-generative-methods-progressive}
\resizebox{\columnwidth}{!}{
\begin{tabular}{lcccccc}
\toprule
Method & Abortion & Electoral College & Healthcare & Policing & Environment & Trust in Institutions \\
\midrule
VBC & \textbf{1.000} & \textbf{0.950} & \textbf{1.000} & \textbf{0.975} & \textbf{1.000} & \textbf{0.969} \\
GPT-Blind & 0.500 & 0.350 & 0.600 & 0.650 & 0.725 & 0.350 \\
GPT-Synthesize & 0.600 & 0.500 & 0.575 & 0.625 & 0.525 & 0.350 \\
GPT-Synth+Rank. & 0.650 & 0.600 & 0.950 & 0.850 & 0.775 & 0.650 \\
GPT-Synth.+Pers. & 0.650 & 0.650 & 0.900 & 0.800 & 0.825 & 0.500 \\
Random Insertion & 0.275 & 0.600 & 0.675 & 0.625 & 0.550 & 0.600 \\
Random & 0.338 & 0.242 & 0.609 & 0.389 & 0.484 & 0.539 \\
\bottomrule
\end{tabular}

}
\end{table}

\FloatBarrier
\subsection{Prompts}\label{subsec:prompts}
Here we present the prompt templates used to generate statements, build preference profiles, and insert statements.

\subsubsection{Generating Alternatives}

\paragraph{With persona, without deliberation round}
These are the prompts we used for the main experiments.

\begin{promptbox}{System Prompt}
You are writing a statement that reflects your perspective on a topic.
\end{promptbox}

\begin{promptbox}{User Prompt}
You are a person with the following characteristics:
{persona}

Topic: "{topic}"

Write a bridging statement expressing your views on this topic. Your statement should:
- Reflect your background, values, and life experiences
- Aim to find common ground or bridge different viewpoints
- Be 2-4 sentences long
- NOT write in first-person
- NOT explicitly reference your identity or demographics (avoid "As a [X]...")

Write only the statement:
\end{promptbox}

\paragraph{With persona, with deliberation round}
\begin{promptbox}{System Prompt}
You are writing a statement that reflects your perspective on a topic.
\end{promptbox}

\begin{promptbox}{User Prompt}
You are a person with the following characteristics:
{persona}

Topic: "{topic}"

Here are 100 statements from people with diverse perspectives on this topic:

{statements_list}

Write a NEW bridging statement expressing your views on this topic. Your statement should:
- Reflect your background, values, and life experiences
- Synthesize key themes you observed across the discussion
- Aim to find common ground or bridge different viewpoints
- Be 2-4 sentences long
- NOT write in first-person
- NOT explicitly reference your identity or demographics (avoid "As a [X]...")
- Be self-contained (do not reference "the statements above" or "other people")

Write only the statement:
\end{promptbox}

\paragraph{Without persona, with deliberation round}
\begin{promptbox}{System Prompt}
You are a helpful assistant that generates statements. Return only the statement text, no JSON or additional commentary. For each query, please generate a set of five possible responses, each within a separate <response> tag. Each <response> must include a <text> and a numeric <probability>.
Please sample at random from the tails of the distribution, such that the probability of each response is less than 0.10.
\end{promptbox}

\begin{promptbox}{User Prompt}
Topic: "{topic}"

Here are 100 statements from people with diverse perspectives on this topic:

{statements_list}

Write a NEW bridging statement on this topic. Your statement should:
- Synthesize key themes across the different viewpoints
- Aim to find common ground or bridge different viewpoints
- Be 2-4 sentences long
- Be self-contained (do not reference "the statements above" or "other people")
\end{promptbox}

\paragraph{Without persona, without deliberation round}
\leavevmode 
\begin{promptbox}{System Prompt}
You are a helpful assistant that generates statements. Return only the statement text, no JSON or additional commentary. For each query, please generate a set of five possible responses, each within a separate <response> tag. Each <response> must include a <text> and a numeric <probability>.
Please sample at random from the tails of the distribution, such that the probability of each response is less than 0.10.
\end{promptbox}

\begin{promptbox}{User Prompt}
Topic: "{topic}"

Write a bridging statement on this topic. Your statement should:
- Aim to find common ground or bridge different viewpoints
- Be 2-4 sentences long
\end{promptbox}

\subsubsection{Building Preference Profiles}
Using iterative ranking (\Cref{ssc:mitigate-preference-degeneracy}), the first few rounds ask the model to pick the top and bottom \(K\) statements, and the final round asks the model to rank the remaining statements.

\paragraph{Initial Rounds}
\begin{promptbox}{User Prompt}
Topic: "{topic}"

Here are {n} statements (identified by 4-letter codes):
{stmt_lines}

From these {n} statements, identify:
1. Your TOP {k} most preferred (in order, most preferred first)
2. Your BOTTOM {k} least preferred (in order, least preferred last)

IMPORTANT: Do NOT simply list codes in the order they appear above.
Your preferences should reflect your persona's values and background.

Return JSON: {"top_{k}": ["code1", "code2", ...], "bottom_{k}": ["code1", "code2", ...]}
\end{promptbox}

Where \verb|stmt_lines| is formatted as:
\begin{promptbox}{User Prompt}
XXXX: "Statement text here..."
YYYY: "Another statement..."
\end{promptbox}

\paragraph{Final Round}
\begin{promptbox}{User Prompt}
Topic: "{topic}"

Here are {n} statements (identified by 4-letter codes):
{stmt_lines}

Rank ALL of these statements from most to least preferred.

IMPORTANT: Do NOT simply list codes in the order they appear above.
Your preferences should reflect your persona's values and background.

Return JSON: {"ranking": ["most_preferred", "second", ..., "least_preferred"]}
\end{promptbox}

\subsubsection{Winner Selection}

\paragraph{GPT-Select: Select from P Alternatives}
\begin{promptbox}{System Prompt}
You are a helpful assistant that selects consensus statements. Return ONLY valid JSON.
\end{promptbox}

\begin{promptbox}{User Prompt}
Here are {n} statements from a discussion:

{statements_text}

Which statement would be the best choice as a consensus/bridging statement?
Consider which one best represents a reasonable middle ground that could satisfy diverse perspectives.

Return your choice as JSON: {"selected_statement_index": <index>}
Where the value is the index (0-{n-1}) of the statement you select.
\end{promptbox}

Where \verb|statements_text| is formatted as:
\begin{promptbox}{Format}
Statement 0: Statement text here...

Statement 1: Another statement...
\end{promptbox}

\paragraph{GPT-Select+Rank: Select with Rankings}
\begin{promptbox}{System Prompt}
You are a helpful assistant that selects consensus statements. Return ONLY valid JSON.
\end{promptbox}

\begin{promptbox}{User Prompt}
Here are {n} statements from a discussion:

{statements_text}

Here are preference rankings from {n_voters} voters:

{rankings_text}

Based on both the statements and the preference rankings, which statement would be the best choice as a consensus/bridging statement?

Return your choice as JSON: {"selected_statement_index": <index>}
Where the value is the index (0-{n-1}) of the statement you select.
\end{promptbox}

Where \verb|rankings_text| is formatted as ($K=20$ sampled voters, full rankings over $P=20$ alternatives):
\begin{promptbox}{Format}
Voter 1: 5 > 3 > 1 > 8 > 2 > 0 > 7 > 6 > 4 > 9 > 10 > 11 > 12 > 13 > 14 > 15 > 16 > 17 > 18 > 19
Voter 2: 3 > 5 > 8 > 1 > 2 > 0 > 7 > 6 > 4 > 9 > 10 > 11 > 12 > 13 > 14 > 15 > 16 > 17 > 18 > 19
...
Voter 20: 8 > 3 > 5 > 1 > 2 > 0 > 7 > 6 > 4 > 9 > 10 > 11 > 12 > 13 > 14 > 15 > 16 > 17 > 18 > 19
\end{promptbox}

\paragraph{GPT-Select+Pers: Select with Personas}
\begin{promptbox}{System Prompt}
You are a helpful assistant that selects consensus statements. Return ONLY valid JSON.
\end{promptbox}

\begin{promptbox}{User Prompt}
Here are {n} statements from a discussion:

{statements_text}

Here are the {n_voters} voters who will be voting on these statements:

{personas_text}

Based on both the statements and the voter personas, which statement would be the best choice as a consensus/bridging statement?

Return your choice as JSON: {"selected_statement_index": <index>}
Where the value is the index (0-{n-1}) of the statement you select.
\end{promptbox}

Where \verb|personas_text| is formatted as ($K=20$ sampled voters, filtered to 7 key fields to save token costs):
\begin{promptbox}{Format}
Voter 1: age: 53
sex: Male
race: White alone
education: Master's degree
occupation: MGR-Education And Childcare Administrators
political views: Liberal
religion: Protestant

Voter 2: age: 54
sex: Male
race: White alone
education: Master's degree
occupation: ENT-News Analysts, Reporters, And Journalists
political views: Democrat
religion: Catholic

...

Voter 20: ...
\end{promptbox}

\paragraph{GPT-Full: Select from All 100}
\begin{promptbox}{System Prompt}
You are a helpful assistant that selects consensus statements. Return ONLY valid JSON.
\end{promptbox}

\begin{promptbox}{User Prompt}
Topic: {topic}

A group of participants submitted the following {n_all} statements on this topic:

{all_text}

Select the statement that would best serve as a consensus or bridging position - one that:
- Engages substantively with the topic
- Could be acceptable to participants with diverse viewpoints
- Avoids extreme or polarizing framing

Return your choice as JSON: {"selected_statement_index": <index>}
Where the value is the index (0-{n_all-1}) of the statement you select.
\end{promptbox}

The GPT-Full+Rank and GPT-Full+Pers variants add \verb|rankings_text| and \verb|personas_text| respectively, using the same format as GPT-Select.

\paragraph{GPT-Synthesize: Generate New Statement}
\begin{promptbox}{System Prompt}
You are a helpful assistant that generates consensus statements. Return ONLY valid JSON.
\end{promptbox}

\begin{promptbox}{User Prompt}
Topic: {topic}

Here are some existing statements from a discussion:

{statements_text}

Generate a NEW statement that could serve as a better consensus/bridging statement.
The statement should:
- Represent a reasonable middle ground that could satisfy diverse perspectives
- Be different from the existing statements but address the same topic
- Be clear and substantive (2-4 sentences)

Return your new statement as JSON: {"new_statement": "<your statement>"}
\end{promptbox}

The GPT-Synthesize+Rank and GPT-Synthesize+Pers variants add \verb|rankings_text| and \verb|personas_text| respectively as in GPT-Select.

\paragraph{GPT-Blind: Blind Generation}
\begin{promptbox}{System Prompt}
You are a helpful assistant that generates bridging statements. Return ONLY valid JSON.
\end{promptbox}

\begin{promptbox}{User Prompt}
Given the topic: "{topic}"

Generate a bridging statement that could serve as a consensus position on this topic.
The statement should:
- Represent a reasonable middle ground that could satisfy diverse perspectives
- Acknowledge different viewpoints while finding common ground
- Be clear and substantive (2-4 sentences)

Return your statement as JSON: {"bridging_statement": "<your statement>"}
\end{promptbox}

\end{document}